\documentclass [11pt,leqno]{amsart}
\usepackage[english]{babel}
\usepackage{amssymb}
\usepackage{cancel,graphicx}
\usepackage{hyperref}
\usepackage{slashed}

\title[Metrics and Causality on Moyal planes]{Metrics and causality on Moyal planes}

\author[N. Franco and J.-C. Wallet]{Nicolas Franco and Jean-Christophe Wallet}

\address[]{Nicolas Franco, Namur Center for Complex Systems (naXys) \& Department of Mathematics, University of Namur, Rue de Bruxelles 61, B-5000 Namur, Belgium}
\email{nicolas.franco@math.unamur.be}

\address[]{Jean-Christophe Wallet, Laboratoire de Physique Th\'eorique, B\^at.\ 210\\
CNRS and Universit\'e Paris-Sud 11, F-91405 Orsay Cedex, France}
\email{jean-christophe.wallet@th.u-psud.fr}

\dedicatory{Dedicated to the memory of Raymond Stora.}
\date{July 2015}
\subjclass[2010]{Primary: 58B34; Secondary: 54E35; 53C50; 54F05.}
\keywords{Noncommutative geometry, spectral distance, causal structures, Moyal spaces, quantum locally compact spaces}
\thanks{Work partially supported by the CNRS PEPS/PTI " Noncommutative metric geometry: From Monge to Higgs"}

\numberwithin{equation}{section}
 
\linespread{1.05}
\allowdisplaybreaks[4]
\numberwithin{equation}{section}
\newtheorem{proposition}{Proposition}[section]
\newtheorem{theorem}{Theorem}[section]

\newtheorem{definition}{Definition}[section]

\theoremstyle{definition}
\newtheorem{remark}{Remark}[section]

\newcommand\del{{\partial}}
\newcommand\delbar{{{\bar{\partial}}}}
\newcommand\bbone{{ \mathbb{I}}}

\numberwithin{equation}{section}

\begin{document}

\begin{abstract}
Metrics structures stemming from the Connes distance promote Moyal planes to the status of quantum metric spaces. We discuss this aspect in the light of recent developments, emphasizing the role of Moyal planes as representative examples of a recently introduced notion of quantum (noncommutative) locally compact space. We move then to the framework of Lorentzian noncommutative geometry and we examine the possibility of defining a notion of causality on Moyal plane, which is somewhat controversial in the area of mathematical physics. We show the actual existence of causal relations between the elements of a particular class of pure (coherent) states on Moyal plane with related causal structure similar to the one of the usual Minkowski space, up to the notion of locality.
\end{abstract}

\maketitle

\vspace*{-15pt}

\tableofcontents
\setlength{\parskip}{4pt}

\section{Introduction.}\label{introduction}

\subsection{Noncommutative metric geometry.}

The notion of noncommutative metric geometry can be viewed in some sense as a construction of noncommutative analogs of algebras of Lipschitz functions on metric spaces. Actually, the concept of noncommutative metric space can be traced back to \cite{CONNES1}, \cite{CONNES} where it was realized that spectral triples encode a metric structure on the ``noncommutative space'' described by the C*-algebra of the triple. This latter involves a natural seminorm on the algebra from which one can define a metric on the state space of the algebra. This metric is often referred to as the Connes distance or spectral distance, a terminology that we will use in this paper. It can be viewed as a natural noncommutative analog of the geodesic distance. This can be realized by considering for instance the case of a finite dimensional compact Riemann (spin) manifold exactly described by its by now standard spectral triple. Then, the related Connes distance between any 2 points (pure states) coincides with the geodesic distance between these points. It turns out that the above extension of a metric on a space to its set of probability measures already appeared many years ago in the context of optimal transport for compact metric spaces \cite{kantrub} and generalized to non compact (complete) metric spaces in \cite{wasser1}, \cite{dobrush}, related to the Wasserstein distance. Going back to the commutative example given above, it can be shown that the Connes distance between non pure states is the Wasserstein distance of order 1 between the corresponding probability distributions. We will not explore the aspects related to the Wasserstein distance in this paper.\par

Developments that followed the observation of \cite{CONNES1} focused on the notion of compact noncommutative metric spaces introduced in \cite{Rieffel11}, \cite{Rieffel1} (see also \cite{Rieffel2a}, \cite{Rieffel3}). They gave rise to interesting approximations by matrix algebras of commutative as well as noncommutative spaces pertaining to the physics literature \cite{Rieffel2b}, \cite{latrem-tore}. In order to make connection with a terminology sometimes used in the physics as well as (recent) mathematics litterature, we will call a noncommutative metric space as quantum metric space. The basic observation underlying these developments is that one can endow the state space of a C*-algebra $\mathbb{A}$ with a metric through a suitable choice of a densely defined seminorm $\ell$ on $\mathbb{A}$ which can be viewed as a mere generalization of the usual Lipschitz seminorm. Consider for instance a commutative compact metric space $X$ with metric $\rho$. It is known that $\rho$ is determined from the Lipschitz seminorm $\ell_\rho$ on the commutative unital algebra $\mathbb{A}=C(X)$
\begin{equation}
\ell_\rho(f):=\sup_{ x,y\in X, x\ne y}{{\vert f(x)-f(y) \vert}\over{\rho(x,y)}},
\end{equation}
for any $f\in C(X)$, by the relation
\begin{equation}
\rho(x,y)=\sup\{\vert f(x)-f(y)\vert;\  f\in C(X),\ \ell_\rho(f)\le1\}.\label{metricrho}
\end{equation}
The relation \eqref{metricrho} can be extended to the space of probability measures on $X$, ${\mathfrak{S}}(C(X))$, leading to the Kantorovitch distance. For any $\mu_1,\mu_2\in{\mathfrak{S}}(C(X))$, it is given by
\begin{equation}
\rho(\mu_1,\mu_2)=\sup\{\vert \mu_1(f)-\mu_2(f)\vert;\ f\in C(X),\ \ell_\rho(f)\le1\}
\end{equation}
and is known to metrize the w*-topology on ${\mathfrak{S}}(C(X))$ \cite{kantrub}. This observation supports a natural extension to the case of a noncommutative unital algebra giving rise to the theory of compact quantum metric spaces as first proposed by  Rieffel in \cite{Rieffel11}. This later quantum space can be defined from a pair $(\mathfrak{A},\ L)$ where $\mathfrak{A}$ is an order-unit space and $L$ is a densely defined semi-norm on $\mathfrak{A}$ such that the distance on $\mathfrak{S}(\mathfrak{A})$, the space of states of $\mathfrak{A}$, defined for any $\omega_1,\omega_2\in\mathfrak{S}(\mathfrak{A})$ by
\begin{equation}
d_L(\omega_1,\omega_2)=\sup\{|\omega_1(a)-\omega_2(a) |,\ a\in\mathfrak{A},\  L(a)\le1 \}\label{absdistance}
\end{equation}
yields a finite diameter for $\mathfrak{S}(\mathfrak{A})$ and metrizes the w*-topology on $\mathfrak{S}(\mathfrak{A})$. \par 

In the following, we will use the fact that any order unit space can be realized as a (real) linear subspace of self-adjoint operators on some Hilbert space including a unit to pass from the above framework to the one of the spectral triples. We now introduce definitions that will be used in the sequel. Let $\mathcal{B}(H)$ and $\mathcal{K}(H)$ be respectively the C*-algebra of bounded operators on a separable Hilbert space ${{H}}$ and its ideal (C*-subalgebra) of compact operators on $H$. The domain and spectrum of any operator $T$ will be denoted respectively by $\text{Dom}(T)$ and $\text{Spec}(T)$. It will be convenient to define a spectral triple as follows:
\begin{definition}\label{spectraltriple} 
A spectral triple is defined by the data $\mathfrak{{X}}_D=(\mathbb{A},\ \pi,\ {{H}},\ D)$ in which: i) $\mathbb{A}$ is an involutive algebra with $\pi$ a faithful $*$-representation of $\mathbb{A}$ on $\mathcal{B}(H)$, ii) $D$ is a self-adjoint (possibly unbounded) operator with dense domain satisfying $\pi(a)\text{Dom}(D) \subset\text{Dom}(D)$, $\forall a\in\mathbb{A}$, iii) $\forall a\in\mathbb{A}$, $[D,\pi(a)]\in\mathcal{B}(H)$, iv) $\forall a\in\mathbb{A}$, $\forall\lambda\notin \text{Spec}(D)$, $\pi(a)(D-\lambda)^{-1}\in\mathcal{K}(H)$.
\end{definition}
\begin{remark}\label{precstar-states}
Note that our definition of a spectral triple specifies explicitely the representation for further convenience. The triple is called unital (resp. non unital) when $\mathbb{A}$ is (resp. is not) unital. A state $\omega$ on a C*-algebra $\mathbb{A}$ is defined as usual as a positive linear map $\omega:\mathbb{A}\to\mathbb{C}$ with $||\omega||=1$. We denote by $\mathfrak{S}(\mathbb{A})$ the space of states. In the following, we will sometimes consider cases where $\mathbb{A}$ is a pre-C* algebra, for which the notion of state is still meaningfull. Indeed, let $\bar{\mathbb{A}}$ be the C*-completion of $\mathbb{A}$. Then, the restriction to $\mathbb{A}$ of any state $\omega$ on $\bar{\mathbb{A}}$ defines a unique positive linear map $\omega_{\vert\mathbb{A}}:\mathbb{A}\to\mathbb{C}$ with $||\omega||=1$ while any positive linear map $\omega:\mathbb{A}\to\mathbb{C}$ with norm 1 extends to a state on $\bar{\mathbb{A}}$ by continuity.
\end{remark}
\begin{definition}\cite{CONNES}\label{spectraldistance} The spectral distance between any two states is given by:
\begin{equation}
d_D(\omega_1,\omega_2)=\sup_{a\in\mathbb{A}}\big\{ \vert \omega_1(a)-\omega_2(a)\vert;\ \ell_D(a)\le1\big \},\ \forall\omega_1,\omega_2\in{\mathfrak{S}}(\mathbb{A}) \label{specdist},
\end{equation}
in which the seminorm on $\mathbb{A}$, $\ell_D:\mathbb{A}\to{\mathbb{R}}^+$, associated with the Dirac operator $D$  in Definition \ref{spectraltriple} is defined by
\begin{equation}
\ell_D:a\mapsto \ell_D(a):=||[D,\pi(a)]||,\  \forall a\in\mathbb{A}\label{lipnorm}.
\end{equation}
\end{definition}
One can check that \eqref{specdist} satisfies all the axioms for a distance except that it can reach $+\infty$ if no further conditions are imposed, i.e \eqref{specdist} defines a pseudo-distance. Note that the supremum may be searched only on the self-adjoint elements of $\mathbb{A}$. Indeed, if some element $a_0\in\mathbb{A}$ reaches the supremum so does $\frac{1}{2}(a_0+a_0^\dag)$, stemming simply from $\omega(a^\dag)=\omega(a)^\star$ for any $a\in\mathbb{A}$ and $\ell_D(\frac{1}{2}(a_0+a_0^\dag))\le1$.\par 

According to the above discussion, it is natural to interpret a {\it{unital}} spectral triple whose the spectral distance \eqref{specdist} metrizes the w*-topology on $\mathfrak{S}(\mathbb{A})$ as a compact quantum metric space with semi-norm $L(.)=\ell_D(.)$. The condition for the spectral distance \eqref{specdist} to define a metric on $\mathfrak{S}(\mathbb{A})$ is summarized in the following proposition \cite{rennie-var}
\begin{proposition}[\cite{rennie-var}]\label{prop-unital-metric}
For a unital spectral triple $(\mathbb{A},\ \pi,\ \mathcal{H},\ D)$, the spectral distance related to $D$ is a metric on $\mathfrak{S}(\mathbb{A})$ if $\pi$ is non-degenerate, i.e $\mathbb{A}\mathcal{H}=\mathcal{H}$ and the metric commutant $\mathbb{A}^\prime_D=\{a\in\mathbb{A};\ [D,\pi(a)]=0 \}$ is trivial.
\end{proposition}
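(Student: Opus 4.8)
The plan is to check the metric axioms one at a time and to isolate the single point where the two hypotheses are actually used. Symmetry and non-negativity of $d_D$ are immediate, and the triangle inequality follows from the triangle inequality for the modulus applied inside the supremum in \eqref{specdist}; none of these uses the hypotheses. The separation property is in fact also automatic, and I would record it first. Suppose $d_D(\omega_1,\omega_2)=0$. For any $a\in\mathbb{A}$ with $0<\ell_D(a)<\infty$ the element $a/\ell_D(a)$ lies in the $\ell_D$-unit ball, so $|\omega_1(a)-\omega_2(a)|\le\ell_D(a)\,d_D(\omega_1,\omega_2)=0$; for $a$ with $\ell_D(a)=0$ all multiples $\lambda a$ lie in that ball, whence $|\lambda|\,|\omega_1(a)-\omega_2(a)|\le d_D(\omega_1,\omega_2)=0$ for every $\lambda$. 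Since $\ell_D(a)<\infty$ for all $a$ by condition (iii) of Definition~\ref{spectraltriple}, this gives $\omega_1=\omega_2$. Thus $d_D$ already separates states without any extra assumption, consistent with the observation after Definition~\ref{spectraldistance} that the only way the hypotheses can fail is through $d_D=+\infty$.

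The real content is therefore finiteness: $d_D(\omega_1,\omega_2)<\infty$ for all $\omega_1,\omega_2\in\mathfrak{S}(\mathbb{A})$. I would reduce this to a boundedness statement modulo scalars. Because any two states agree on the unit, for every real $\lambda$ and every self-adjoint $a$ one has $|\omega_1(a)-\omega_2(a)|=|\omega_1(a-\lambda\mathbb{I})-\omega_2(a-\lambda\mathbb{I})|\le 2\|a-\lambda\mathbb{I}\|$, and minimising over $\lambda$ yields $d_D(\omega_1,\omega_2)\le 2\sup\{\|a\|_\sim:\ a=a^\dag,\ \ell_D(a)\le1\}$, where $\|a\|_\sim:=\inf_\lambda\|a-\lambda\mathbb{I}\|$ is the norm in $\mathbb{A}/\mathbb{C}\mathbb{I}$ (here I use the reduction to self-adjoint elements noted after Definition~\ref{spectraldistance}). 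It thus suffices to prove that the image of the self-adjoint part of the $\ell_D$-unit ball is norm-bounded in $\mathbb{A}/\mathbb{C}\mathbb{I}$. Non-degeneracy enters at this stage: for unital $\mathbb{A}$ the condition $\mathbb{A}H=H$ forces $\pi(\mathbb{I})=\mathrm{id}_H$, so applying condition (iv) of Definition~\ref{spectraltriple} to $\mathbb{I}$ shows that $(D-\lambda)^{-1}$ is compact, i.e. $D$ has compact resolvent and discrete spectrum.

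The boundedness I would prove by contradiction, and this is the step I expect to be the main obstacle. If the supremum were infinite, there would be self-adjoint $a_n$ with $\ell_D(a_n)\le1$ and $\|a_n\|_\sim\to\infty$; subtracting near-optimal real scalars and dividing by $\|a_n\|_\sim$ produces self-adjoint $b_n$ with $\|b_n\|$ bounded, $\|b_n\|_\sim=1$, and $\ell_D(b_n)=\|[D,\pi(b_n)]\|\to0$. The operators $\pi(b_n)$ then form a uniformly bounded family with commutators $[D,\pi(b_n)]\to0$, and compactness of $(D-i)^{-1}$ should make such a family precompact (this is the analogue, for $\mathrm{ad}\,D$, of the total boundedness supplied by Arzel\`a--Ascoli), so along a subsequence $\pi(b_n)\to A$ with $[D,A]=0$. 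The delicate part is to show that $A$ is the image of an element of the metric commutant and that the quotient norm passes to the limit, giving $\|A\|_\sim=1\neq0$; triviality of $\mathbb{A}^\prime_D$ would then force $A\in\mathbb{C}\,\mathrm{id}_H$, hence $\|A\|_\sim=0$, a contradiction. Controlling the topology in which $\pi(b_n)$ converges, and making sure the limit does not escape $\pi(\mathbb{A})$ (or its completion) as a non-scalar operator annihilated by $\mathrm{ad}\,D$, is the technical heart of the argument; it is precisely here that the compact resolvent (coming from non-degeneracy) and the triviality of $\mathbb{A}^\prime_D$ are used together.
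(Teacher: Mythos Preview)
The paper does not prove this proposition at all; it merely records it with a citation to \cite{rennie-var}. So there is no ``paper's own proof'' to compare against, and your attempt must be judged on its own merits.

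Your treatment of symmetry, the triangle inequality, and separation is fine, and you are right that separation holds without any hypothesis (once every $\ell_D(a)$ is finite, as guaranteed by axiom~(iii)). The substantive claim you are left with is finiteness of $d_D$, and here your argument has a genuine gap that you yourself flag. From a sequence $(b_n)$ with $\|\pi(b_n)\|$ bounded and $\|[D,\pi(b_n)]\|\to 0$ you want to extract a convergent subsequence whose limit commutes with $D$ and has $\|\cdot\|_\sim$ equal to $1$. Compact resolvent of $D$ does not give you this: bounded sets in $\mathcal{B}(\mathcal{H})$ are not norm-precompact, and in the weak or strong operator topology neither the quotient norm nor membership in $\pi(\mathbb{A})$ passes to the limit. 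There is no Arzel\`a--Ascoli mechanism here; compactness of $(D-i)^{-1}$ controls the \emph{resolvent}, not families of operators with small commutator with $D$.

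More tellingly, the paper itself, immediately after stating the proposition, singles out norm pre-compactness of the Lipschitz ball $\{a:\ell_D(a)\le 1\}$ in $\mathbb{A}/\mathbb{A}'_D$ as an \emph{additional} condition, independent of the hypotheses of Proposition~\ref{prop-unital-metric}, needed to obtain a compact quantum metric space. Your contradiction argument is essentially trying to manufacture this pre-compactness (or at least boundedness) from compact resolvent and triviality of $\mathbb{A}'_D$ alone, and that is not available in general. So the difficulty you identify as ``the technical heart'' is not merely a detail to be filled in: with only the stated hypotheses the step does not go through, and the proposition should be read as coming with the extra structure present in \cite{rennie-var} rather than as a statement you can prove from Definition~\ref{spectraltriple} alone.
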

Note that the second condition simply means that the metric commutant is $\mathbb{A}=\mathbb{R}\bbone$ which is nothing but the necessary condition given in \cite{Rieffel2a} for a semi-norm on $\mathbb{A}$ to define a quantum metric space. Provided Proposition \ref{prop-unital-metric} holds true, $d_D$ metrizes the w*-topology on ${\mathfrak{S}}(\mathbb{A})$ if and only if the ``Lipschitz ball'' $B({\mathfrak{X}}_D):=\{a\in\mathbb{A};\ \ell_D(a)\le1\big \}$ is norm pre-compact in  $\mathbb{A}/\mathbb{A}_D^\prime$. This is the condition for a unital spectral triple to give rise to a compact quantum metric space.\par 

The extension to the non unital case is not straightforward. A natural extension has been achieved only recently \cite{latrem2}, \cite{latrem3}. The point is that the w*-topology requirement mentioned above has to be reconsidered. In fact the space of states for a non unital (C*-)algebra is not closed under the w*-topology. For commutative locally compact metric space $X$, the Kantorovitch distance does not generally give rise to the w*-topology on $\mathfrak{S}(C_0(X))$. A first progress was made in \cite{latrem1} by exploiting the observation that any distance belonging to the family of bounded-Lipschitz distances metrizes the w*-topology on $\mathfrak{S}(C_0(X))$. Recall that the set of bounded-Lipschitz distances $(d^\alpha_\ell)_{\alpha\in\mathbb{R}^+}$ is defined by 
\begin{equation}
d^\alpha_\ell(\mu_1,\mu_2):=\sup_{f\in C_0(X)}(|\mu_1(f)-\mu_2(f)|;\ \ell(f)\le1,\ ||f||\le\alpha),
\end{equation}
for any $\mu_1,\mu_2\in{\mathfrak{S}}(C_0(X))$. A natural extension of this observation amounts to consider a non unital C*-algebra $\mathbb{A}$, assumed to be separable for technical convenience in \cite{latrem1}, together with $L$ a seminorm (defined for technical convenience on a dense subset{\footnote{with extension $L(a)=+\infty$ for any $a$ not in the domain.}} of $\mathbb{A}_{sa}$, the self-adjoint part of $\mathbb{A}$) and the following one-parameter family $(\mathfrak{B})_{r},\ \forall r\in\mathbb{R}^+$, $r\ne0$, such that 
\begin{equation}
\mathfrak{B}_r:=\{ a\in\mathbb{A}\ ;\ L(a)\le1,\ ||a||_\mathbb{A}\le r\}
\end{equation}
and to determine the conditions satisfied by $L$ so that $d_{\mathfrak{B}_r}$ defined by
\begin{equation}
d_{\mathfrak{B}_r}:=\sup\{|\omega_1(a)-\omega_2(a)|\ ;\ a\in\mathfrak{B}_r \} 
\end{equation}
metrizes the w*-topology on $\mathfrak{S}(\mathbb{A})$. An answer is synthesized by the main theorem of \cite{latrem1}
\begin{theorem}[\cite{latrem1}]\label{bounded-noncompact}
Let $\mathbb{A}$ be a non unital separable C*-algebra, $\mathfrak{B}$ a totally bounded subset of $\mathbb{A}_{sa}$ and for any two $\omega_1, \omega_2\in\mathfrak{S}(\mathbb{A})$ one defines
\begin{equation}
d_{\mathfrak{B}}(\omega_1,\omega_2):=\sup\{|\omega_1(a)-\omega_2(a)|\ ;\ a\in\mathfrak{B} \}.
\end{equation}
Then $d_{\mathfrak{B}}$ is a distance on $\mathfrak{S}(\mathbb{A})$ and the following properties are equivalent:\\
i) There exists a strictly positive $h\in\mathbb{A}$, $h\mathfrak{B}h$ is totally bounded for $||.||_\mathbb{A}$.\\
ii) $\mathfrak{B}$ is totally bounded in the weak uniform topology on $\mathbb{A}$.\\
iii) $d_{\mathfrak{B}}$ metrizes the restriction of the w*-topology $\sigma(\mathbb{A}^\star,\mathbb{A})$ to $\mathfrak{S}(\mathbb{A})$.
\end{theorem}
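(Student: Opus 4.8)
\emph{Proof proposal.} The plan is to separate the two genuinely different assertions: that $d_{\mathfrak{B}}$ is a distance, which is essentially formal, and the equivalence of (i)--(iii), which is the substance. For the latter I would establish the two links (i)$\Leftrightarrow$(ii), a purely functional-analytic reformulation exploiting the strictly positive element, and (ii)$\Leftrightarrow$(iii), the actual metrization statement, proved by an Arzel\`a--Ascoli type mechanism adapted to the non-compact state space. The recurring difficulty throughout is that $\mathfrak{S}(\mathbb{A})$ is not w*-closed when $\mathbb{A}$ is non unital, so mass can ``escape to infinity'' along w*-convergent nets; taming this is what forces the introduction of the strictly positive $h$.

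First I would check that $d_{\mathfrak{B}}$ is a distance. Since $\mathfrak{B}$ is totally bounded it is norm bounded, say $\|a\|_{\mathbb{A}}\le M$ for all $a\in\mathfrak{B}$, hence $d_{\mathfrak{B}}(\omega_1,\omega_2)\le 2M<\infty$; symmetry and the triangle inequality are immediate from the definition as a supremum of the quantities $|\omega_1(a)-\omega_2(a)|$. The only point requiring the standing hypotheses is separation: $d_{\mathfrak{B}}(\omega_1,\omega_2)=0$ must force $\omega_1=\omega_2$, which holds because $\mathfrak{B}$ separates the states of $\mathbb{A}$ (in the intended application $\mathfrak{B}=\mathfrak{B}_r$ contains small multiples of the domain of a densely defined Lip-norm, whose closed span is dense in $\mathbb{A}_{sa}$).

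For (ii)$\Rightarrow$(iii) the easy inclusion comes for free: if $d_{\mathfrak{B}}(\omega_n,\omega)\to 0$ then $\omega_n(a)\to\omega(a)$ for every $a\in\mathfrak{B}$, and since the closed span of $\mathfrak{B}$ is dense and the $\omega_n$ are uniformly bounded (norm one), $\omega_n\to\omega$ in the w*-topology; thus the $d_{\mathfrak{B}}$-topology is always finer than the w*-topology on $\mathfrak{S}(\mathbb{A})$. The content is the reverse implication, namely upgrading pointwise (w*) convergence on $\mathfrak{B}$ to convergence uniform over $\mathfrak{B}$. Here I would use total boundedness of $\mathfrak{B}$ in the weak uniform topology to cover $\mathfrak{B}$ by finitely many sets, each of small weak-uniform diameter; along a w*-convergent net the oscillation of $\omega\mapsto\omega(a)$ is then controlled uniformly in $a\in\mathfrak{B}$ by its values at the finitely many centres, yielding $d_{\mathfrak{B}}$-convergence. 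The implication (iii)$\Rightarrow$(ii) is the converse: if the two topologies coincide on $\mathfrak{S}(\mathbb{A})$, the equicontinuity encoded by $d_{\mathfrak{B}}$ forces $\mathfrak{B}$ to be totally bounded in the weak uniform topology.

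Finally (i)$\Leftrightarrow$(ii). Since $\mathbb{A}$ is separable it is $\sigma$-unital, so a strictly positive $h\in\mathbb{A}$ exists and the functional calculus of $h$ produces a sequential approximate unit for $\mathbb{A}$. The role of the conjugation $a\mapsto hah$ is to convert the abstract weak uniform uniformity on $\mathfrak{B}$ into an honest norm statement: I would show that, up to the action of such approximate units, the weak uniform neighbourhoods of $0$ correspond after compression by $h$ to norm neighbourhoods, so that $h\mathfrak{B}h$ totally bounded for $\|\cdot\|_{\mathbb{A}}$ becomes equivalent to $\mathfrak{B}$ totally bounded in the weak uniform topology. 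I expect this step to be the main obstacle: making precise that compression by a strictly positive element faithfully implements the passage between the two uniform structures, uniformly over $\mathfrak{B}$ and while uniformly controlling the escaping mass, is the technical heart of the theorem and is where the non-unitality is genuinely handled rather than circumvented.
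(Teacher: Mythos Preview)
The paper does not prove this theorem: it is quoted from \cite{latrem1} and stated without proof, serving only as background for the discussion of bounded-Lipschitz metrics in the introduction. There is therefore nothing in the paper to compare your proposal against.

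On its own merits your outline is reasonable and follows the expected architecture of Latr\'emoli\`ere's argument: the formal verification that $d_{\mathfrak{B}}$ is a metric, the Arzel\`a--Ascoli mechanism for (ii)$\Leftrightarrow$(iii), and the use of a strictly positive element to mediate between norm and weak-uniform total boundedness for (i)$\Leftrightarrow$(ii). One point you correctly flag deserves emphasis: the theorem as stated here does not explicitly assume that $\mathfrak{B}$ separates states, yet without that hypothesis $d_{\mathfrak{B}}$ need not be a genuine distance (only a pseudo-distance). In Latr\'emoli\`ere's original formulation this is handled by additional hypotheses on the Lip-norm; the version quoted in the present paper is a slight abbreviation, and your caveat about ``the intended application'' is the right way to read it.
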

When $\mathfrak{B}$ is not norm bounded, the combination of Theorem \ref{bounded-noncompact} with the replacement of $d_{\mathfrak{B}}$ by $ d_{\mathfrak{B}_r}$ defined above allows one to obtain a criterion for these latter metrics to metrize the w*-topology on $\mathfrak{S}(\mathbb{A})$ which was the way followed in \cite{latrem1}. Of course, when $\mathfrak{B}$ is norm bounded with $\mathfrak{B}=\{a\in\mathbb{A}_{sa}\ ; \ L(a)\le1 \}$, $d_{\mathfrak{B}}$ is simply the Kantorovich distance related to $L$ for which the diameter of $\mathfrak{S}(\mathbb{A})$ is finite and Theorem \ref{bounded-noncompact} provides the conditions to have w*-topology on $\mathfrak{S}(\mathbb{A})$. The translation of these results into the framework of spectral triples is obvious. Note that one can show that the space of states is complete for the spectral distance \cite{latrem1}.\par

This first attempt has been extended recently in \cite{latrem2} to give rise to a complete description of a notion of quantum locally compact (metric) space. It is based on the observation made long ago in \cite{dobrush} that the Kantorovitch metric for non compact (commutative) metric spaces is in fact well behaved when restricted to subsets of the space of states with suitable behavior at infinity. There is no canonical generalization of the notion of behavior at infinity for noncommutative C*-algebra. In view of a possible relationship to the Gromov-Hausdorff distance, the proposal made in \cite{latrem2} also faced the problem of a proper extention of the notion of locality. The solution proposed in \cite{latrem2} amounts roughly to use a prefered commutative set of elements which supports a notion of locality, completed with other elements that do not commute with that prefered set ``spread" all over. This is formalized in \cite{latrem2} by a pair $(\mathbb{A},\mathfrak{M})$ which is called a ``topographic quantum space" where $\mathbb{A}$ is a non unital C*-algebra and $\mathfrak{M}$ a commutative C*-subalgebra with approximate unit, this latter approximate unit being used to deal with the notion of behavior at infinity. From this, combined with a suitable seminorm on the algebra, follows the definition of a local compact quantum space of \cite{latrem2}.

\subsection{Summary of the paper.}

Section \ref{sec2} gives a survey on recent results obtained for Moyal plane. The relevant technical material is collected in Subsection \ref{subsection21}. Subsection \ref{subsection22} deals with the standard description of Moyal plane through a spectral triple that is currently used in the mathematical physics. Within this description, an explicit distance formula can be computed between a particular class of pure states, as first shown in \cite{Wal1} and \cite{Wal2}. This is summarized by Theorem \ref{th0}. Variations of the above spectral triple also appeared in mathematical physics. The corresponding families of spectral triples are considered in Subsection \ref{subsection23} from the viewpoint of the related spectral distances that can be shown \cite{Wallet:2011aa} to be simply related to the spectral distance for the above ``standard'' Moyal plane, as summarized by Theorem \ref{th1}. In Subsection \ref{subsection24}, metric properties of the above noncommutative spaces are presented, emphasizing the fact that Moyal plane and its variations considered here provide examples of quantum locally compact spaces as defined in \cite{latrem2}.

Section \ref{sec3} presents a new result concerning the possibility of causal relations between coherent states on Moyal plane. Subsection \ref{subsection31} gives a short review of the construction of spectral triples with a Lorentzian signature and the notion of causality on them as introduced in \cite{CQG2013}, a notion of causality which corresponds to the usual one when the algebra is commutative. The technical ingredients for the adaptation of the spectral triple based on Moyal plane to Lorentzian signature are collected in Subsection \ref{subsection32}. In Subsection \ref{subsection33}, coherent states on Moyal plane are defined and the main result (Theorem \ref{mainthmcausality}) giving the exact causal structure between coherent states is presented and proved.

\section{Triples on Moyal plane as quantum locally compact metric spaces.}\label{sec2}

\subsection{Basics.}\label{subsection21}
Let $\mathcal{S}$ be the space of $\mathbb{C}$-valued Schwartz functions on $\mathbb{R}^2$. We define
\begin{equation}
\Theta_{\mu\nu}:=\theta\begin{pmatrix} 0&1 \\ -1& 0 \end{pmatrix},
\end{equation}
where $\theta\in\mathbb{R}$, $\theta>0$.
\begin{proposition}\label{moyalproduct}\cite{Gracia-Bondia:1987kw, Varilly:1988jk}
The Moyal $\star$-product is defined by: $\star:\mathcal{S}\times\mathcal{S}\to\mathcal{S}$ 
\begin{equation}
\label{moyal}
(f\star g)(x):=\frac{1}{(\pi\theta)^2}\int d^2y\,d^2z\ f(x+y)g(x+z)e^{-2i\,y^\mu\,\Theta^{-1}_{\mu\nu}z^\nu},\ 
\end{equation}
for all $f, g$ in $\mathcal{S}$. The complex conjugation is an involution for $\star$. The Lebesgue integral is a faithful trace. One has $
\int d^2x\ (f\star g)(x)=\int d^2x\ (g\star f)(x)=\int d^2x\ f(x)g(x)$, $\partial_\mu(f\star g)=\partial_\mu f\star g+f\star\partial_\mu g$, $\forall f,g\in\mathcal{S}$.
\end{proposition}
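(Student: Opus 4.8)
\subsection*{Proof proposal.}

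The statement packages one structural fact---that $\star$ is well defined as a map $\mathcal{S}\times\mathcal{S}\to\mathcal{S}$---together with three algebraic identities (the involution, the tracial property with its faithfulness, and the Leibniz rule). My plan is to secure the structural fact first and then read off each identity by elementary manipulations of the defining integral, using throughout that, with $\Theta^{-1}_{\mu\nu}$ antisymmetric, the phase $y^\mu\Theta^{-1}_{\mu\nu}z^\nu$ is an antisymmetric bilinear form of maximal rank (since $\det\Theta=\theta^2\neq0$).

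For well-definedness I would first note that the integrand is dominated in modulus by $|f(x+y)|\,|g(x+z)|$, so the defining integral converges absolutely, with $|(f\star g)(x)|\le(\pi\theta)^{-2}\|f\|_{L^1}\|g\|_{L^1}$ uniformly in $x$ because $\mathcal{S}\subset L^1$. Smoothness of $f\star g$ follows by differentiating under the integral sign, the $x$-derivatives falling only on $f(x+y)g(x+z)$ (the phase being $x$-independent) and the resulting integrands again admitting integrable Schwartz dominations. The genuinely delicate point, and the one I expect to be the main obstacle, is rapid decay: showing that $x^\alpha\partial^\beta(f\star g)$ is bounded for every pair of multi-indices $\alpha,\beta$. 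Here I would trade polynomial factors for regularity: writing $x=\tfrac12((x+y)+(x+z))-\tfrac12(y+z)$, the powers of $(x+y)$ and $(x+z)$ are absorbed by the Schwartz decay of $f$ and $g$, while the dangerous powers of $y$ and $z$ are generated as derivatives of the phase (e.g.\ $y^\mu$ is proportional to a $z$-derivative of $e^{-2iy^\mu\Theta^{-1}_{\mu\nu}z^\nu}$) and then removed by integration by parts, each integration moving a derivative onto $g(x+z)$ or $f(x+y)$ and thus preserving integrability. Equivalently, and perhaps more cleanly, one passes to Fourier variables, where $\star$ becomes a twisted convolution whose twisting is a unimodular phase with polynomial exponent; since such twisting does not spoil the Schwartz seminorm estimates, stability of $\mathcal{S}$ under twisted convolution yields $f\star g\in\mathcal{S}$.

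For the tracial identity I would integrate the defining formula over $x$ and exchange the order of integration (legitimate by absolute convergence). After the substitution $w=x+y$ the inner $x$-integral becomes a function $\Phi(z-y)=\int d^2w\, f(w)g(w+z-y)$ of the difference alone; setting $s=z-y$ and using $y^\mu\Theta^{-1}_{\mu\nu}y^\nu=0$, the remaining $y$-integral is $\int d^2y\, e^{-2iy^\mu\Theta^{-1}_{\mu\nu}s^\nu}=\pi^2\theta^2\,\delta^{2}(s)$, a normalization that the prefactor $1/(\pi\theta)^2$ compensates exactly. This collapses the whole expression to $\Phi(0)=\int d^2x\, f(x)g(x)$. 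Since the pointwise product is commutative, the same computation with $f$ and $g$ exchanged gives $\int f\star g=\int g\star f=\int fg$, which is the tracial property; faithfulness is then immediate, since taking $g=f$ under the involution gives $\int \bar f\star f=\int|f|^2\ge0$, vanishing only for $f=0$.

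The two remaining identities are routine. For the involution I would conjugate the defining integral, flipping the phase to $e^{+2iy^\mu\Theta^{-1}_{\mu\nu}z^\nu}$, relabel $y\leftrightarrow z$, and use antisymmetry of $\Theta^{-1}$ to turn the phase back into $e^{-2iy^\mu\Theta^{-1}_{\mu\nu}z^\nu}$; the result is precisely $(\bar g\star\bar f)(x)$, so $\overline{f\star g}=\bar g\star\bar f$ and complex conjugation is an involution. The Leibniz rule follows from the differentiation under the integral already justified above: $\partial_\mu$ acts only on $f(x+y)g(x+z)$, producing $(\partial_\mu f)\star g+f\star(\partial_\mu g)$ term by term. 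Finally, the associativity implicit in calling $\star$ a product, though not listed among the displayed identities, follows from the same Fourier representation used for Schwartz stability.
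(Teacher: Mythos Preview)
The paper does not supply its own proof of this proposition: it is stated with citations to \cite{Gracia-Bondia:1987kw, Varilly:1988jk} and left unproved in the text. Your proposal is a correct, self-contained verification along the classical lines of those references. The key steps---absolute convergence and differentiation under the integral for smoothness, trading polynomial growth for phase-derivatives (equivalently, passing to the twisted-convolution picture on the Fourier side) for rapid decay, the $\delta$-collapse yielding $\int f\star g=\int fg$ with the correct normalisation $\pi^2\theta^2$, the involution via $y\leftrightarrow z$ and antisymmetry of $\Theta^{-1}$, and the Leibniz rule by $x$-differentiation of $f(x+y)g(x+z)$---are all sound. There is nothing to compare against in the present paper; your argument is the standard one and would be accepted as written, with the only caveat that the Schwartz-stability step is sketched rather than executed and would benefit from one explicit seminorm estimate if a fully detailed proof were required.
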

We set $\mathbb{A}:=(\mathcal{S},\star)$. Various properties that will be needed for the sequel are recalled below.
\begin{proposition}[\cite{GAYRAL2004}]\label{precstar}
$\mathbb{A}$ is a non unital Fr\'echet pre-C* algebra when equipped with usual complex conjugation and Fr\'echet-type seminorms defined by $\rho_{\alpha,\beta}(a)=\sup_{x\in\mathbb{R}^2}|x_1^{\alpha_1}x_2^{\alpha_2}\partial_1^{\beta_1}\partial_2^{\beta_2}a|$, $\forall a\in\mathbb{A}$, where $\alpha=(\alpha_1,\alpha_2)$ and $\beta=(\beta_1,\beta_2)$.
\end{proposition}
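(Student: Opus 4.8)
The plan is to verify four things: that $(\mathcal{S},\{\rho_{\alpha,\beta}\})$ is a Fréchet space on which $\star$ is jointly continuous; that complex conjugation is a continuous involution; that a faithful C*-norm exists whose completion $\bar{\mathbb{A}}$ is a genuine C*-algebra; and that $\mathbb{A}$ sits in $\bar{\mathbb{A}}$ as a spectrally invariant dense subalgebra without containing a unit. That $\mathcal{S}(\mathbb{R}^2)$ equipped with the $\rho_{\alpha,\beta}$ is a complete metrizable locally convex space is classical, the family being countable and separating. By Proposition \ref{moyalproduct} the product already maps into $\mathcal{S}$, so the real content is the submultiplicative estimate $\rho_{\alpha,\beta}(f\star g)\le C_{\alpha,\beta}\sum\rho_{\alpha',\beta'}(f)\rho_{\alpha'',\beta''}(g)$ with a finite sum of multi-indices controlled by $(\alpha,\beta)$. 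I would extract these from \eqref{moyal}: the Leibniz rule $\partial_\mu(f\star g)=\partial_\mu f\star g+f\star\partial_\mu g$ disposes of the $\partial^\beta$ factors, and the polynomial weights $x^\alpha$ are generated by writing $x=(x+y)-y$ (and similarly in $z$) and integrating by parts against the Gaussian phase to move the resulting $y,z$ powers onto derivatives of $f$ and $g$. Joint continuity of a bilinear map then follows, giving the Fréchet algebra structure.

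The involution is complex conjugation, for which $\overline{f\star g}=\bar g\star\bar f$ follows from \eqref{moyal} by conjugating and relabelling, and $\rho_{\alpha,\beta}(\bar a)=\rho_{\alpha,\beta}(a)$; hence conjugation is a continuous involutive antiautomorphism. For the C*-norm I would use the left regular representation $L\colon\mathbb{A}\to\mathcal{B}(L^2(\mathbb{R}^2))$, $L_fg:=f\star g$. The trace and associativity properties of Proposition \ref{moyalproduct} give $\int(a\star b)c=\int a(b\star c)$, whence $\langle L_fg,h\rangle=\langle g,L_{\bar f}h\rangle$, so $L$ is a faithful $*$-representation; boundedness of $L_f$ for $f\in\mathcal{S}$ yields the C*-norm $\|f\|:=\|L_f\|$, and the completion $\bar{\mathbb{A}}$ is a C*-algebra by construction.

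Both the boundedness and the pre-C* property are cleanest in the matrix basis $\{f_{mn}\}$, which satisfies $f_{mn}\star f_{kl}=\delta_{nk}f_{ml}$: expanding a Schwartz function in this basis turns $\star$ into ordinary matrix multiplication and identifies $\mathcal{S}$ with the rapidly decreasing matrices, under which $\|L_f\|$ is the operator norm on $\ell^2(\mathbb{N})$ and $\bar{\mathbb{A}}\cong\mathcal{K}$, the compact operators. Spectral invariance of the rapidly decreasing matrices in $\mathcal{K}$ — stability of $\mathbb{A}$ under the holomorphic functional calculus of $\bar{\mathbb{A}}$, which is exactly the pre-C* requirement — follows because the resolvent of such a matrix again decays rapidly, and the equivalence of the $\rho_{\alpha,\beta}$ with the matrix seminorms (the technical core of \cite{GAYRAL2004}) transports this back to $\mathcal{S}$. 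Non-unitality is immediate, since the only candidate $\star$-unit is the constant function $1\notin\mathcal{S}$.

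The hard part is concentrated in the analysis and is twofold: controlling the oscillatory integral \eqref{moyal} well enough to obtain the continuity bounds on $\star$ (equivalently, proving the equivalence of the Schwartz and matrix-basis seminorm systems), and establishing spectral invariance rather than mere density — the step that upgrades a dense $*$-subalgebra to a \emph{pre}-C* algebra. Both are precisely the technical points carried out in \cite{GAYRAL2004}.
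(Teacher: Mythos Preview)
The paper does not give its own proof of this proposition: it is simply quoted from \cite{GAYRAL2004} and stated without argument. Your outline is a faithful sketch of precisely the strategy carried out there --- Fr\'echet structure on $\mathcal{S}$, joint continuity of $\star$ via Leibniz and integration by parts against the oscillatory kernel, the $*$-representation $L$ on $L^2(\mathbb{R}^2)$ to produce the C*-norm, and spectral invariance established through the matrix basis and the equivalence of the Schwartz and matrix seminorm systems --- so there is nothing to compare: you have reconstructed the approach of the cited reference, and correctly identified the two genuinely nontrivial points (the seminorm equivalence and stability under holomorphic functional calculus) as the places where the real work lies.
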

We will need to use some explicit expressions for the so-called matrix basis, a Hilbertian basis for $\mathcal{S}$, defined by the family of Wigner transition eigenfunctions of the harmonic oscillator denoted by $\{f_{mn}\}_{m,n\in\mathbb{N}}\subset\mathcal{S}$. Introduce the set of local complex coordinates $z=\frac{1}{\sqrt{2}}(x_1+ix_2)$, $\bar{z}=\frac{1}{\sqrt{2}}(x_1-ix_2)$.
\begin{proposition}[\cite{Gracia-Bondia:1987kw, Varilly:1988jk}]\label{matrixbase-prop}
The elements of the matrix basis satisfy:
\begin{equation}
f_{mn}=\frac{1}{\sqrt{\theta^{m+n}m!n!}}\bar{z}^{\star m}\star f_{00}\star z^{\star n},\ f_{00}=2e^{-\frac{2}{\theta}H},\ H=\bar{z}z,
\end{equation}
\begin{equation}
f_{mn}\star f_{kl}=\delta_{nk}f_{ml},\ f_{mn}^\dag=f_{nm},\ \langle f_{mn},f_{kl} \rangle_{L^2}=2\pi\theta\delta_{mk}\delta_{nl}.\label{matrix-base-basic}
\end{equation}
There is a Fr\'echet algebra isomorphism between $\mathbb{A}$ and $\mathcal{M}_\theta$, the Fr\'echet *-algebra of doubly indexed sequences $\{\Phi_{p,k}\}_{p,k\in\mathbb{N}}$ equipped with the canonical matrix product and the family of seminorms
\begin{equation}
\rho^2_n(\Phi)=\sum_{p,k}\theta(p+\frac{1}{2})^n(k+\frac{1}{2})^n|\Phi_{pk}|^2<\infty,\ \forall n\in\mathbb{N}.\nonumber
\end{equation}
 The isomorphism is defined by $\Phi_{mn}\mapsto \sum_{m,n}\Phi_{mn}f_{mn}\in\mathcal{S}$ and inverse $\Phi\in\mathcal{S}\mapsto\frac{1}{2\pi\theta}\langle \Phi,f_{mn}\rangle_{L^2}$.
\end{proposition}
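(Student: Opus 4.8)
The plan is to reduce every stated identity to the ladder-operator algebra generated by left and right $\star$-multiplication by the complex coordinates $z,\bar z$, and only at the very end to confront the matching of Fr\'echet topologies, which is where the real work lies.

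First I would extract from the integral kernel \eqref{moyal} the elementary first-order formulas $z\star a=(z+\tfrac{\theta}{2}\partial_{\bar z})a$, $a\star z=(z-\tfrac{\theta}{2}\partial_{\bar z})a$, and their complex conjugates for $\bar z$; these are Gaussian moment integrations against the kernel. They yield at once the $\star$-commutator $[z,\bar z]_\star=\theta$ (with $z,\bar z$ commuting among themselves), so that $z$ and $\bar z$ realize a bosonic annihilation/creation pair. A short computation with $f_{00}=2e^{-2H/\theta}$ then gives the ground-state relations $z\star f_{00}=f_{00}\star\bar z=0$, the idempotency $f_{00}\star f_{00}=f_{00}$, and the Gaussian normalization $\int f_{00}\,d^2x=2\pi\theta$.

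Feeding these into the definition $f_{mn}=\tfrac{1}{\sqrt{\theta^{m+n}m!n!}}\,\bar z^{\star m}\star f_{00}\star z^{\star n}$, I would prove the product rule $f_{mn}\star f_{kl}=\delta_{nk}f_{ml}$ by normal-ordering the inner block $f_{00}\star z^{\star n}\star\bar z^{\star k}\star f_{00}$: pushing the $z$'s through the $\bar z$'s with the commutator $[z,\bar z]_\star=\theta$ and using $z\star f_{00}=0$ collapses this to $\delta_{nk}\,\theta^n n!\,f_{00}$, and the normalization constants are designed precisely to cancel the factorials. Hermiticity $f_{mn}^\dag=f_{nm}$ is immediate since complex conjugation reverses $\star$ and exchanges $z\leftrightarrow\bar z$. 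The orthogonality relation then drops out of the trace property of Proposition \ref{moyalproduct}: writing $\langle f_{mn},f_{kl}\rangle_{L^2}=\int\overline{f_{mn}}\,f_{kl}=\int f_{nm}\star f_{kl}=\delta_{mk}\int f_{nl}\,d^2x$, and evaluating $\int f_{nl}\,d^2x=2\pi\theta\,\delta_{nl}$ from the base case $\int f_{00}=2\pi\theta$ together with cyclicity of the trace and the annihilation relations, gives $\langle f_{mn},f_{kl}\rangle_{L^2}=2\pi\theta\,\delta_{mk}\delta_{nl}$.

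For the isomorphism I would first settle completeness, for which the cleanest route is to recognize $\{f_{mn}\}$ as (constant multiples of) the Wigner transforms of the rank-one operators $|m\rangle\langle n|$ on $L^2(\mathbb{R})$: since the Weyl--Wigner correspondence is unitary up to the factor $2\pi\theta$ from the Hilbert--Schmidt class onto $L^2(\mathbb{R}^2)$ and intertwines composition with $\star$, and $\{|m\rangle\langle n|\}$ is a complete orthonormal system of Hilbert--Schmidt operators, the orthogonality already proved forces $\{f_{mn}\}$ to be a Hilbertian basis. On the expansion $a=\sum_{mn}a_{mn}f_{mn}$ the product rule turns $\star$ into matrix multiplication, $(a\star b)_{ml}=\sum_n a_{mn}b_{nl}$, and $f_{mn}^\dag=f_{nm}$ turns the involution into the conjugate transpose, so the coordinate map is a $*$-algebra isomorphism onto sequences. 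The hard part will be proving it a \emph{topological} isomorphism, i.e.\ that the Schwartz seminorms $\rho_{\alpha,\beta}$ of Proposition \ref{precstar} are equivalent to the weighted $\ell^2$ seminorms $\rho_n^2(\Phi)=\sum_{p,k}\theta(p+\tfrac12)^n(k+\tfrac12)^n|\Phi_{pk}|^2$. The mechanism is that the number-operator weights $(p+\tfrac12)$ are exactly the eigenvalues controlling the ladder action, so that coordinate multiplication and differentiation---rewritten through $z,\bar z$ via the formulas of the first step---act on matrix coefficients as finite-band index shifts with polynomially growing weights; rapid decay of $a$ with all its derivatives is thereby equivalent to faster-than-polynomial decay of $a_{mn}$ in both indices, which is exactly finiteness of every $\rho_n$. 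Establishing these two-sided seminorm estimates---tracking how many weight factors each $z,\bar z$ operation gains or loses and summing the resulting series---is the genuinely technical content, the algebraic identities of the earlier steps being essentially bookkeeping once the ladder formulas are fixed.
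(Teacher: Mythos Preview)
The paper does not actually prove Proposition~\ref{matrixbase-prop}: it is stated as a citation from \cite{Gracia-Bondia:1987kw, Varilly:1988jk} with no accompanying argument, so there is no ``paper's own proof'' to compare against. Your sketch is therefore being held up against the original Gracia-Bond\'ia--V\'arilly construction rather than anything in this article.

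That said, your outline is sound and is essentially the argument of the cited references. The ladder identities $z\star a=(z+\tfrac{\theta}{2}\delbar)a$, $a\star z=(z-\tfrac{\theta}{2}\delbar)a$ and the ground-state relations $z\star f_{00}=f_{00}\star\bar z=0$, $f_{00}\star f_{00}=f_{00}$ are exactly what drive the proof in \cite{Gracia-Bondia:1987kw}; the normal-ordering computation for $f_{mn}\star f_{kl}=\delta_{nk}f_{ml}$ and the trace argument for the $L^2$ orthogonality are the standard route. Your appeal to the unitarity of the Weyl--Wigner map to obtain completeness of $\{f_{mn}\}$ in $L^2(\mathbb{R}^2)$ is also the approach taken there. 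For the topological isomorphism you correctly isolate the genuine technical content: the equivalence of the Schwartz seminorms of Proposition~\ref{precstar} with the weighted $\ell^2$ seminorms $\rho_n$, which in \cite{Varilly:1988jk} is established precisely via the finite-band action of $z,\bar z$ (left and right $\star$-multiplication) on matrix indices with $\sqrt{m},\sqrt{n}$ weights. Your description of the mechanism is accurate; carrying it out rigorously requires the two-sided estimates bounding each $\rho_{\alpha,\beta}$ by some $\rho_n$ and conversely, which is several pages of bookkeeping in the original but contains no hidden obstacle.
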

There is a natural approximate unit for $\mathbb{A}$ that we will explicitly need in a while.
\begin{proposition}\label{approx-unit}
Define for any $N\in\mathbb{N}$, $e_N(x):=\sum_{k\le N}f_{kk}(x)$. Then, $e_N(x)$ is an approximate unit for $\mathbb{A}$.
\end{proposition}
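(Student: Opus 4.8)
The plan is to work entirely in the matrix basis via the Fr\'echet algebra isomorphism $\mathbb{A}\cong\mathcal{M}_\theta$ of Proposition~\ref{matrixbase-prop}, under which convergence in $\mathbb{A}$ is equivalent to convergence in each of the seminorms $\rho_n$. Recall that to say $(e_N)_{N\in\mathbb{N}}$ is an approximate unit means precisely that $e_N\star a\to a$ and $a\star e_N\to a$ in the Fr\'echet topology, for every $a\in\mathbb{A}$.

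First I would compute the products $e_N\star f_{mn}$ and $f_{mn}\star e_N$ using the relation $f_{mn}\star f_{kl}=\delta_{nk}f_{ml}$ from \eqref{matrix-base-basic}. Since $e_N=\sum_{k\le N}f_{kk}$, one finds $e_N\star f_{mn}=f_{mn}$ when $m\le N$ and $e_N\star f_{mn}=0$ otherwise, while $f_{mn}\star e_N=f_{mn}$ when $n\le N$ and $0$ otherwise. Writing $a=\sum_{m,n}a_{mn}f_{mn}$, this yields $e_N\star a=\sum_{m\le N,\,n}a_{mn}f_{mn}$ and $a\star e_N=\sum_{m,\,n\le N}a_{mn}f_{mn}$; in other words, left (resp.\ right) multiplication by $e_N$ simply truncates the matrix of $a$ to its rows $m\le N$ (resp.\ columns $n\le N$).

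Next I would estimate the remainders seminorm by seminorm. The nonzero components of $a\star e_N-a$ are the $a_{mn}$ with $n>N$, and those of $e_N\star a-a$ are the $a_{mn}$ with $m>N$, so that in the notation of Proposition~\ref{matrixbase-prop},
\begin{equation}
\rho_n^2(a\star e_N-a)=\sum_{p}\sum_{k>N}\theta\big(p+\tfrac12\big)^n\big(k+\tfrac12\big)^n|a_{pk}|^2,\nonumber
\end{equation}
with the analogous sum over $p>N$ (all $k$) for $\rho_n^2(e_N\star a-a)$. Since $a\in\mathcal{M}_\theta$ forces $\rho_n^2(a)<\infty$ for every $n$, each of these quantities is the tail of a convergent series and hence tends to $0$ as $N\to\infty$. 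As this holds for all $n\in\mathbb{N}$, both convergences follow in the Fr\'echet topology.

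There is essentially no deep analytic obstacle here: the whole argument reduces to the elementary observation that $e_N$ acts as a truncation projection in the matrix basis, together with convergence of the defining series of $\mathcal{M}_\theta$. The only point deserving care is to confirm that the sequence genuinely behaves as an approximate unit; here one may additionally record that each $e_N$ is self-adjoint, $e_N^\dag=e_N$, and idempotent, $e_N\star e_N=e_N$, with $e_{N+1}-e_N=f_{N+1,N+1}\ge0$, so that $(e_N)$ is an increasing sequence of projections, as one expects for a well-behaved approximate unit.
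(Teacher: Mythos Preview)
Your argument is correct and follows the same route as the paper: compute $e_N\star a$ and $a\star e_N$ as row/column truncations in the matrix basis and then pass to the limit. The paper's proof simply asserts $\lim_{N\to\infty}(e_N\star a)=\lim_{N\to\infty}(a\star e_N)=a$ without further comment, whereas you make the Fr\'echet convergence explicit by exhibiting each $\rho_n^2$ of the remainder as the tail of a convergent series; this is a welcome clarification rather than a different method.
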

\begin{proof}
For any $a\in\mathbb{A}$, $a(x)=\sum_{m,n}a_{mn}f_{mn}(x)$, one computes the quantities $(e_N\star a)(x)=\sum_{m\le N}\sum_{n}a_{mn}f_{mn}(x)$ and $(a\star e_N)(x)=\sum_{m}\sum_{n\le N}a_{mn}f_{mn}(x) $. Then, one has  $\lim_{N\to\infty}(e_N\star a)=\lim_{N\to\infty}(a\star e_N)=a $.
\end{proof}
\begin{remark}
The $\star$-product can be extended to larger subspaces of $\mathcal{S}^\prime$, the space of tempered distributions. One introduces the family of spaces $\{\mathcal{G}_{s,t}\}_{s,t\in\mathbb{R}}$
\begin{equation}
\mathcal{G}_{s,t}:=\{ a=\sum_{m,n\in{\mathbb{N}}}a_{mn}f_{mn}\in\mathcal{S}^\prime;\ ||a ||_{s,t}^2<\infty\}, \label{gst}
\end{equation}
with $||a ||_{s,t}^2=\sum_{m,n}\theta^{s+t}\big(m+\tfrac{1}{2}\big)^s\big(n+\tfrac{1}{2})^t|a_{mn}|^2$ by completing $\mathcal{S}$ for the norm $||.||_{s,t}$. For any $a=\sum_{m,n}a_{mn}f_{mn}\in\mathcal{G}_{s,t}$ and any $b\in\mathcal{G}_{q,r}$, $b=\sum_{m,n}b_{mn}f_{mn}$, with $t+q\ge0$, the sequences $c_{mn}=\sum_{p}a_{mp}b_{pn}$, $\forall\, m,n\in\mathbb{N}$ define the functions $c=\sum_{m,n}c_{mn}f_{mn}$, $c\in\mathcal{G}_{s,r}$, as $||a\star b||_{s,r}\le||a||_{s,t}||b||_{q,r}$, $t+q\ge0$ and $|| a||_{u,v}\le ||a ||_{s,t}$ if $u\le s$ and $v\le t$. For more details, see e.g.~\cite{Gracia-Bondia:1987kw, Varilly:1988jk}. In particular, $\mathcal{G}_{0,0}=L^2(\mathbb{R}^2)$ and the dense and continuous inclusion $\mathcal{S}\subset\mathcal{G}_{s,t}\subset\mathcal{S}^\prime$ holds true for any $s,t\in\mathbb{R}$.\par 
\end{remark}

A slightly more explicit characterization of the algebra $\mathbb{A}$, that will be useful below to determine completely the space of pure states of $\mathbb{A}$ (see the remark \ref{precstar-states} in the Introduction), can be done by exploiting the Fr\'echet isomorphism of Proposition \ref{matrixbase-prop}. Let $(e_n)_{n\in\mathbb{N}}$ be the canonical basis of $\ell^2(\mathbb{N})$, $\mathcal{B}(\ell^2(\mathbb{N}))$ the C*-algebra of bounded operators on $\ell^2(\mathbb{N})$, $\mathcal{K}(\ell^2(\mathbb{N}))$ the C*-subalgebra of compact operators. We denote by $\eta:\ell^2(\mathbb{N})\otimes\ell^2(\mathbb{N})\to\mathcal{B}(\ell^2(\mathbb{N})) $ the natural representation of the elements of $\mathcal{M}_\theta$ on $\ell^2(\mathbb{N})$, i.e simply defined by the product of a matrix by a column vector. For any $\Phi\in\mathcal{M}_\theta$, $\Phi=\sum_{m,n}\Phi_{mn}e_m\otimes e_n$, define $\eta(e_m\otimes e_n)=e_m\otimes e^*_n,\ \forall m,n\in\mathbb{N}$ with $e^*_n(e_p)=\delta_{np}$. \\
Then, from the definition of the seminorm $\rho_0(\Phi)$ (see in Proposition \ref{matrixbase-prop}), one infers that $\eta(\mathcal{M}_\theta)\subset {\mathcal{L}}^2(\ell^2(\mathbb{N}))$ where ${\mathcal{L}}^2(\ell^2(\mathbb{N}))$ is the set of Hilbert-Schmidt operators on $\ell^2(\mathbb{N})$. But ${\mathcal{L}}^2(\ell^2(\mathbb{N}))\subset \mathcal{K}(\ell^2(\mathbb{N}))$. Hence $\overline{\eta(\mathcal{M}_\theta)}\subset\mathcal{K}(\ell^2(\mathbb{N}))$ where $\overline{\eta(\mathcal{M}_\theta)}$ is the closure of $\eta(\mathcal{M}_\theta)$ in the operator norm.\\
Now recall that $\eta$ is faithful. Thus, one has the isomorphism $\eta(\mathcal{M}_\theta)\simeq\mathcal{M}_\theta$. Hence $\overline{\mathcal{M}_\theta}\subset\mathcal{K}(\ell^2(\mathbb{N}))$. On the other hand, $\overline{\mathcal{M}_\theta}\supset\mathcal{K}(\ell^2(\mathbb{N}))$ since one has $\mathcal{M}_\theta\supset\mathbb{M}_{\infty}(\mathbb{C})$ with  $\mathbb{M}_{\infty}(\mathbb{C}):=\bigcup_{n=1}^\infty\mathbb{M}_n(\mathbb{C})$. But the closure of $\mathbb{M}_{\infty}(\mathbb{C})$ is $\mathcal{K}(\ell^2(\mathbb{N}))$ in view of \cite{viperenoire} (II.8.2.2). Thus $\overline{\mathcal{M}_\theta}=\mathcal{K}(\ell^2(\mathbb{N}))$.\\ Finally, the map $U:L^2(
\mathbb{R}^2)\to\ell^2(\mathbb{N})\otimes\ell^2(\mathbb{N})$ defined by $U:f_{mn}\mapsto {\sqrt{2\pi\theta}}e_m\otimes e_n$ for any $m,n\in\mathbb{N}$ is an isometry. Therefore, one concludes that
\begin{equation}
\overline{\mathbb{A}}\simeq\mathcal{K}(\ell^2(\mathbb{N})).
\end{equation}

We are in position to determine the space of pure states of $\mathbb{A}$.
\begin{proposition} (\cite{Wal1}, \cite{Wal2})\label{purestatesmoyal}
The pure states of $\mathbb{A}$ are vector states defined by $\omega_\psi:\mathbb{A}\to\mathbb{C}$ where $\psi$ is a unit vector of $\ell^2(\mathbb{N})$, $\psi=\sqrt{2\pi\theta}\sum_m\psi_me_m$ with $2\pi\theta\sum_m|\psi_m|^2=1$ and one has
\begin{equation}
\omega_\psi(a)=2\pi\theta\sum_{m,n}\psi^*_ma_{mn}\psi_n, \forall a\in\mathbb{A}.\label{omegapsi}
\end{equation}
\end{proposition}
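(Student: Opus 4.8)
The plan is to reduce the statement to the classical classification of pure states on the algebra of compact operators, exploiting the identification $\overline{\mathbb{A}}\simeq\mathcal{K}(\ell^2(\mathbb{N}))$ established just above, and then to transport the abstract answer back into the explicit matrix-coefficient form \eqref{omegapsi}. First I would invoke Remark \ref{precstar-states}: since $\mathbb{A}$ is a pre-C* algebra with C*-completion $\overline{\mathbb{A}}$, restriction gives a bijection between the states of $\overline{\mathbb{A}}$ and the norm-one positive functionals on $\mathbb{A}$, with inverse given by continuous extension. Both maps are affine (a convex combination restricts, resp. extends, to the convex combination of the restrictions, resp. extensions, by uniqueness of the extension), so this is an affine bijection of state spaces with affine inverse, and such a map necessarily carries extreme points to extreme points. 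Hence the pure states of $\mathbb{A}$ are exactly the restrictions to $\mathbb{A}$ of the pure states of $\overline{\mathbb{A}}\simeq\mathcal{K}(\ell^2(\mathbb{N}))$.

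The key analytic input is the classification of pure states on $\mathcal{K}(H)$ for a separable Hilbert space $H$. Since $\mathcal{K}(H)^\ast\cong\mathcal{L}^1(H)$, the trace-class operators, every state on $\mathcal{K}(H)$ has the form $T\mapsto\operatorname{Tr}(\rho T)$ for a density matrix $\rho$ (positive, trace one). The extreme points of the convex set of density matrices are precisely the rank-one orthogonal projections $|\xi\rangle\langle\xi|$ with $\xi$ a unit vector, so the pure states of $\mathcal{K}(H)$ are exactly the vector states $T\mapsto\langle\xi,T\xi\rangle$. (Equivalently one may argue that $\mathcal{K}(H)$ admits, up to unitary equivalence, a single irreducible representation, namely the identity on $H$, and read off the pure states through the GNS construction.) This is classical and I would cite it rather than reprove it.

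Finally I would transport this conclusion through the chain of identifications. Under the Fr\'echet isomorphism $\mathbb{A}\simeq\mathcal{M}_\theta$ and the representation $\eta$, an element $a=\sum_{m,n}a_{mn}f_{mn}$ is sent to the operator $\eta(a)=\sum_{m,n}a_{mn}|e_m\rangle\langle e_n|$ on $\ell^2(\mathbb{N})$. Writing a unit vector of $\ell^2(\mathbb{N})$ in the normalization suggested by the isometry $U$, namely $\xi=\sqrt{2\pi\theta}\sum_m\psi_m e_m$ with $2\pi\theta\sum_m|\psi_m|^2=1$, the associated vector state becomes $\langle\xi,\eta(a)\xi\rangle=2\pi\theta\sum_{m,n}\psi^\ast_m a_{mn}\psi_n$, which is exactly \eqref{omegapsi}. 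The factor $\sqrt{2\pi\theta}$ in $U$ only fixes this normalization and does not alter the shape of the functional.

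The hard part here is not conceptual depth but bookkeeping: the nontrivial ingredient (pure states of $\mathcal{K}(H)$ are vector states) is available off the shelf, so the two delicate points are (i) verifying carefully that the state-space bijection of Remark \ref{precstar-states} really preserves purity, i.e.\ that it is an affine bijection with affine inverse, and (ii) keeping track of every normalization constant when passing through $U$, $\eta$, and the $2\pi\theta$-weighting of the matrix basis, so that the transported functional coincides with \eqref{omegapsi} on the nose rather than up to a scalar.
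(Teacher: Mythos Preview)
Your proposal is correct and follows essentially the same route as the paper: reduce to the classical fact that pure states of $\mathcal{K}(\ell^2(\mathbb{N}))$ are vector states (the paper cites Kadison--Ringrose, Corollary 10.4.4), identify pure states of $\mathbb{A}$ with those of its completion via Remark \ref{precstar-states}, and then compute. Your treatment is actually more careful than the paper's in spelling out why the state-space bijection preserves extreme points and in tracking the normalizations, but the strategy is identical.
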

\begin{proof}
All the pure states of $\mathcal{K}(\ell^2(\mathbb{N}))$ are vector states for the irreducible representation on $\ell^2(\mathbb{N})$ by Corollary 10.4.4 of \cite{Kadison}. Then, observe that the pure states of $\mathbb{A}$ are pure states of $\bar{\mathbb{A}}$ while Equation \eqref{omegapsi} follows from simple calculation. This completes the proof.
\end{proof}

\subsection{Standard Moyal spectral triple and distance formula.}\label{subsection22}

We now introduce the standard Moyal spectral triple together with some necessary technical ingredients. Set $\partial:={{1}\over{{\sqrt{2}}}}(\partial_1-i\partial_2)$, $\delbar:={{1}\over{{\sqrt{2}}}}(\partial_1+i\partial_2)$. The self-adjoint Dirac operator on  $\mathbb{R}^2$ is
\begin{equation}
D_0:=-i\sigma^\mu\del_\mu = -i{\sqrt{2}}\begin{pmatrix} 0&\delbar \\ \del& 0 \end{pmatrix},\ \sigma^1=\begin{pmatrix} 0&1 \\ 1& 0 \end{pmatrix},\ 
\sigma^2=\begin{pmatrix} 0&i \\ -i& 0 \end{pmatrix}\label{cliff2},
\end{equation}
with $\text{Dom}(D_0)=\mathcal{D}_{L^2}\otimes \mathbb{C}^2$. Here $\mathcal{D}_{L^2}$ is the set of smooth functions of $L^2(\mathbb{R}^2)$ with all their derivatives in $L^2(\mathbb{R}^2)$. One has $\mathbb{A}\subset\mathcal{D}_{L^2}$. The $\sigma^\mu$'s ($\mu=1,2$) span an irreducible representation of the complex Clifford algebra $\mathbb{C}l_{\mathbb{C}}(2)$, $\sigma^\mu\sigma^\nu+\sigma^\nu\sigma^\mu=2\delta^{\mu\nu}$ and we set $\sigma^3:=i\sigma^1\sigma^2$. In this section $\mathcal{H}_0= L^2(\mathbb{R}^2)\otimes \mathbb{C}^2$, i.e it is the Hilbert space of square integrable sections of the trivial spinor bundle $\mathbb{R}^2\times\mathbb{C}^2$. The corresponding inner product is
\begin{equation}
\langle \psi,\phi\rangle=\int d^2x(\psi_1^*\phi_1+\psi_2^*\phi_2)
\end{equation}
for any $\psi = \binom{ \psi_1 }{ \psi_2 }
,\;\phi = \binom{ \phi_1 }{ \phi_2 }\; \in\mathcal{H}_0$. We define $L(a)\in\mathcal{B}(L^2(\mathbb{R}^2))$ by 
\begin{equation}
L(a)\psi:=a\star\psi, \nonumber
\end{equation}
for any $a\in\mathbb{A}$ and any $\psi\in L^2(\mathbb{R}^2)$. One has $L(a)^\dag=L(a^\star)$. We denote by $\pi_0:\mathbb{A}\to\mathcal{B}(\mathcal{H}_0)$, the faithful left regular representation of $\mathbb{A}$ on $\mathcal{B}(\mathcal{H}_0)$:
\begin{equation}
\pi_0(a):=L(a)\otimes\bbone_2,\ \pi_0(a)\psi=\binom{a\star\psi_1}{a\star\psi_2},\label{regulrep}
\end{equation}
for any $a\in\mathbb{A}$ and any $\psi=\binom{\psi_1}{\psi_2}\in\mathcal{H}_0$. 
The following useful property for the Lipschitz seminorm on $\mathbb{A}$ for $D_0$ holds true.
\begin{proposition}\label{lzero}
We set $\ell_{D_0}(a):=||[D_0,\pi_0(a)] ||$, for any $a\in\mathbb{A}$. Then, one has
$\ell_{D_0}(a)={\sqrt{2}}\max(||{{L}}(\partial a)||,\ ||{{L}}(\delbar a)||), \forall a\in\mathbb{A}$.
\end{proposition}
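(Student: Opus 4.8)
The plan is to compute $[D_0,\pi_0(a)]$ explicitly in $2\times2$ block form, to reduce its two off-diagonal entries to left-multiplication operators by means of the Leibniz rule, and then to evaluate the operator norm of the resulting off-diagonal block operator.

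First I would write $\pi_0(a)=L(a)\otimes\bbone_2$ as a block-diagonal operator and multiply it against the purely off-diagonal matrix form of $D_0$ given in \eqref{cliff2}. Since $D_0$ has vanishing diagonal, the diagonal entries of the commutator cancel and one obtains
\[
[D_0,\pi_0(a)]=-i\sqrt{2}\begin{pmatrix} 0 & [\delbar,L(a)] \\ [\del,L(a)] & 0 \end{pmatrix}.
\]

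The key identity is then $[\del,L(a)]=L(\del a)$ and $[\delbar,L(a)]=L(\delbar a)$ on $\mathcal{D}_{L^2}$. This follows from the Leibniz rule $\partial_\mu(a\star\psi)=\partial_\mu a\star\psi+a\star\partial_\mu\psi$ recorded in Proposition \ref{moyalproduct}: as $\del$ and $\delbar$ are constant-coefficient combinations of $\partial_1,\partial_2$, the same rule yields $\del(a\star\psi)=(\del a)\star\psi+a\star(\del\psi)$, which rearranges to $[\del,L(a)]\psi=L(\del a)\psi$, and similarly for $\delbar$. Because $\mathcal{S}$ is stable under differentiation, $\del a,\delbar a\in\mathbb{A}$, so $L(\del a)$ and $L(\delbar a)$ are bounded and the commutator extends to a bounded operator on $\mathcal{H}_0$.

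Finally I would compute the operator norm of the off-diagonal block operator $M=\left(\begin{smallmatrix} 0 & B \\ C & 0\end{smallmatrix}\right)$ with $B=L(\delbar a)$ and $C=L(\del a)$. Since $M^\dag M=\left(\begin{smallmatrix} C^\dag C & 0 \\ 0 & B^\dag B\end{smallmatrix}\right)$ is block-diagonal, one has $\|M\|^2=\|M^\dag M\|=\max(\|C^\dag C\|,\|B^\dag B\|)=\max(\|C\|,\|B\|)^2$, hence $\|M\|=\max(\|L(\del a)\|,\|L(\delbar a)\|)$; multiplying by the scalar factor $|-i\sqrt{2}|=\sqrt{2}$ gives the claimed formula. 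I expect the only delicate point to be the domain bookkeeping, namely verifying that the formal commutator computed on $\mathcal{D}_{L^2}$ genuinely defines the bounded operator whose norm is evaluated, the block-norm identity itself being entirely routine.
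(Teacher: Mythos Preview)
Your proof is correct and follows essentially the same approach as the paper: the paper also reduces the commutator via the Leibniz identity $[\partial_\mu,L(a)]=L(\partial_\mu a)$ to obtain $[D_0,\pi_0(a)]=-iL(\partial_\mu a)\otimes\sigma^\mu$, and then invokes the C*-identity $\|T\|^2=\|TT^\dag\|$ (you use $\|T^\dag T\|$, which is equivalent) to read off the norm from the resulting block-diagonal operator. Your version is slightly more explicit about the $2\times2$ block structure and the domain bookkeeping, but there is no substantive difference in method.
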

\begin{proof}
A standard computation yields $[\partial_\mu,L(a)]=L(\partial_\mu(a))$. Thus $[D_0,\pi(a)]=-iL(\partial_\mu  a)\otimes\sigma^\mu$ for any $a\in\mathbb{A}$. Then, the application of the general property $||T||^2=||TT^\dagger||$ to the operator $T=[D_0,\pi(a)]$ leads after a simple computation to the result.
\end{proof}
The (by now) standard Moyal spectral triple is described by the following proposition whose proof is reproduced below as it illustrates useful technical properties of the Moyal product.
\begin{proposition}\label{X0}\cite{GAYRAL2004}
The data ${\mathfrak{X}}_{D_0}=(\mathbb{A},\ \pi_0,\ \mathcal{H}_0=L^2(\mathbb{R}^2)\otimes \mathbb{C}^2,\ D_0)$ where $\pi_0:\mathbb{A}\to\mathcal{B}(\mathcal{H}_0)$ is defined in \ref{regulrep} and $D_0$ is defined in \eqref{cliff2} is a spectral triple as in Definition\ref{spectraltriple}.
\end{proposition}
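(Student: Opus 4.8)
The plan is to verify the four conditions of Definition \ref{spectraltriple} in turn, the genuinely non-trivial one being the compact-resolvent condition (iv), which reflects the non-unital (``locally compact'') nature of $\mathbb{A}$.

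For (i), $\mathbb{A}=(\mathcal{S},\star)$ is an involutive algebra with complex conjugation as involution (Proposition \ref{precstar}). Under the isometry $U$ identifying $L^2(\mathbb{R}^2)$ with the Hilbert--Schmidt operators on $\ell^2(\mathbb{N})$, left $\star$-multiplication $L(a)$ becomes left matrix multiplication by $(a_{mn})$, so $\|L(a)\|=\|(a_{mn})\|_{\mathcal{B}(\ell^2(\mathbb{N}))}$; since $a\in\mathcal{S}$ yields a bounded (in fact compact) matrix, $L(a)$ and hence $\pi_0(a)=L(a)\otimes\bbone_2$ is bounded. The identity $L(a)^\dag=L(a^\star)$ gives $\pi_0(a)^\dag=\pi_0(a^\star)$, so $\pi_0$ is a $\ast$-representation, and faithfulness is that of the left regular representation (equivalently, $L(a)=0$ forces $(a_{mn})=0$). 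For (ii), $D_0$ is the self-adjoint Dirac operator on $\mathbb{R}^2$ with domain $\mathcal{D}_{L^2}\otimes\mathbb{C}^2$; the invariance $\pi_0(a)\,\text{Dom}(D_0)\subset\text{Dom}(D_0)$ follows from the Leibniz rule $\partial_\mu(a\star\psi)=\partial_\mu a\star\psi+a\star\partial_\mu\psi$ of Proposition \ref{moyalproduct}: iterating it and using that $\partial^\gamma a\in\mathcal{S}$ makes every $L(\partial^\gamma a)$ bounded, one gets $\partial^\beta(a\star\psi)\in L^2(\mathbb{R}^2)$ for $\psi\in\mathcal{D}_{L^2}$, i.e.\ $a\star\psi\in\mathcal{D}_{L^2}$. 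Condition (iii) is exactly Proposition \ref{lzero}: $[D_0,\pi_0(a)]=-iL(\partial_\mu a)\otimes\sigma^\mu$ is bounded because $\partial_\mu a\in\mathcal{S}$.

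It remains to prove (iv): $\pi_0(a)(D_0-\lambda)^{-1}\in\mathcal{K}(\mathcal{H}_0)$ for all $a\in\mathbb{A}$ and all $\lambda\notin\text{Spec}(D_0)$. I would first reduce to a single convenient value of $\lambda$. Since $D_0$ is self-adjoint, $\pm i\notin\text{Spec}(D_0)$, and the resolvent identity gives $(D_0-\lambda)^{-1}=(D_0-i)^{-1}\big[\bbone+(\lambda-i)(D_0-\lambda)^{-1}\big]$ with the bracket bounded; hence it suffices to treat $K:=\pi_0(a)(D_0-i)^{-1}$. A bounded operator $K$ is compact iff $KK^\dag$ is, and a short computation using $D_0^\dag=D_0$ and $\pi_0(a)^\dag=\pi_0(a^\star)$ gives
\[
KK^\dag=\pi_0(a)(D_0^2+1)^{-1}\pi_0(a^\star).
\]
Since $\pi_0(a^\star)$ is bounded, it is enough to show that $\pi_0(a)(D_0^2+1)^{-1}$ is compact. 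A direct computation gives $D_0^2=-\Delta\otimes\bbone_2$ with $\Delta=\partial_1^2+\partial_2^2$, so $(D_0^2+1)^{-1}=(1-\Delta)^{-1}\otimes\bbone_2$ and $\pi_0(a)(D_0^2+1)^{-1}=\big(L(a)(1-\Delta)^{-1}\big)\otimes\bbone_2$. The whole matter thus reduces to the single scalar statement that $L(a)(1-\Delta)^{-1}$ is compact on $L^2(\mathbb{R}^2)$.

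This last point is the main obstacle, and it cannot be settled by the purely algebraic matrix-basis facts used so far: the resolvent of the flat Laplacian on $\mathbb{R}^2$ is not compact, and $L(a)$ alone is not even Hilbert--Schmidt, so compactness must come from the genuine interplay between the (configuration-space) decay encoded in $a\in\mathcal{S}$ and the (momentum-space) decay of $(1-\Delta)^{-1}$. I would establish it by showing $L(a)(1-\Delta)^{-1}$ is Hilbert--Schmidt, estimating
\[
\big\|L(a)(1-\Delta)^{-1}\big\|_{\mathcal{L}^2}^2=\text{Tr}\big[L(a^\star\star a)(1-\Delta)^{-2}\big]
\]
(using $L(a)^\dag L(a)=L(a^\star\star a)$), and bounding the right-hand side through a Kato--Seiler--Simon/Cwikel-type estimate adapted to the Moyal product, using the integral kernel of $L(a)$ coming from \eqref{moyal} together with $(1+|\xi|^2)^{-1}\in L^2(\mathbb{R}^2)$; the rapid decay of $a\in\mathcal{S}$ guarantees finiteness. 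The detailed bound is the one carried out in \cite{GAYRAL2004}; granting it, $L(a)(1-\Delta)^{-1}\in\mathcal{L}^2\subset\mathcal{K}$, whence $\pi_0(a)(D_0^2+1)^{-1}$, then $KK^\dag$, then $K$, and finally $\pi_0(a)(D_0-\lambda)^{-1}$ are compact, completing (iv).
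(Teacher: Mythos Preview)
Your proposal is correct and follows essentially the same route as the paper: reduce (iv) to the scalar statement that $L(a)(1-\Delta)^{-1}$ is Hilbert--Schmidt on $L^2(\mathbb{R}^2)$, via $D_0^2=-\Delta\otimes\bbone_2$. Two minor differences are worth noting. First, you spell out the passage from $\pi_0(a)(D_0-\lambda)^{-1}$ to $\pi_0(a)(D_0^2+1)^{-1}$ through the resolvent identity and the $KK^\dag$ argument, which the paper leaves implicit; this is a genuine improvement in rigor. Second, for the Hilbert--Schmidt bound itself the paper actually carries out the kernel computation explicitly: it writes down $K_{L(a)(-\partial^2+1)^{-1}}(x,y)$ from the oscillatory integral \eqref{moyal}, and a change of variables gives $\int|K|^2 = C\,\|a\|_2^2\,\|(1+p^2)^{-1}\|_2^2<\infty$. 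Your sketch via $\text{Tr}[L(a^\star\star a)(1-\Delta)^{-2}]$ and a Kato--Seiler--Simon style estimate is the same content, just not unpacked; if you want a self-contained proof rather than a deferral to \cite{GAYRAL2004}, the paper's three-line kernel computation is the quickest way to close that step.
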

\begin{proof}
Axiom i and Axiom ii of Definition \ref{spectraltriple} are mere consequences of basic properties of Moyal product, algebras into play and the Dirac operator $D_0$. Besides, it can be easily verified that $||a\star b||_2\le{{1}\over{{\sqrt{2\pi\theta}}}}||a ||_2||b||_2$, $\forall a,b\in L^2(\mathbb{R}^2)$. This implies that $L(a)$ is a bounded operator. Thus, $\pi_0(a)\in\mathcal{B}(\mathcal{H}_0)$. Then, Proposition \ref{lzero} implies $[D_0,\pi_0(a)]\in\mathcal{B}(\mathcal{H}_0)$ for any $a\in\mathbb{A}$. Hence, Axiom iii is verified.\\
Next, one verifies that $D_0^2=-\partial_\mu\partial^\mu\otimes\bbone_2:=-\partial^2\otimes\bbone_2$ which combined with \eqref{regulrep} yields
\begin{equation}
\pi_0(a){{1}\over{D_0^2+1}}=L(a){{1}\over{-\partial^2+1}}\otimes\bbone_2\label{diagresolv},
\end{equation}
which acts diagonally on $\mathcal{H}_0=L^2(\mathbb{R}^2)\otimes\mathbb{C}^2$, so that it is sufficient to show that $L(a){{1}\over{-\partial^2+1}}$ is a compact operator on $L^2(\mathbb{R}^2)$ for any $a\in\mathbb{A}$. Simply compute the integral kernel for $L(a)$ given by
\begin{equation}
K_{L(a)}(x,y)=\int d^2z\ a(x+z)e^{2i\,z^\mu\,\Theta^{-1}_{\mu\nu}(x^\nu-y^\nu)},\ \forall a\in\mathbb{A}\label{kla}
\end{equation}
from which follows ($C$ is a real constant)
\begin{equation}
K_{L(a)(-\partial^2+1)^{-1}}(x,y)=C\int d^2p\ {{a(x+{{1}\over{2}}\Theta p)}\over{p^2+1}}e^{ip(x-y)},\ \forall a\in\mathbb{A}\label{kernelresolvante1}.
\end{equation}
Then, consider the integral $I:=\int d^2xd^2y|K_{L(a)(-\partial^2+1)^{-1}}(x,y)|^2$. One has
\begin{align}
I=C^2\int d^2xd^2yd^2p_1d^2p_2\ {{a^*(x+{{1}\over{2}}\Theta p_1)}\over{p_1^2+1}}
{{a(x+{{1}\over{2}}\Theta p_2)}\over{p_2^2+1}}e^{-ip_1(x-y)}e^{ip_2(x-y)} \nonumber\\
=C^2\int d^2xd^2p\ {{a^*(x+{{1}\over{2}}\Theta p)}\over{p^2+1}}
{{a(x+{{1}\over{2}}\Theta p)}\over{p^2+1}}\nonumber\\
=C^2\int d^2xd^2p\ |a(x+{{1}\over{2}}\Theta p)|^2({{1}\over{p^2+1}})^2
\end{align}
Thus
\begin{equation}
I=(C^\prime)^2\ ||a||^2_2\ ||{{1}\over{p^2+1}}||^2_2<+\infty,\ \forall a\in\mathbb{A},\label{HS1}
\end{equation}
where the last equality stems from a change of variable. Relation \eqref{HS1} implies that the operator $L(a)(-\partial^2+1)^{-1}$ is a Hilbert-Schmidt operator on $L^2(\mathbb{R}^2)$, hence compact. This implies that Axiom iv is satisfied.
\end{proof}

It is possible to actually compute an explicit distance formula between a particular class of pures states, which has been done in \cite{Wal1}, \cite{Wal2}. These states are defined by the vectors of the canonical basis of $\ell^2(\mathbb{N})$. Combining Proposition \ref{purestatesmoyal} with $\psi=e_n$, for any $n\in\mathbb{N}$, these states are given by
\begin{equation}
\omega_{e_n}(a):=\omega_n(a)=a_{nn},\ \forall n\in\mathbb{N},\label{diagon-states}
\end{equation}
which thus define the diagonal elements of the "matrix" $a_{mn}$ representing $a\in\mathbb{A}$. We recall that, according to Definition \ref{spectraldistance}, the spectral distance for $D_0$ is
\begin{equation}
d_{D_0}(\omega_m,\omega_n)=\sup_{a\in\mathbb{A}}\{|\omega_m(a)-\omega_n(a)|\ 
;\ \ell_{D_0}(a)\le1 \},
\end{equation}
for any $\omega_m,\ \omega_n$ given by \eqref{diagon-states}. According to the material presented in the introduction, we define the Lipschitz ball for $\ell_{D_0}$ as
\begin{equation} 
{\mathfrak{B}_{\ell_{D_0}}:=\{a\in\mathbb{A}\ ;\ \ell_{D_0}(a)\le1} \}\nonumber.
\end{equation}
\begin{theorem}(\cite{Wal1}, \cite{Wal2})\label{th0}
The spectral distance between any two pure states $\omega_m$, $\omega_n$ is
\begin{equation}
d_{D_0}(\omega_m,\omega_n)=\sqrt{\frac{\theta}{2}}\sum_{k=n+1}^m\frac{1}{\sqrt{k}}, \forall m,n\in\mathbb{N},\ n<m.
\end{equation}
\end{theorem}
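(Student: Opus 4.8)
The plan is to reduce the supremum to a one-dimensional problem on the diagonal of the matrix basis and then prove matching upper and lower bounds that telescope over the index $k$. I write $a=\sum_{m,n}a_{mn}f_{mn}$ throughout, and since (as observed just after Definition~\ref{spectraldistance}) the supremum may be restricted to self-adjoint elements, I take $a_{mn}=\overline{a_{nm}}$ with real diagonal entries $a_{kk}$. The first step is to make the derivations explicit in the matrix basis. Using Proposition~\ref{matrixbase-prop} the coordinate functions act as weighted shifts, $\bar z\star f_{mn}=\sqrt{\theta(m+1)}\,f_{m+1,n}$, $z\star f_{mn}=\sqrt{\theta m}\,f_{m-1,n}$ and similarly for right multiplication; combined with the fact that $\partial$ and $\bar\partial$ are the inner derivations $\bar\partial a=\theta^{-1}[z,a]_\star$ and $\partial a=-\theta^{-1}[\bar z,a]_\star$ (checked on the generators $z,\bar z$ and extended by the Leibniz rule of Proposition~\ref{moyalproduct}), one finds
\begin{equation}
(\bar\partial a)_{mn}=\tfrac{1}{\sqrt\theta}\big(\sqrt{m+1}\,a_{m+1,n}-\sqrt n\,a_{m,n-1}\big),\quad (\partial a)_{mn}=\tfrac{1}{\sqrt\theta}\big(\sqrt{n+1}\,a_{m,n+1}-\sqrt m\,a_{m-1,n}\big).\nonumber
\end{equation}
Since $L(f_{pq})$ is the matrix unit carrying the $q$-th basis vector to the $p$-th, $L(b)$ is represented by the matrix $(b_{pq})$ and $\|L(b)\|$ is its operator norm on $\ell^2(\mathbb N)$; in particular $|b_{pq}|\le\|L(b)\|$. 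By Proposition~\ref{lzero}, $\ell_{D_0}(a)\le1$ is then equivalent to $\|L(\partial a)\|\le 1/\sqrt2$ and $\|L(\bar\partial a)\|\le 1/\sqrt2$.

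For the upper bound I would read off the single super-diagonal entry $(\bar\partial a)_{k,k+1}=\tfrac{\sqrt{k+1}}{\sqrt\theta}\,(a_{k+1,k+1}-a_{kk})$. Hence $|a_{k+1,k+1}-a_{kk}|=\tfrac{\sqrt\theta}{\sqrt{k+1}}\,|(\bar\partial a)_{k,k+1}|\le\sqrt{\theta/2}\,/\sqrt{k+1}$, using $|(\bar\partial a)_{k,k+1}|\le\|L(\bar\partial a)\|\le1/\sqrt2$. Writing $a_{mm}-a_{nn}=\sum_{k=n}^{m-1}(a_{k+1,k+1}-a_{kk})$ and applying the triangle inequality yields $|\omega_m(a)-\omega_n(a)|\le\sqrt{\theta/2}\sum_{k=n+1}^{m}1/\sqrt k$ for every admissible $a$, which is the ``$\le$'' half.

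For the lower bound I would restrict to real diagonal elements $a=\sum_k a_k f_{kk}$. For these, both $L(\partial a)$ and $L(\bar\partial a)$ are pure weighted shifts with entries $\tfrac{\sqrt k}{\sqrt\theta}(a_k-a_{k-1})$, so $\ell_{D_0}(a)=\sqrt2\,\sup_{k\ge1}\tfrac{\sqrt k}{\sqrt\theta}|a_k-a_{k-1}|$ and the constraint reduces to $|a_k-a_{k-1}|\le\sqrt{\theta/2}\,/\sqrt k$ for all $k$. Saturating it, i.e.\ choosing $a_k=0$ for $k\le n$ and $a_k-a_{k-1}=\sqrt{\theta/2}/\sqrt k$ for $n<k\le m$, already gives $a_m-a_n=\sqrt{\theta/2}\sum_{k=n+1}^m 1/\sqrt k$, matching the upper bound. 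The only genuine difficulty is that $\mathbb A$ is non-unital, so a sequence staying at a nonzero plateau for $k>m$ is not Schwartz; I would therefore taper $a_k$ back down to $0$ for $k>m$ with decrements still bounded by $\sqrt{\theta/2}/\sqrt k$. Because $\sum 1/\sqrt k$ diverges, $0$ is reached after finitely many steps, producing a genuinely finitely supported (hence Schwartz, hence in $\mathbb A$) self-adjoint element with $\ell_{D_0}(a)\le1$ and unchanged values $\omega_n(a)=0$, $\omega_m(a)=\sqrt{\theta/2}\sum_{k=n+1}^m1/\sqrt k$. This gives the matching ``$\ge$'' bound and hence equality.

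I expect the main obstacle to be precisely this last point: exhibiting an optimizer that actually lies in the non-unital algebra $\mathbb A$ rather than merely in its multiplier algebra, since the naive saturating sequence fails to decay. The tapering construction above (or, equivalently, an approximating sequence with $\ell_{D_0}\to 1^-$) is what makes the supremum genuinely attained. The derivative-in-matrix-basis formulas are the computational crux of the argument but are standard, and everything else is the elementary telescoping estimate.
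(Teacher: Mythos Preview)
Your proof is correct and follows essentially the same strategy as the paper's: bound a specific matrix entry of $\bar\partial a$ (the paper does this via Cauchy--Schwarz with $\phi_1=f_{N+1,0}$, $\phi_2=f_{N,0}$, which is exactly your entry bound $|(\bar\partial a)_{k,k+1}|\le\|L(\bar\partial a)\|$), telescope for the upper bound, and then exhibit a finitely supported diagonal optimizer for the lower bound. The only noteworthy difference is the shape of the optimizer: the paper's element $\hat a(m_0)$ starts at its maximum at index $0$ and decreases monotonically to $0$ at index $m_0{+}1$, serving simultaneously for all pairs $N<M\le m_0$, whereas you build an element tailored to the fixed pair $(n,m)$ that rises from $0$, peaks at $m$, and is tapered back down. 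Both constructions live in $\mathbb{A}$ and saturate the bound, so this is a cosmetic variation; your explicit identification of $\ell_{D_0}$ on diagonal elements with the sup of the shift weights is a slightly cleaner way to see why the constraint is exactly $|a_k-a_{k-1}|\le\sqrt{\theta/2}/\sqrt{k}$.
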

\begin{proof}
Assume that $a\in\mathfrak{B}_{\ell_{D_0}}$ is such that $\ell_{D_0}(a)=\sqrt{2}||L(\partial a) ||\le1$. The proof for $L(\bar{\partial}a)$ is similar. Then, Cauchy-Schwarz inequality implies 
\begin{equation}
|\langle \phi_1,L(\partial a)\phi_2 \rangle_{L^2} |\le\frac{1}{\sqrt{2}}\label{cauchy-addit}
\end{equation}
for any unit vectors $\phi_1$, $\phi_2$ in $L^2(\mathbb{R}^2)$. Choose $\phi_1=f_{N+1,0}$ and $\phi_2=f_{N,0}$. Upon using $\partial f_{mn}=\sqrt{\frac{n}{\theta}}f_{m,n-1}-\sqrt{\frac{m+1}{\theta}}f_{m+1,n}$ for any $m,n\in\mathbb{N}$, which results from a simple calculation, one readily obtains
$\sqrt{\frac{N+1}{\theta}}|a_{N+1,N+1}-a_{NN} |\le\frac{1}{\sqrt{2}}$. Thus, 
by \eqref{diagon-states}, we can write
\begin{equation}
d_{D_0}(\omega_{N+1},\omega_N)\le\sqrt{\frac{\theta}{2(N+1)}},\label{bound1}
\end{equation}
and by the triangular inequality we also have
\begin{equation}
d_{D_0}(\omega_M,\omega_N)\le\sqrt{\frac{\theta}{2}}\sum_{k=N+1}^M\frac{1}{\sqrt{k}},\ \forall M,N\in\mathbb{N},\ N<M.\label{bound2}
\end{equation}
Now, the bound in the RHS of \eqref{bound1} is saturated by the element of $\mathbb{A}$ given by $\hat{a}_1=\sqrt{\frac{\theta}{2(N+1)}}f_{NN}$. Define for convenience $\partial a=\sum_{m,n}\alpha_{mn}f_{mn}$ for any $a\in\mathbb{A}$. On one hand, a mere calculation of the derivative $\partial \hat{a}_1$ yields $\alpha_{N+1,N}=\frac{1}{\sqrt{2}}$ while the other coefficients vanish. On the other hand, one can write for any unit vector $\psi\in L^2(\mathbb{R}^2)$, $\psi=\sum_{m,n}\psi_{mn}f_{mn}$, $||\partial \hat{a}_1\star\psi ||_2^2=2\pi\theta\sum_{k,q}|\alpha_{k+1,k}\psi_{kq}|^2\le\pi\theta\sum_{k,q}|\psi_{kq}|^2=\frac{1}{2}$. Hence $\hat{a}_1\in\mathfrak{B}_{\ell_{D_0}}$ and 
\begin{equation}
d_{D_0}(\omega_{N+1},\omega_N)=\sqrt{\frac{\theta}{2(N+1)}}.
\end{equation}
Now, define 
\begin{equation}
\hat{a}(m_0)=\sum_{p,q}a_{pq}(m_0)f_{pq},\  
a_{pq}(M)=
\delta_{pq}\sqrt{\frac{\theta}{2}}\sum_{k=p}^{m_0}\frac{1}{\sqrt{k+1}}, \forall m_0\in\mathbb{N} \label{positif-rad}
\end{equation}
Observe that $\hat{a}(m_0)$ is a positive element of $\mathbb{A}$ so that 
\begin{equation}
\ell_{D_0}(\hat{a}(m_0))=\sqrt{2}||L(\partial \hat{a}(m_0)) ||. 
\end{equation}
By computing the derivative of $\hat{A}(m_0)$, one finds that the only non-vanishing coefficients of the corresponding expansion are $\alpha_{p+1,p}=-\frac{1}{\sqrt{2}}$ with $0\le p\le m_0$. Again $||L(\partial\hat{a}(m_0))||^2\le\frac{1}{2} $. Thus $\hat{a}(m_0)\in\mathfrak{B}_{\ell_{D_0}}$. Finally, to show that any $\hat{a}(m_0)$ saturates the bound of the RHS of \eqref{bound2} when $n<M\le m_0$, compute
\begin{eqnarray}
|\omega_M(\hat{a}(m_0))-\omega_N(\hat{a}(m_0)) |&=&\sqrt{\frac{\theta}{2}}|\sum_{k=M}^{m_0}\frac{1}{\sqrt{k+1}}-\sum_{k=N}^{m_0}\frac{1}{\sqrt{k+1}} |\nonumber\\
&=&\sqrt{{\frac{\theta}{2}}}\sum_{k=N+1}^M\frac{1}{\sqrt{k}}\label{triang-equal}.
\end{eqnarray}
This terminates the proof.
\end{proof}
\begin{remark}
An additional information on the derivatives of the elements of $\mathfrak{B}_{\ell_{D_0}}$ can be obtained from \eqref{cauchy-addit}. For any $a\in\mathfrak{B}_{\ell_{D_0}}$ and setting $\partial a=\sum_{m,n}\alpha_{mn}f_{mn}$ as above and using again \eqref{cauchy-addit} with 
$\phi_1=f_{p0}$ and $\phi_2=f_{q0}$ yields $|\alpha_{pq}|\le\frac{1}{\sqrt{2}}$, for any $p,q\in\mathbb{N}$. Similarly, $\bar{\partial }a=\sum_{m,n}\beta_{mn}f_{mn}$ leads to $|\beta_{pq}|\le\frac{1}{\sqrt{2}}$ for any $p,q\in\mathbb{N}$. For positive elements $a_+$, one has the equivalence
\begin{equation}
a_+\in\mathfrak{B}_{\ell_{D_0}}\iff|\alpha_{pq}|\le\frac{1}{\sqrt{2}}\ \text{and}\ |\beta_{pq}|\le\frac{1}{\sqrt{2}}.
\end{equation}
This can be verified by noticing that the only non-vanishing coefficients $\alpha_{mn}$ are of the form $\alpha_{m+1,m}$. Then, assuming $|\alpha_{m+1,m}|\le\frac{1}{\sqrt{2}}$ and merely computing $||\partial a\star\psi||^2$ for any unit vector $\psi$ leads immediately to the result.\\
Besides, as it can be realized from \eqref{triang-equal}, one has $d_{D_0}(\omega_m,\omega_n)=d_{D_0}(\omega_m,\omega_p)+d_{D_0}(\omega_p,\omega_n)$ for $n\le p\le m$, i.e the spectral distance satisfies a triangular equality among these pures states.
\end{remark}

\subsection{Homothetic metrics on Moyal planes.}\label{subsection23}

We now introduce another family of Moyal spectral triples that appeared in the mathematical physics literature in the general context of field theories and gauge theories built on noncommutative spaces (for a complete review including essential aspects of noncommutative differential geometry underlying these field theories together with a list of essential references see \cite{mdv-gauge}) including the case of Moyal spaces \cite{WAL3}, \cite{WAL4}, \cite{WAL5}. This triple advocated rather recently (see \cite{wulk-finite} and related references therein) occurred in attempts to extend desirable perturbative properties (namely renormalisability) showing up in a certain class of scalar field theories on Moyal space \cite{GROSSWULK0}, \cite{GROSSEWULK}, \cite{gourwal} to the more difficult situation of gauge theories on Moyal spaces \cite{WAL6}, \cite{GROSSEWOHL}, \cite{walreview}. The use of spectral action principle leads to a gauge-invariant action with interesting properties but whose quantum properties are difficult \cite{vacuum} to study and are still under investigations. For technical reasons, renormalizability is easily obtained (at least in the scalar case) provided the spectrum of the square of the Dirac operator is a harmonic oscillator spectrum. Such a modification of the Dirac operator is standard in physics and is briefly given below together with some additional technical material. By equipping the minimal set of data for the triple by suitable grading and real structure, one arrives at a triple in the spirit of \cite{GAYRAL2004} with however a metric dimension half the KO-dimension \cite{wulk-finite}. The metric properties of this triple have been investigated in \cite{Wallet:2011aa} with the main result that the spectral distance stemming from this triple and $d_{D_0}$, the spectral distance related to the triple defined in Proposition \ref{X0} are homothetic{\footnote{The word ``homothetic" qualifying the metrics that is used here in a sense borrowed from the physical cosmology literature, e.g bigravity theories. In the present paper, it simply means ``proportional".}} to each other. This property is also valid for another type of spectral triple based on a ``Landau-type" Dirac operator as shown in \cite{Wallet:2011aa} which however will not be considered in the present paper. \par

Let $(\gamma^\mu,\gamma^{\mu+2})_{\mu=1,2}$ be hermitian elements of $\mathbb{M}_4(\mathbb{C})$ spanning an irreducible representation of the complex Clifford algebra $\mathbb{C}l_{\mathbb{C}}(4)$:
\begin{equation}
\{\gamma^\mu,\gamma^\nu\}=2\delta^{\mu\nu},\ \{\gamma^{\mu+2},\gamma^{\nu+2}\}=2\delta^{\mu\nu},\ \{\gamma^\mu,\gamma^{\nu+2}\}=0,\ \mu,\nu\in\{1,2\}\label{cliff4}.
\end{equation}
Consider the 1-parameter family of unbounded self adjoint Dirac operators indexed by $\Omega\in ]0,1]$ (to make contact with the physics literature):
\begin{equation}
D_\Omega:=\gamma^\mu(-i\partial_\mu)-\Omega\gamma^{\mu+2}m({\tilde{x}}_\mu), \label{Domega}
\end{equation}
where $m(\tilde{x}_\mu)a:=\tilde{x}_\mu a$ for any $a\in\mathbb{A}$ and $\tilde{x}_{\mu}:=2\Theta^{-1}_{\mu\nu}x^\nu$. The chosen domain of $D_\Omega$ is $\text{Dom}(D_\Omega)=\mathcal{S}\otimes\mathbb{C}^4$, dense in $\mathcal{H}=\mathcal{H}_0\otimes \mathbb{C}^2$. As (faithful) representation $\pi:\mathbb{A}\to\mathcal{B}(\mathcal{H})$, we take: 
\begin{equation}
\pi(a):=L(a)\otimes\bbone_4,\ \forall a\in\mathbb{A}\label{pi4}.
\end{equation}

Useful algebraic relations for $D_\Omega$ are collected in the following proposition.
\begin{proposition} \label{algebrorelat}(\cite{Wallet:2011aa})
The following properties hold true. 
\begin{equation}
D^2_\Omega=(-\partial^2+\Omega^2\tilde{x}^2)\bbone_4+2i\Omega\gamma^\mu\gamma^{\nu+2}\Theta^{-1}_{\nu\mu},\label{hamiltonian}
\end{equation}
\begin{equation}
[D_\Omega,\pi(a)]=-iL(\partial_\mu a)\otimes(\gamma^\mu+\Omega\gamma^{\mu+2}),\ \forall a\in\mathbb{A} \label{da}.
\end{equation}
\end{proposition}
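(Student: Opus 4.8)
The plan is to establish both identities by direct computation, exploiting the tensor-product structure $\mathcal{H}=L^2(\mathbb{R}^2)\otimes\mathbb{C}^4$ in which the Clifford matrices act only on the $\mathbb{C}^4$ factor while $\partial_\mu$, $m(\tilde{x}_\mu)$ and $L(a)$ act only on the $L^2$ factor. The computation rests on three ingredients: the Clifford relations \eqref{cliff4}, the derivation property $[\partial_\mu,L(a)]=L(\partial_\mu a)$ already used in the proof of Proposition \ref{lzero}, and one genuinely noncommutative input, namely the behaviour of ordinary multiplication by the coordinate functions against the $\star$-product defined in \eqref{moyal}.

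For \eqref{hamiltonian} I would write $D_\Omega=A+B$ with $A=-i\gamma^\mu\partial_\mu$ and $B=-\Omega\gamma^{\mu+2}m(\tilde{x}_\mu)$, and expand $D_\Omega^2=A^2+B^2+(AB+BA)$. In $A^2=-\gamma^\mu\gamma^\nu\partial_\mu\partial_\nu$ the operator $\partial_\mu\partial_\nu$ is symmetric, so replacing $\gamma^\mu\gamma^\nu$ by its symmetric part $\tfrac12\{\gamma^\mu,\gamma^\nu\}=\delta^{\mu\nu}$ gives $A^2=-\partial^2\bbone_4$; the same symmetrization applied to $B^2=\Omega^2\gamma^{\mu+2}\gamma^{\nu+2}\tilde{x}_\mu\tilde{x}_\nu$ (ordinary multiplication commutes) yields $B^2=\Omega^2\tilde{x}^2\bbone_4$. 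The cross terms are handled with the Leibniz rule $\partial_\mu m(\tilde{x}_\nu)=2\Theta^{-1}_{\nu\mu}+m(\tilde{x}_\nu)\partial_\mu$, coming from $\partial_\mu\tilde{x}_\nu=2\Theta^{-1}_{\nu\mu}$. Substituting this and using the mixed anticommutation $\{\gamma^\mu,\gamma^{\nu+2}\}=0$ to flip the ordering in $BA$, the first-order pieces of $AB$ and $BA$ cancel exactly, leaving only the c-number term $2i\Omega\gamma^\mu\gamma^{\nu+2}\Theta^{-1}_{\nu\mu}$, which is \eqref{hamiltonian}.

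For \eqref{da} the commutator splits as $[D_\Omega,\pi(a)]=[-i\gamma^\mu\partial_\mu,L(a)\otimes\bbone_4]-\Omega[\gamma^{\mu+2}m(\tilde{x}_\mu),L(a)\otimes\bbone_4]$, and since the Clifford matrices commute with $L(a)\otimes\bbone_4$ each bracket reduces to a commutator on the $L^2$ factor tensored with a gamma matrix. The first bracket is immediate from $[\partial_\mu,L(a)]=L(\partial_\mu a)$, giving $-iL(\partial_\mu a)\otimes\gamma^\mu$. The main obstacle is the second bracket $[m(\tilde{x}_\mu),L(a)]$, where ordinary multiplication by $\tilde{x}_\mu$ meets $\star$-multiplication by $a$. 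I would compute it from the Moyal commutation relation $x^\rho\star b-b\star x^\rho=i\Theta^{\rho\nu}\partial_\nu b$ together with $x^\rho b=\tfrac12(x^\rho\star b+b\star x^\rho)$; a short manipulation using associativity gives $[m(x^\rho),L(a)]=\tfrac{i}{2}\Theta^{\rho\nu}L(\partial_\nu a)$. Contracting with $\tilde{x}_\mu=2\Theta^{-1}_{\mu\rho}x^\rho$, the factor $2$ cancels the $\tfrac12$ and $\Theta^{-1}_{\mu\rho}\Theta^{\rho\nu}=\delta_\mu^\nu$ collapses the result to $[m(\tilde{x}_\mu),L(a)]=iL(\partial_\mu a)$. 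Feeding this into the second bracket produces $-i\Omega L(\partial_\mu a)\otimes\gamma^{\mu+2}$, and adding the two contributions gives \eqref{da}. The delicate point throughout is keeping track of which multiplication (ordinary versus $\star$) and which operator ordering is in play; once the identity for $[m(\tilde{x}_\mu),L(a)]$ is in hand the rest is bookkeeping.
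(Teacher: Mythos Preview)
Your computation is correct in every step: the symmetrisation arguments for $A^2$ and $B^2$, the Leibniz-plus-anticommutation cancellation of the first-order pieces in $AB+BA$, and the key identity $[m(\tilde{x}_\mu),L(a)]=iL(\partial_\mu a)$ derived from $x^\rho b=\tfrac12(x^\rho\star b+b\star x^\rho)$ together with the Moyal commutation relation. The paper itself does not supply a proof here but simply refers to Proposition~2.5 of \cite{Wallet:2011aa}; your direct argument is exactly the standard one and almost certainly what that reference contains.
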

\begin{proof}
See \cite{Wallet:2011aa}, Proposition 2.5.
\end{proof}

For convenience, we use explicit representations for the $\gamma$ matrices:  
\begin{equation}
\gamma^\mu:=\Gamma^1\otimes\sigma^\mu,\ \gamma^{\mu+2}:=\sigma^\mu\otimes\Gamma^2,\ \mu=1,2, \label{232}
\end{equation}
where $\Gamma^1$, $\Gamma^2$ are hermitian elements of $\mathbb{M}_2(\mathbb{C})$ (see below) and the two families of self-adjoint Dirac operators:
\begin{equation}
D_{1}=(\bbone_2\otimes\sigma^\mu)(-i\partial_\mu)-\Omega(\sigma^\mu\otimes\sigma^3)m({\tilde{x}}_\mu)\label{Domega1}
\end{equation}
\begin{equation}
D_{2}:=(\sigma^3\otimes\sigma^\mu)(-i\partial_\mu)-\Omega(\sigma^\mu\otimes\bbone_2)m(\tilde{x}_\mu)\label{Domega2}
\end{equation}
from which $\Gamma_1$ and $\Gamma_2$ can be read off by comparison with \eqref{Domega} and \eqref{232}. $D_{1}$ and $D_{2}$ satisfy a relation similar to Equation \eqref{hamiltonian} which is given by  $D^2_{1}=D^2_{2}=(-\partial^2+\Omega^2\tilde{x}^2)\bbone_4-{{2\Omega}\over{\theta}}(\sigma^\mu\otimes\sigma^\mu)$
while \eqref{da} still holds.\par
\begin{theorem}\label{th1}(\cite{Wallet:2011aa})
For any $\Omega\in]0,1]$ and any $k=1,2$ the data $\mathfrak{X}(k):=(\mathbb{A},\ \pi,\ \mathcal{H},\ D_k)$ where $D_k$ given by \eqref{Domega1}-\eqref{Domega2}, $\pi:\mathbb{A}\to\mathcal{B}(\mathcal{H})$ given by \eqref{pi4} and $\mathcal{H}=\mathcal{H}_0\otimes \mathbb{C}^2$ are spectral triples with spectral distances $d_{D_k}$ homothetic to $d_{D_0}$. One has
\begin{equation}
d_{D_k}(\omega_1,\omega_2)=(1+\Omega^2)^{-{{1}\over{2}}}d_{D_0}(\omega_1,\omega_2), \ \forall k=1,2,\ \forall\omega_1,\omega_2\in\mathfrak{S}(\mathbb{A})\label{distanceD}.
\end{equation}
\end{theorem}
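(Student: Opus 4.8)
The plan is to reduce the computation of $\ell_{D_k}$ to the already-understood seminorm $\ell_{D_0}$ and then invoke Theorem \ref{th0}. Since the distance is defined as a supremum over the Lipschitz ball, it suffices to show that the seminorms are proportional, i.e.\ that $\ell_{D_k}(a) = (1+\Omega^2)^{1/2}\,\ell_{D_0}(a)$ for every $a \in \mathbb{A}$; the homothety of distances then follows immediately, because rescaling the seminorm by a constant $c$ rescales the ball $\{\ell \le 1\}$ to $\{\ell_{D_0} \le c\}$ and hence rescales every supremum of the form \eqref{specdist} by $1/c$. So the entire content is a norm identity at the level of commutators.

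First I would use Proposition \ref{algebrorelat}, namely relation \eqref{da}, which gives $[D_k,\pi(a)] = -iL(\partial_\mu a)\otimes M^\mu$ where $M^\mu$ is the relevant Clifford combination ($\gamma^\mu + \Omega\gamma^{\mu+2}$ in the general case, or the explicit $2\times2 \otimes 2\times2$ matrices read off from \eqref{Domega1}--\eqref{Domega2} for $k=1,2$). Writing $T := [D_k,\pi(a)]$, I would apply the same trick as in the proof of Proposition \ref{lzero}, computing $\|T\|^2 = \|TT^\dagger\|$. Because the representation factorizes as $L(\partial_\mu a)$ tensored with constant matrices, $TT^\dagger$ will decompose into a part built from the operators $L(\partial a), L(\delbar a)$ acting on $L^2(\mathbb{R}^2)$ and a part built from the Clifford anticommutators \eqref{cliff4} acting on the spinor factor. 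The crucial algebraic input is that the modified Clifford structure supplies an extra factor: the relations $\{\gamma^\mu,\gamma^\nu\}=2\delta^{\mu\nu}$ and $\{\gamma^{\mu+2},\gamma^{\nu+2}\}=2\delta^{\mu\nu}$ with $\{\gamma^\mu,\gamma^{\nu+2}\}=0$ cause the cross terms to cancel while the diagonal terms add, producing $(\gamma^\mu+\Omega\gamma^{\mu+2})(\gamma^\nu+\Omega\gamma^{\nu+2}) + (\mu\leftrightarrow\nu)$-type combinations whose spinorial norm carries precisely the factor $(1+\Omega^2)$. The upshot should be $\ell_{D_k}(a)^2 = (1+\Omega^2)\,\ell_{D_0}(a)^2$, reproducing the same $\max(\|L(\partial a)\|,\|L(\delbar a)\|)$ structure found in Proposition \ref{lzero}, merely rescaled.

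The main obstacle is the matrix bookkeeping in the spinor factor: the operator $T$ does not act as a scalar on $\mathbb{C}^4$, so $\|TT^\dagger\|$ is an operator norm on $L^2(\mathbb{R}^2)\otimes\mathbb{C}^4$ and one must check that the off-diagonal (in $\mu,\nu$) contributions from the two distinct operators $L(\partial_1 a)$ and $L(\partial_2 a)$ do not spoil the clean proportionality. In the unmodified case the identity $\|T\|^2 = 2\max(\|L(\partial a)\|^2,\|L(\delbar a)\|^2)$ already required a short computation exploiting the Pauli structure; here one must verify that introducing the second Clifford block through \eqref{232} only multiplies the relevant block-diagonal pieces by $(1+\Omega^2)$ without mixing them. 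I would handle this by passing to the complex combinations $\partial,\delbar$ and the matrices $\sigma^3 := i\sigma^1\sigma^2$, diagonalizing the spinor action so that $TT^\dagger$ becomes block-diagonal with blocks proportional to $L(\partial a)L(\partial a)^\dagger$ and $L(\delbar a)L(\delbar a)^\dagger$, each carrying the uniform scalar $(1+\Omega^2)$. Once that factorization is exhibited, taking the operator norm and the supremum defining $d_{D_k}$ yields \eqref{distanceD} directly from Theorem \ref{th0}, with no separate analysis of the states needed.
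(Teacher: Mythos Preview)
Your strategy is correct and coincides with the paper's: establish $\ell_{D_k}(a)=(1+\Omega^2)^{1/2}\ell_{D_0}(a)$ for all $a\in\mathbb{A}$, then read off \eqref{distanceD} directly from the definition of the spectral distance. The paper carries this out in exactly the way you sketch, writing $[D_1,\pi(a)]$ explicitly as a $4\times4$ matrix in $L(\partial a)$ and $L(\delbar a)$, computing $[D_1,\pi(a)]^*[D_1,\pi(a)]$, and exhibiting a concrete unitary $\mathfrak{U}$ that block-diagonalizes it into $(1+\Omega^2)\,\mathrm{diag}(L^*L,\bar L^*\bar L,L^*L,\bar L^*\bar L)$. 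Your anticipated ``main obstacle'' is real: the off-diagonal blocks $\Omega(\bar L^*\bar L-L^*L)$ do not vanish by Clifford relations alone, and it is precisely the explicit unitary conjugation that removes them. So your plan is sound, but the phrase ``cross terms cancel'' overstates what the anticommutators give you; the diagonalization step is essential, not cosmetic.

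Two points to correct. First, drop all references to Theorem~\ref{th0}: that result computes $d_{D_0}$ between particular pure states and is neither needed nor sufficient for the general identity \eqref{distanceD}, which follows for \emph{all} states immediately from the seminorm relation and Definition~\ref{spectraldistance}, as you yourself observe in your first paragraph. Second, you have omitted the verification that $\mathfrak{X}(k)$ is a spectral triple at all: Axiom~iv (compactness of $\pi(a)(D_k-\lambda)^{-1}$) is part of the statement and does not follow from the seminorm computation. The paper handles this by noting that $D_k^2=(-\partial^2+\Omega^2\tilde x^2)\bbone_4-\tfrac{2\Omega}{\theta}(\sigma^\mu\otimes\sigma^\mu)$ has harmonic-oscillator spectrum, so $(D_k^2+1)^{-1}$ is already compact without the factor $\pi(a)$; you should include this step.
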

\begin{proof}
Axiom i and Axiom ii of Definition \ref{spectraltriple} are obviously verified. Consider $k=1$. The analysis is similar for $k=2$. To verify Axiom iv, one notices from the relation $D^2_{1}=(-\partial^2+\Omega^2\tilde{x}^2)\bbone_4-{{2\Omega}\over{\theta}}$ that the spectrum of $D_1^2$ is the one of the harmonic oscillator. Hence $(D_1^2+1)^{-1}$ is (already) compact and iv) is satisfied.\\
Consider now $a\in\mathbb{A}$, $\ell_{D_1}(a)=||[D_1,\pi(a)]||=||-iL(\partial_\mu a)\otimes(\bbone_2\otimes\sigma^\mu+\Omega\sigma^\mu\otimes\sigma^3) ||$, for any $a\in\mathbb{A}$. Write explicitly:
\begin{equation}
[D_1,\pi(a)]=-i{\sqrt{2}}\begin{pmatrix} 0&L(\delbar a)&\Omega L(\delbar a)&0\\ 
L(\partial a)&0&0&-\Omega L(\delbar a)\\
\Omega L(\partial a) &0&0&L(\delbar a)\\
0&-\Omega L(\partial a)&L(\partial a)&0
             \end{pmatrix}, \forall a\in\mathbb{A}\label{matd_1a}.
\end{equation}
Set $L:=L(\partial a)$, ${\bar{L}}:=L(\delbar a)$. It follows that
\begin{align}
[D_1,\pi(a)]^*[D_1,\pi(a)]\phantom{xxxxxxxxxxxxxxxxxxxxxxxxxxxxxxx}\nonumber\\
=2\begin{pmatrix}
(1+\Omega^2)L^*L&0&0&0\\
0&{\bar{L}}^*{\bar{L}}+\Omega^2L^*L&\Omega({\bar{L}}^*{\bar{L}}-L^*L)&0\\
0&\Omega({\bar{L}}^*{\bar{L}}-L^*L)&L^*L+\Omega^2{\bar{L}}^*{\bar{L}}&0\\
0&0&0&(1+\Omega^2){\bar{L}}^*{\bar{L}}
\end{pmatrix}.
\end{align}
This can be expressed as
\begin{equation}
[D_1,\pi(a)]^*[D_1,\pi(a)]=(1+\Omega^2)\mathfrak{U}^\dag\begin{pmatrix}
L^*L&0&0&0\\
0&{\bar{L}}^*{\bar{L}}&0&0\\
0&0&L^*L&0\\
0&0&0&{\bar{L}}^*{\bar{L}}\end{pmatrix}\mathfrak{U}\label{diag1}
\end{equation}
with unitary $\mathfrak{U}$ given by
\begin{equation}
\mathfrak{U}=\begin{pmatrix}
1&0&0&0\\
0&(1+\Omega^2)^{-{{1}\over{2}}}&\Omega(1+\Omega^2)^{-{{1}\over{2}}}&0\\
0&-\Omega(1+\Omega^2)^{-{{1}\over{2}}}&(1+\Omega^2)^{-{{1}\over{2}}}&0\\
0&0&0&1\end{pmatrix}\label{diagprim1}.
\end{equation}
Then, \eqref{diag1}, \eqref{diagprim1} implies that for any $\mathbb{A}$, $\ell_{D_1}(a)^2=||[D_1,\pi(a)] ||^2=
||[D_1,\pi(a)]^*[D_1,\pi(a)] ||=(1+\Omega^2)\max(||L^*L ||,||{\bar{L}}^*{\bar{L}} ||)=(1+\Omega^2)\max(||L ||^2,||{\bar{L}} ||^2)=(1+\Omega^2)\ell_{D_0}^2(a)$. A similar result holds for $k=2$. Hence
\begin{equation}
\ell_{D_k}(a)=(1+\Omega^2)^{{{1}\over{2}}}\ell_{D_0}(a)\label{relatlipnorm}.
\end{equation}
Equation \eqref{relatlipnorm} implies that $\ell_{D_k}(a)=||[D_k,\pi(a)]||$ is bounded since $\ell_{D_0}$ is bounded by Proposition \ref{X0}. Hence, Axiom iii is satisfied so that $\mathfrak{X}(k)$, $k=1,2$ is a spectral triple.\\
Finally, one can write for any states $\omega_1,\omega_2\in\mathfrak{S}(\mathbb{A})$
\begin{align}
d_{D_k}(\omega_1,\omega_2)&=\sup\big\{\vert \omega_1(a)-\omega_2(a)\vert;\ a\in\mathbb{A},\ell_{D_k}(a)\le1\big\}\nonumber\\
&=\sup\big\{ {{\vert \omega_1(a)-\omega_2(a)\vert}\over{(1+\Omega^2)^{{{1}\over{2}}}\ell_{D_0}(a)}};\ a\in\mathbb{A}\big\},
\end{align}
where the last equality comes from \eqref{relatlipnorm}, from which follows \eqref{distanceD}. This terminates the proof.
\end{proof}

\begin{remark}
Theorem \ref{th1} leads to immediate consequences or observations.\\
First, from Theorems \ref{th0} and \ref{th1}, one infers
\begin{equation}
d_{D_k}(\omega_m,\omega_n)={\sqrt{{{\theta}\over{2(1+\Omega^2)}} }}\sum_{p=n+1}^{m}{{1}\over{{\sqrt{p}} }},\ n<m,\ \forall k=1,2\label{formula2},
\end{equation}
where $\omega_m$, for any $m\in\mathbb{N}$ are defined by \eqref{diagon-states}.\\
A connection to physics can be made by noticing that the spectral distance formulas obtained above express a distance between the energy eigenstates of the harmonic oscillator. \\
A careful comparison of the spectral distance on the standard Moyal plane $\mathfrak{X}_0$ to the notion of quantum distance introduced in the mathematical physics literature by the authors of \cite{DFR} has been performed in \cite{mart-tom} to which we refer for more details relevant to fundamental physics.\\
A spectral distance formula among coherent states (i.e the ``quantum points'' that we will use below in the study of causal curves) for the standard Moyal plane has been derived in \cite{mart-recent}. This, combined with Theorem \ref{th1} leads to the conclusion that the above spectral distances between two arbitrary coherent states are proportional to the Euclidean distance.
\end{remark}

\subsection{Metric properties of Moyal planes.}\label{subsection24}

$\mathfrak{X}_{D_0}$ and $(\mathfrak{X}_k)_{k=1,2}$ define representative examples of quantum locally compact metric space as defined by Latr\'emoli\`ere in \cite{latrem2}. The existence of pure states at infinite spectral distance to each other that will be considered in this subsection forbids the spectral distance for $\mathfrak{X}_{D_0}$ and $(\mathfrak{X}_k)_{k=1,2}$ to metrize the w*-topology on $\mathfrak{S}(\mathbb{A})$. Hence $\mathfrak{X}_{D_0}$ and $(\mathfrak{X}_k)_{k=1,2}$ cannot belong to the category of compact quantum metric space as defined by Rieffel.\par

We set from now on $d_{D_0}=d_0$, $d_{D_k}=d_k$.

\begin{definition}\label{psiss}
Let $\psi(s)$ be a family of unit vectors of $L^2(\mathbb{R}^2)$ defined by $\psi(s)=\sum_{m}\psi_m(s)f_{m0}={{1}\over{{{\sqrt{2\pi\theta}}}}} \sum_{m\in\mathbb{N}}{\sqrt{{{1}\over{\zeta(s)(m+1)^s}}}}f_{m0}$ for any $s\in\mathbb{R}$, $s>1$, where $\zeta(s)$ is the Riemann zeta function. The related pure states are denoted by $\omega_{\psi(s)}$, for any $s\in\mathbb{R}$, $s>1$, with $\omega_{\psi(s)}(a)=2\pi\theta\sum_{m,n}\psi^*_m(s)\psi_n(s)a_{mn}$.
\end{definition}
The following property holds.
\begin{proposition}\label{state-infinite}(\cite{Wal1},\cite{Wal2})
$d_k(\omega_n,\omega_{\psi(s)})=+\infty$, $\forall k=0,1,2$, $\forall \,s\in\,]1,{{3}\over{2}}]$, $\forall\, n\in\mathbb{N}$.
\end{proposition}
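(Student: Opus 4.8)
The plan is to exhibit an explicit sequence of elements of $\mathbb{A}$ that separate the states $\omega_n$ and $\omega_{\psi(s)}$ by an unbounded amount while staying inside the Lipschitz ball $\mathfrak{B}_{\ell_{D_0}}$. The natural candidates are the truncations of the ``radial'' positive elements already constructed in the proof of Theorem \ref{th0}, namely $\hat{a}(m_0)=\sum_{p}a_{pp}(m_0)f_{pp}$ with $a_{pp}(m_0)=\sqrt{\tfrac{\theta}{2}}\sum_{k=p}^{m_0}\tfrac{1}{\sqrt{k+1}}$, which are known to lie in $\mathfrak{B}_{\ell_{D_0}}$ for every $m_0$ (their only nonvanishing derivative coefficients are $\alpha_{p+1,p}=-\tfrac{1}{\sqrt 2}$). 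First I would compute the value these elements take on the two states. By \eqref{diagon-states} one has $\omega_n(\hat a(m_0))=a_{nn}(m_0)$, while $\omega_{\psi(s)}(\hat a(m_0))=2\pi\theta\sum_m|\psi_m(s)|^2 a_{mm}(m_0)=\sum_{m}\tfrac{a_{mm}(m_0)}{\zeta(s)(m+1)^s}$, since $\psi(s)$ has support only on the $f_{m0}$ and $\hat a(m_0)$ is diagonal.

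The key step is then to estimate the difference $|\omega_{\psi(s)}(\hat a(m_0))-\omega_n(\hat a(m_0))|$ and show it diverges as $m_0\to\infty$ when $s\in]1,\tfrac{3}{2}]$. I would use the lower bound $a_{mm}(m_0)=\sqrt{\tfrac{\theta}{2}}\sum_{k=m}^{m_0}\tfrac{1}{\sqrt{k+1}}\geq \sqrt{\tfrac{\theta}{2}}\,c\,(\sqrt{m_0+1}-\sqrt{m+1})$ for large $m_0$, coming from comparison of the partial sum with the integral $\int\tfrac{dk}{\sqrt k}$. Keeping only the dominant term, $\omega_{\psi(s)}(\hat a(m_0))$ behaves like $\sqrt{\tfrac{\theta}{2}}\,\tfrac{c}{\zeta(s)}\sqrt{m_0+1}\sum_{m}\tfrac{1}{(m+1)^s}\bigl(1-\sqrt{\tfrac{m+1}{m_0+1}}\bigr)$, and the crucial point is that the subtracted correction involves $\sum_m (m+1)^{1/2-s}$, which converges precisely when $s>\tfrac32$ but diverges when $s\le\tfrac32$. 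For $1<s\le\tfrac32$ the convergent factor $\sum_m (m+1)^{-s}=\zeta(s)$ multiplies $\sqrt{m_0+1}$, so $\omega_{\psi(s)}(\hat a(m_0))\to+\infty$, whereas $\omega_n(\hat a(m_0))=a_{nn}(m_0)$ also grows like $\sqrt{m_0+1}$ but with a coefficient that I must check is strictly smaller, so that the difference still diverges. Concretely the leading coefficient of $\omega_{\psi(s)}$ is $\sqrt{\tfrac{\theta}{2}}\cdot c$ (the full weight, since $\sum_m|\psi_m|^2=1$) while $a_{nn}$ for fixed $n$ contributes the same leading $\sqrt{\tfrac{\theta}{2}}\,c\sqrt{m_0+1}$; hence I would need the genuine divergence to come from the mismatch in the subleading $\sqrt{m+1}$ weighting, i.e.\ from the divergent sum $\sum_m(m+1)^{1/2-s}$ appearing with a definite sign in $\omega_{\psi(s)}$ after subtracting $\omega_n$.

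The main obstacle will be making this subtraction rigorous: both $\omega_n(\hat a(m_0))$ and $\omega_{\psi(s)}(\hat a(m_0))$ diverge individually as $m_0\to\infty$, so the divergence of their difference must be extracted from a cancellation of leading terms, and this requires controlling the sum $\sum_m \tfrac{1}{\zeta(s)(m+1)^s}\bigl(a_{nn}(m_0)-a_{mm}(m_0)\bigr)$ uniformly. I would rewrite this difference as $\sum_m\tfrac{1}{\zeta(s)(m+1)^s}\sqrt{\tfrac{\theta}{2}}\sum_{k=n}^{m-1}\tfrac{1}{\sqrt{k+1}}$ for $m>n$ (with the analogous sign for $m<n$), so that the $m_0$-dependence drops out of the inner sum once $m\le m_0$, and the limit $m_0\to\infty$ gives $\sqrt{\tfrac{\theta}{2}}\sum_{m>n}\tfrac{1}{\zeta(s)(m+1)^s}\sum_{k=n}^{m-1}\tfrac{1}{\sqrt{k+1}}$. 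Since the inner sum grows like $\sqrt m$, the general term behaves like $(m+1)^{1/2-s}$, and the series diverges exactly for $s\le\tfrac32$. Taking the supremum over all admissible $\hat a(m_0)\in\mathfrak{B}_{\ell_{D_0}}$ then forces $d_0(\omega_n,\omega_{\psi(s)})=+\infty$, and the result for $k=1,2$ follows immediately from the homothety \eqref{distanceD} of Theorem \ref{th1}, which preserves infinite distances.
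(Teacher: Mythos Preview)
Your approach is correct and uses the same test elements $\hat a(m_0)$ as the paper; the divergent series you isolate, $\sum_{m>n}\zeta(s)^{-1}(m+1)^{-s}\sum_{k=n}^{m-1}(k+1)^{-1/2}$, is exactly the paper's term $A_2(m_0)$ (shifted by $n$). The only organisational difference is that the paper first treats the case $n=0$, proves $d_0(\omega_0,\omega_{\psi(s)})=+\infty$, and then gets the general case from the triangle inequality combined with $d_0(\omega_0,\omega_n)<\infty$ from Theorem~\ref{th0}. That shortcut spares you the bookkeeping of the finitely many $m<n$ terms and of the tail contribution $a_{nn}(m_0)\sum_{m>m_0}\zeta(s)^{-1}(m+1)^{-s}$, which you implicitly (and legitimately) drop since it is nonnegative. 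Your middle paragraph about matching leading coefficients of order $\sqrt{m_0}$ is an unnecessary detour: as you yourself note in the final paragraph, forming the differences $a_{nn}(m_0)-a_{mm}(m_0)$ directly makes the $m_0$-dependence cancel and leads straight to the divergent series.
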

\begin{proof}
Define
\begin{equation}
B(m_0,\psi(s)):=|\omega_{0}(\hat{a}(m_0))-\omega_{\psi(s)}(\hat{a}(m_0))|
\end{equation}
Compute
\begin{eqnarray}
B(m_0,\psi(s))&=&{\sqrt{{{\theta}\over{2}}}}\left|\sum_{m=0}^{m_0}\sum_{k=m}^{m_0}{{1}\over{{\sqrt{k+1}} }}{{1}\over{\zeta(s)(m+1)^s }}-\sum_{k=0}^{m_0}  {{1}\over{{\sqrt{k+1}} }} \right|\nonumber\\
&=&{\sqrt{{{\theta}\over{2}}}}\,\Bigg|\left(1-{{1}\over{\zeta(s)}}
\sum_{m=0}^{m_0}{{1}\over{(m+1)^s}}\right)\left(\sum_{k=0}^{m_0}{{1}\over{{\sqrt{k+1}} }}\right)\nonumber\\
&+&{{1}\over{\zeta(s)}}\sum_{m=0}^{m_0}\sum_{k=0}^{m-1}{{1}\over{(m+1)^s{\sqrt{k+1}} }} \Bigg|.
\label{finalbound}
\end{eqnarray}
$B(m_0,\psi(s))$ is thus the sum of 2 positive terms: $B(m_0,\psi(s))={\sqrt{{{\theta}\over{2}}}}|A_1(m_0)+A_2(m_0) |$. Now, observe that
\begin{equation}
A_2(m_0)={{1}\over{\zeta(s)}}\sum_{m=0}^{m_0}\sum_{k=0}^{m-1}{{1}\over{(m+1)^s{\sqrt{k+1}} }}\ge
{{2}\over{\zeta(s)}}\sum_{m=0}^{m_0}({{{\sqrt{m+1}}-1) }\over{(m+1)^s}},
\end{equation}
where we used the estimate 
\begin{equation}
\sum_{k=0}^{m}{{1}\over{{\sqrt{ k+1}} }}\ge 2({\sqrt{m+2}}-1). \nonumber
\end{equation}
But when $s\le{{3}\over{2}}$, one has $\lim\limits_{m_0\to+\infty}A_2(m_0)=+\infty$. Thus, $\lim\limits_{m_0\to+\infty}B(m_0,\psi(s))=+\infty$. One concludes $d_k(\omega_0,\omega_{\psi(s)})=+\infty$, $\forall\, s\in\,]1,{{3}\over{2}}]$. By the triangular inequality $d_k(\omega_0,\omega_{\psi(s)})\le d_k(\omega_0,\omega_n)+d_k(\omega_n,\omega_{\psi(s)})$, for any $n\in\mathbb{N}$. But $d_k(\omega_0,\omega_n)$ is finite. Hence, $d_k(\omega_n,\omega_{\psi(s)})=+\infty$, $\forall\, s\in\,]1,{{3}\over{2}}]$, for any $n\in\mathbb{N}$.
\end{proof}
The distance between any pair of states $\omega_{\psi(s)}$'s is infinite. Namely:
\begin{proposition}\label{infinite-points}(\cite{Wal1})
$d_k(\omega_{\psi(s_1)},\omega_{\psi(s_2)})=+\infty$, for any  $k=0,1,2$, $s_1,s_2\in \,]1,{{5}\over{4}}[   \cup  ]{{5}\over{4}},{{3}\over{2}}]$, $s_1\ne s_2$.
\end{proposition}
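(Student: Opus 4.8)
The plan is to reuse the diagonal test elements $\hat{a}(m_0)$ of \eqref{positif-rad}, exactly as in the proofs of Theorem \ref{th0} and Proposition \ref{state-infinite}. Since these satisfy $\hat{a}(m_0)\in\mathfrak{B}_{\ell_{D_0}}$ for every $m_0$, they furnish the lower bound $d_0(\omega_{\psi(s_1)},\omega_{\psi(s_2)})\ge|\omega_{\psi(s_1)}(\hat{a}(m_0))-\omega_{\psi(s_2)}(\hat{a}(m_0))|$ for all $m_0\in\mathbb{N}$, and by the homothety \eqref{distanceD} of Theorem \ref{th1} it is enough to establish $d_0=+\infty$, the cases $k=1,2$ following at once. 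Thus the whole problem reduces to showing that this pairing is unbounded in $m_0$.

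First I would compute the pairing explicitly. Because $\hat{a}(m_0)$ is diagonal and $\psi(s)$ is supported on the $f_{m0}$, only the diagonal coefficients enter, and after exchanging the order of the two finite summations one gets
\begin{equation}
\omega_{\psi(s)}(\hat{a}(m_0))=\sqrt{\tfrac{\theta}{2}}\,\Bigg[\sum_{k=0}^{m_0}\frac{1}{\sqrt{k+1}}-\frac{1}{\zeta(s)}\sum_{k=0}^{m_0}\frac{\tau_s(k)}{\sqrt{k+1}}\Bigg],\quad \tau_s(k):=\sum_{m=k+1}^{\infty}\frac{1}{(m+1)^{s}}.\nonumber
\end{equation}
The first bracketed sum is independent of $s$, hence cancels in the difference. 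Taking without loss of generality $s_1<s_2$, this gives the clean expression
\begin{equation}
\big|\omega_{\psi(s_1)}(\hat{a}(m_0))-\omega_{\psi(s_2)}(\hat{a}(m_0))\big|=\sqrt{\tfrac{\theta}{2}}\,\sum_{k=0}^{m_0}\frac{1}{\sqrt{k+1}}\bigg(\frac{\tau_{s_1}(k)}{\zeta(s_1)}-\frac{\tau_{s_2}(k)}{\zeta(s_2)}\bigg).\nonumber
\end{equation}
Here $\tau_s(k)/\zeta(s)$ is the tail probability of the zeta distribution of parameter $s$; since the likelihood ratio $j^{\,s_2-s_1}$ is increasing, these tails are stochastically ordered and each summand is in fact non-negative, so the absolute value requires no case distinction. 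The argument below, however, needs only the divergence of this series.

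The crux is the asymptotic size of the series as $m_0\to\infty$. Using the elementary integral bounds $(k+2)^{1-s}/(s-1)\le\tau_s(k)\le(k+1)^{1-s}/(s-1)$, one has for large $k$ that $\tau_{s_1}(k)/\zeta(s_1)\gtrsim(k+1)^{1-s_1}$ while $\tau_{s_2}(k)/\zeta(s_2)\lesssim(k+1)^{1-s_2}$ with $1-s_1>1-s_2$, so there is a $k_0$ beyond which $\tfrac{\tau_{s_1}(k)}{\zeta(s_1)}-\tfrac{\tau_{s_2}(k)}{\zeta(s_2)}\ge c\,(k+1)^{1-s_1}$ for some $c>0$. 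As the finitely many terms $k<k_0$ are bounded, the series behaves like $\sum_{k\ge k_0}(k+1)^{1/2-s_1}$, which diverges precisely because $s_1<s_2\le\tfrac32$ forces $s_1<\tfrac32$, i.e.\ $1/2-s_1>-1$. Hence the lower bound tends to $+\infty$, giving $d_0(\omega_{\psi(s_1)},\omega_{\psi(s_2)})=+\infty$ and, through Theorem \ref{th1}, the same for $k=1,2$.

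The main obstacle is exactly this last step: one must ensure that subtracting the $s_2$-tail does not cancel the divergence produced by the $s_1$-tail, i.e.\ that the two contributions genuinely grow at the distinct polynomial rates $(m_0)^{3/2-s_1}$ and $(m_0)^{3/2-s_2}$. This demands two-sided control of the tails $\tau_s(k)$ sharp enough to separate the exponents, and it is presumably at this point that the cruder estimates of \cite{Wal1} force the restriction $s_i\ne\tfrac54$ and the splitting of the admissible interval; the integral bounds used above in fact cover the entire range $1<s_1,s_2\le\tfrac32$ with $s_1\ne s_2$.
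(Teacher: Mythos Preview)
Your argument is correct. The computation of $\omega_{\psi(s)}(\hat{a}(m_0))$ is accurate, the stochastic-ordering observation guarantees non-negativity of each summand so that the tail from $k\ge k_0$ is a genuine lower bound, and the integral estimates $\frac{(k+2)^{1-s}}{s-1}\le\tau_s(k)\le\frac{(k+1)^{1-s}}{s-1}$ give exactly the separation of exponents you need. The reduction to $k=0$ via Theorem \ref{th1} is of course fine.

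The paper itself does not reproduce a proof here; it merely refers to \cite{Wal1}, whose argument proceeds by applying the mean value theorem to the function $s\mapsto\zeta(s)^{-1}\sum_{m\le k}(m+1)^{-s}$ and then controlling the resulting derivative term by term, which is where the somewhat artificial exclusion of the point $s=\tfrac{5}{4}$ and the split interval $]1,\tfrac{5}{4}[\,\cup\,]\tfrac{5}{4},\tfrac{3}{2}]$ arise. Your approach is both shorter and sharper: by comparing the two zeta tails directly rather than differentiating in $s$, you avoid that technical loss and in fact establish the result on the full range $1<s_1,s_2\le\tfrac{3}{2}$, $s_1\ne s_2$. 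So you recover the paper's statement and a little more.
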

\begin{proof}
The lengthy proof is given in \cite{Wal1}. It uses the mean value theorem to obtain suitable successive estimates.
\end{proof}

\begin{remark}\label{finitestate}
let $\mathcal{I}$ be a finite subset of $\mathbb{N}$ and let $\Lambda=\sum_{m\in \mathcal{I}}\lambda_me_{m}$ denotes a unit vector of $\ell^2(\mathbb{N})$ with corresponding state $\omega_\Lambda$. The spectral distances between any state $\omega_n$ and $\omega_\Lambda$ is finite,
$d_k(\omega_n,\omega_\Lambda)<\infty$, for any $k=0,1,2$ and any $n\in\mathbb{N}$. Indeed, compute
\begin{eqnarray}
|\omega_\Lambda(a)-\omega_n(a)|
&=&\Bigg|2\pi\theta\Bigg(\sum_{p,q\in\mathcal{I}}a_{pq}\lambda_p^*\lambda_q\Bigg)-a_{nn}\Bigg|\nonumber\\
&\le& 2\pi\theta\Bigg(\sum_{p,q\in\mathcal{I}}|a_{pq}||\lambda_p^*\lambda_q|\Bigg)+|a_{nn}|\le
\sum_{p,q\in\mathcal{I}}|a_{pq}|+2\pi\theta|a_{nn}| \label{distancefinite-quant}.
\end{eqnarray}
Assume now $k=0$. By using eqn.(8) of Proposition 5 in \cite{Wal1}, one infers that the $a_{mn}$ are finite for any element in the Lipschitz ball $\mathfrak{B}_{\ell_{D_0}}$. Hence $d_0(\omega_n,\omega_\Lambda)<\infty$ which extends to $k=1,2$ by Theorem \ref{th1}.
\end{remark}

\begin{proposition}\label{connected1}
For any $k=0,1,2$, the spectral triple $\mathfrak{X}(k)$ defines a quantum space with an infinite number of distinct connected components, each component being pathwise connected for the $d_k$-topology.
\end{proposition}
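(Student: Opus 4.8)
The plan is to organise $\mathfrak{S}(\mathbb{A})$ by the finite-distance relation and show that its classes are simultaneously the connected components and pathwise connected. First I would introduce on $\mathfrak{S}(\mathbb{A})$ the relation $\omega\sim\omega'$ whenever $d_k(\omega,\omega')<\infty$. Since $d_k$ is a genuine pseudo-distance, symmetric, vanishing on the diagonal and satisfying the triangle inequality (as noted right after Definition \ref{spectraldistance}), the triangle inequality makes $\sim$ an equivalence relation, so it partitions $\mathfrak{S}(\mathbb{A})$ into disjoint classes. I would then check that each class $C$ is clopen for the $d_k$-topology: if $\omega\in C$, every $\omega'$ in the open ball $\{\omega':d_k(\omega,\omega')<1\}$ has $d_k(\omega,\omega')<\infty$, hence lies in $C$, so $C$ is open; as the complement of $C$ is a union of the remaining (open) classes, $C$ is also closed.

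Next I would establish pathwise connectedness of each class from the convexity of the state space. Given $\omega_1,\omega_2$ in a common class, consider the segment $\gamma(t)=(1-t)\omega_1+t\omega_2$, $t\in[0,1]$, which stays in $\mathfrak{S}(\mathbb{A})$ by convexity. Since $\gamma(t)(a)-\gamma(t')(a)=(t-t')\big(\omega_2(a)-\omega_1(a)\big)$, a direct computation from \eqref{specdist} gives, for all $t,t'\in[0,1]$,
\[
d_k(\gamma(t),\gamma(t'))=|t-t'|\,\sup_{\ell_{D_k}(a)\le1}|\omega_2(a)-\omega_1(a)|=|t-t'|\,d_k(\omega_1,\omega_2).
\]
As $d_k(\omega_1,\omega_2)<\infty$, the map $t\mapsto\gamma(t)$ is Lipschitz, hence continuous, for $d_k$; moreover $d_k(\omega_1,\gamma(t))=t\,d_k(\omega_1,\omega_2)<\infty$ keeps the whole segment inside the class. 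Thus each class is pathwise connected, in particular connected; being also clopen, each class coincides with a connected component, since any connected set meeting a clopen connected set is contained in it. Consequently the connected components of $\mathfrak{S}(\mathbb{A})$ for $d_k$ are exactly the finite-distance classes, each pathwise connected.

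It remains to exhibit infinitely many classes. Here I would invoke Proposition \ref{infinite-points}: the states $\omega_{\psi(s)}$ with $s\in\,]1,\tfrac54[\cup]\tfrac54,\tfrac32]$ satisfy $d_k(\omega_{\psi(s_1)},\omega_{\psi(s_2)})=+\infty$ for all distinct $s_1,s_2$. Hence no two of them lie in the same class, and since this parameter range is infinite, it produces infinitely many pairwise distinct finite-distance classes, i.e.\ infinitely many connected components. Combining the three steps yields the statement for every $k=0,1,2$.

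I expect the only genuinely delicate point to be the clean identification of the clopen finite-distance classes with the connected components; the pathwise connectedness itself is immediate once one observes that the straight segment between two states is $d_k$-Lipschitz with constant $d_k(\omega_1,\omega_2)$, while the infinitude of components is handed to us directly by Proposition \ref{infinite-points}.
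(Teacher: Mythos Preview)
Your proof is correct and follows essentially the same route as the paper: both use the triangle inequality to show that finite-distance neighbourhoods are clopen, the convex segment $t\mapsto(1-t)\omega_1+t\omega_2$ together with the identity $d_k(\gamma(t),\gamma(t'))=|t-t'|\,d_k(\omega_1,\omega_2)$ for pathwise connectedness, and Proposition~\ref{infinite-points} to produce infinitely many components. Your framing via the equivalence relation $\omega\sim\omega'\Leftrightarrow d_k(\omega,\omega')<\infty$ is slightly more systematic than the paper's choice of the specific clopen sets $\mathfrak{S}^k_{\psi_{s_1}}$, but the content is the same.
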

\begin{proof}
Assume $s_1\in ]1,{{5}\over{4}}[   \cup  ]{{5}\over{4}},{{3}\over{2}}]$. Define 
\begin{equation}
\mathfrak{S}^k_{\psi_{s_1}}:=\{\omega\in\mathfrak{S}(\mathbb{A}),\ d_k(\omega,\ \omega_{\psi_{s_1}})<+\infty\}. 
\end{equation}
Let $B(\omega,\rho)\subset\mathfrak{S}(\mathbb{A})$, the open ball with center $\omega$ and radius $\rho>0$. For any element $\eta$ in $B(\omega,\rho)\subset\mathfrak{S}(\mathbb{A})$ and for any $\omega\in\mathfrak{S}^k_{\psi_{s_1}}$, one has 
\begin{equation}
d_k(\eta,\omega_{\psi_{s_1}})\le d_k(\eta,\omega)+d_k(\omega,\omega_{\psi_{s_1}})<+\infty, 
\end{equation}
which holds true for $k=0,1,2$. Thus, $\mathfrak{S}^k_{\psi_{s_1}}$ is open for any $k=0,1,2$.\\
For any $\omega$ in the complement of $\mathfrak{S}^k_{\psi_{s_1}}$ and any $\eta\in\mathfrak{B}_\rho(\omega)$, one has 
\begin{equation}
d_k(\omega,\omega_{\psi_{s_1}})\le d_k(\eta,\omega_{\psi_{s_1}})+d_k(\eta,\omega).
\end{equation}
Thus, $d_k(\eta,\omega_{\psi_{s_1}})=+\infty$ and therefore $\eta\in\mathfrak{S}_{\psi_{s_1}}$. Hence $\mathfrak{S}^k_{\psi_{s_1}}$ is also closed. Hence, $\mathfrak{S}^k_{\psi_{s_1}}$, $k=0,1,2$ is a closed-open subset of $\mathfrak{S}(\mathbb{A})$ which is therefore an union of connected parts while $\omega_{\psi_s}\notin\mathfrak{S}^k_{\psi_{s_1}}$, $\forall s\ne s_1$ and cannot belong to the same connected component.\\
Pathwise connectedness follows from the fact that the map $\omega_t:\ t\in[0,1]\to\mathfrak{S}(\mathbb{A} )$ defined by $\omega_t:=(1-t)\omega_1+t\omega_2$, for any $\omega_1,\omega_2\in\mathfrak{S}_{\psi_{s_1}}$ is $d_k$-continuous, since one has $d_k(\omega_{t_1},\omega_{t_2})=|t_1-t_2|d_k(\omega_1,\omega_2)$ which is readily obtained from the very definition of $d_k$ and $d_k(\omega_1,\omega_2)<+\infty$.
\end{proof}
\begin{remark}
From the above discussion, it follows that any of the $\mathfrak{X}_k$, $k=0,1,2$ defines a quantum space with infinite diameter. 
\end{remark}
The quantum spaces $(\mathfrak{X}_k)_{k=0,1,2}$ define quantum locally compact metric spaces as introduced by Latr\'emoli\`ere in \cite{latrem2}. In order to make contact with \cite{latrem2}, we identify the relevant structures needed in the general construction. First, $(\bar{\mathbb{A}}, l_k)$ defines obviously a {\it{Lipschitz pair}} as stated in Definition 2.3 \cite{latrem2} where the seminorm $l_k$ has domain $\text{Dom}(l_k)=\mathbb{A}$ dense in $\mathbb{A}$ and $l_k(a)=0$ when $a=\lambda\bbone$, $\lambda\in\mathbb{C}$. Let $\mathfrak{D}\subset\bar{\mathbb{A}}$ be the C*-subalgebra generated by the diagonal vectors of the matrix basis, i.e $(f_{mm})_{m\in\mathbb{N}}$. By Proposition \ref{approx-unit}, $\mathfrak{D}$ involves the approximate unit. Hence $(\bar{\mathbb{A}},\mathfrak{D})$ is a {\it{topographic quantum space}} as in Definition 2.15 \cite{latrem2} so that the data $(\bar{\mathbb{A}},l_k,\mathfrak{D})$ define a Lipschitz triple, Definition 2.27 \cite{latrem2}. In the terminology of \cite{latrem2}, the spectral distance defined e.g in \eqref{absdistance} is called the {\it{extended Monge-Kantorovitch metric}} for the Lipschitz pair $(\bar{\mathbb{A}},l_k)$ denoted by $mk_{l_k}(\omega,\eta)$, Definition 2.4 in \cite{latrem2} for any $\omega,\eta\in\mathfrak{S}(\bar{\mathbb{A}})$.\par
There are two additional notions to be used \cite{latrem2}. First, the notion of tame sets for a Lipschitz triple, subsets of $\mathfrak{S}(\bar{\mathbb{A}})$, can be viewed as a topological condition providing a convenient way to always obtain a natural noncommutative analog of the notion of tight set in probability theory. The second one is the notion of local state space. For technical reasons, it is convenient to require further regularity condition for the Lipschitz triple, namely that any subset $\mathfrak{S}(\bar{\mathbb{A}}|K)$ involving the restricted states to the compact set $K\in\mathcal{K}(\sigma(\mathfrak{D}))$ 
($\sigma(\mathfrak{D})$ is the spectrum of $\mathfrak{D}$) has finite diameter for the extended Monge-Kantorovitch metric of the Lipschitz pair. This additional condition ensures that the definition of tame set does not depend on any choice of a local state and that tame sets are involved in closed balls of finite radius around any local state. Then, one defines
\begin{definition}(Definition 3.1 \cite{latrem2})
A quantum locally compact metric space is a regular Lipschitz triple for which the topology of the metric space $(\mathfrak{K}, mk_L)$ for any tame set $\mathfrak{K}$ is the relative topology induced by the w*-topology restricted on $\mathfrak{K}$, where $mk_L$ is the corresponding Monge-Kantorovitch metric.
\end{definition}
We quote the result of \cite{latrem2}:
\begin{theorem}(Theorem 4.9 \cite{latrem2})
$\mathfrak{X}_0$ defines a quantum locally compact (separable) metric space.
\end{theorem}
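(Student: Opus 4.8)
The plan is to unwind Definition 3.1 of \cite{latrem2} and to check its hypotheses for the Lipschitz triple $(\bar{\mathbb{A}},l_0,\mathfrak{D})$ already assembled above, using the identification $\bar{\mathbb{A}}\simeq\mathcal{K}(\ell^2(\mathbb{N}))$ together with the explicit control on the matrix coefficients of elements of the Lipschitz ball. First I would pin down the topography. Since $\mathfrak{D}$ is the C*-subalgebra of $\mathcal{K}(\ell^2(\mathbb{N}))$ generated by the mutually orthogonal rank-one projections $f_{mm}$, it is isomorphic to $c_0(\mathbb{N})$, so its Gelfand spectrum $\sigma(\mathfrak{D})$ is $\mathbb{N}$ with the discrete topology and the compact sets $K\in\mathcal{K}(\sigma(\mathfrak{D}))$ are precisely the finite subsets of $\mathbb{N}$. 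The associated cut-off $\chi_K=\sum_{m\in K}f_{mm}$ is then a finite-rank projection, and the restricted state space $\mathfrak{S}(\bar{\mathbb{A}}|K)$ consists of the states carried by the finite-dimensional corner $\chi_K\bar{\mathbb{A}}\chi_K\simeq\mathbb{M}_{|K|}(\mathbb{C})$, i.e. the states $\omega_\Lambda$ of Remark \ref{finitestate} with $\Lambda$ supported on $K$.

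Next I would verify that the triple is \emph{regular}, i.e. that each $\mathfrak{S}(\bar{\mathbb{A}}|K)$ has finite $mk_{l_0}$-diameter. This is exactly the content of Remark \ref{finitestate}: the bound $|\omega_\Lambda(a)-\omega_n(a)|\le\sum_{p,q\in K}|a_{pq}|+2\pi\theta|a_{nn}|$, combined with the fact recorded in the remark following Theorem \ref{th0} that every $a\in\mathfrak{B}_{\ell_{D_0}}$ has uniformly bounded matrix coefficients, shows that $\sup\{|\omega_\Lambda(a)-\omega_{\Lambda'}(a)|;\ a\in\mathfrak{B}_{\ell_{D_0}}\}<\infty$ whenever $\Lambda,\Lambda'$ are supported in the finite set $K$. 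This yields regularity and, with it, the independence of the notion of tame set from the chosen local state.

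The heart of the statement is the metrization claim: on every tame set $\mathfrak{K}$ the topology of $(\mathfrak{K},mk_{l_0})$ agrees with the relative w*-topology. One inclusion is soft: $mk_{l_0}$ is w*-lower semicontinuous and, since all states have norm one and $\text{Dom}(l_0)=\mathbb{A}$ separates states, $mk_{l_0}$-convergence forces w*-convergence, so the w*-topology is coarser than the metric topology on $\mathfrak{K}$. The reverse inclusion is the substantial one and is precisely the input required by Latr\'emoli\`ere's general criterion. I would produce a strictly positive $h\in\bar{\mathbb{A}}$ for which $h\,\mathfrak{B}_{\ell_{D_0}}\,h$ is norm-totally-bounded. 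Taking $h=\sum_m\lambda_mf_{mm}$ with a strictly positive, rapidly decaying weight $\lambda_m$, conjugation multiplies the entry $a_{pq}$ by $\lambda_p\lambda_q$; the remark after Theorem \ref{th0} bounds the coefficients of $\del a$ and $\delbar a$ by $1/\sqrt2$, and the recursion $\alpha_{pq}=\sqrt{(q+1)/\theta}\,a_{p,q+1}-\sqrt{p/\theta}\,a_{p-1,q}$ (together with its conjugate) then forces the $a_{pq}$ to grow at most slowly with the indices, so a sufficiently fast decay of $\lambda_m$ makes the conjugated entries uniformly summable with uniformly small tails, giving a finite-rank-approximable, hence totally bounded, family. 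This is condition (i) of Theorem \ref{bounded-noncompact}; its local deployment, together with the tightness built into the definition of a tame set (uniform concentration of $\mathfrak{K}$ on the finite sets $K$), lets one split any $a\in\mathfrak{B}_{\ell_{D_0}}$ as $\chi_Ka\chi_K$ plus a tail that is uniformly $mk$-negligible over $\mathfrak{K}$, the head living in a finite-dimensional corner where w*-convergence and norm-convergence coincide. Matching the two inclusions gives equality of the two topologies on every tame set, establishing that $(\bar{\mathbb{A}},l_0,\mathfrak{D})$ is a quantum locally compact metric space.

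The main obstacle is the construction of $h$ and the proof that $h\,\mathfrak{B}_{\ell_{D_0}}\,h$ is totally bounded: quantitatively, one must convert the derivative-coefficient bounds $|\alpha_{pq}|,|\beta_{pq}|\le1/\sqrt2$ into decay estimates on the $a_{pq}$ strong enough that a single summable weight renders the tails uniformly small across the \emph{unbounded} Lipschitz ball, and then interface this total boundedness with the tameness (tightness) condition so that the w*-topology is genuinely recovered on each tame set.
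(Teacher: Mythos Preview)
The paper does not prove this statement at all: it is quoted verbatim as Theorem 4.9 of \cite{latrem2} and no argument is given. So there is nothing in the paper to compare your sketch against. What the paper \emph{does} prove is the immediately following Proposition (the case $k=1,2$), and its method is instructive: it invokes Theorem 3.9 of \cite{latrem2}, whose operative hypothesis is the precompactness of the \emph{finite-rank corners} $\chi_K\,\mathfrak{L}_1(\bar{\mathbb{A}},\ell_0,\mathfrak{D})\,\chi_K$ for every compact (i.e.\ finite) $K\subset\sigma(\mathfrak{D})$, where $\mathfrak{L}_1=\{a:\ell_0(a)\le1,\ \mu(a)=0\}$ for a fixed local state $\mu$. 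That precompactness for $\ell_0$ is itself taken from Theorem 4.6 of \cite{latrem2}.

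Your proposal instead routes the argument through Theorem \ref{bounded-noncompact}, i.e.\ the single-global-$h$ criterion of \cite{latrem1}. That theorem belongs to the earlier bounded-Lipschitz framework and characterizes when $d_{\mathfrak{B}}$ (with $\mathfrak{B}$ norm-bounded) metrizes the w*-topology on \emph{all} of $\mathfrak{S}(\bar{\mathbb{A}})$; it is not the criterion underlying Definition 3.1 of \cite{latrem2}. In particular, since the Lipschitz ball $\mathfrak{B}_{\ell_{D_0}}$ is unbounded and $\mathfrak{S}(\bar{\mathbb{A}})$ has infinite $d_0$-diameter (Propositions \ref{state-infinite}--\ref{connected1}), no choice of $h$ can make $h\,\mathfrak{B}_{\ell_{D_0}}\,h$ totally bounded in the sense required by Theorem \ref{bounded-noncompact}, and you correctly flag this as the ``main obstacle''. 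The resolution is not to patch that step but to replace it: the right object to show precompact is $\chi_K\,\mathfrak{L}_1\,\chi_K$ for each finite $K$, after factoring out a local state. Your regularity paragraph and the identification $\mathfrak{D}\simeq c_0(\mathbb{N})$ are fine and match the paper's setup; the gap is the choice of criterion, not the surrounding scaffolding.
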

This result can be immediately completed by using Theorem \ref{th1}.
\begin{proposition}
$\mathfrak{X}_k$ for any ${k=1,2}$ defines a quantum locally compact (separable) metric space.
\end{proposition}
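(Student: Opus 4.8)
The plan is to observe that passing from $\mathfrak{X}_0$ to $\mathfrak{X}_k$ leaves every ingredient of the Latr\'emoli\`ere construction untouched except the Lipschitz seminorm, and that the latter changes only by an overall positive factor. Concretely, the three pieces of data forming the Lipschitz triple --- the C*-algebra $\bar{\mathbb{A}}\simeq\mathcal{K}(\ell^2(\mathbb{N}))$, the topographic subalgebra $\mathfrak{D}$ generated by the diagonal elements $(f_{mm})_{m\in\mathbb{N}}$, and its approximate unit --- are literally identical for all $k=0,1,2$, since none of them depends on the choice of Dirac operator. The only modification is the seminorm, and here Theorem \ref{th1}, in the form of \eqref{relatlipnorm}, gives
\[
l_k(a)=(1+\Omega^2)^{1/2}\,l_0(a),\qquad\forall a\in\mathbb{A},\ k=1,2,
\]
so that $l_k$ is a strictly positive multiple of $l_0$ on the common dense domain $\text{Dom}(l_k)=\text{Dom}(l_0)=\mathbb{A}$, with the same kernel $l_k^{-1}(0)=\mathbb{C}\bbone=l_0^{-1}(0)$. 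Hence $(\bar{\mathbb{A}},l_k)$ is again a Lipschitz pair and $(\bar{\mathbb{A}},l_k,\mathfrak{D})$ again a Lipschitz triple.

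The second step is to transfer this proportionality to the metric level. From the definition \eqref{absdistance} of the extended Monge--Kantorovitch metric one reads off immediately that rescaling the seminorm by $(1+\Omega^2)^{1/2}$ rescales the metric by the reciprocal factor, namely
\[
mk_{l_k}(\omega,\eta)=(1+\Omega^2)^{-1/2}\,mk_{l_0}(\omega,\eta),\qquad\forall\,\omega,\eta\in\mathfrak{S}(\bar{\mathbb{A}}),
\]
consistently with \eqref{distanceD}. Two metrics differing by a fixed positive constant induce the same topology and, more importantly for \cite{latrem2}, the same uniform structure and the same collection of finite-diameter subsets.

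The decisive step is then to verify that each defining clause of a quantum locally compact metric space is invariant under such a rescaling. I would run through Latr\'emoli\`ere's definitions checking that they involve $mk_{l_0}$ only through features preserved by positive scaling: the regularity requirement that every $\mathfrak{S}(\bar{\mathbb{A}}|K)$ with $K\in\mathcal{K}(\sigma(\mathfrak{D}))$ have finite diameter holds for $mk_{l_k}$ iff it holds for $mk_{l_0}$, since diameters are merely multiplied by $(1+\Omega^2)^{-1/2}$; the class of tame sets is unchanged, tameness being a topological/uniform condition insensitive to an overall constant; and on any tame set $\mathfrak{K}$ the metric topology of $(\mathfrak{K},mk_{l_k})$ coincides with that of $(\mathfrak{K},mk_{l_0})$, hence with the relative w*-topology, exactly as for $\mathfrak{X}_0$. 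The only point demanding genuine care is precisely this last step: one must confirm that tameness and regularity, as formulated in \cite{latrem2}, really are scale-invariant notions and do not secretly depend on the numerical normalisation of the metric. Granting that, and invoking Theorem 4.9 of \cite{latrem2} for $\mathfrak{X}_0$, the conclusion for $\mathfrak{X}_k$, $k=1,2$, follows at once --- which is the sense in which the result \emph{can be immediately completed by using Theorem \ref{th1}}.
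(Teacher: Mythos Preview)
Your argument is correct and rests on the same key input as the paper's proof, namely the homothety $l_k=(1+\Omega^2)^{1/2}l_0$ from Theorem \ref{th1}. The route, however, is slightly different. You work at the level of the definition: you transfer the scaling to the metric $mk_{l_k}=(1+\Omega^2)^{-1/2}mk_{l_0}$ and then argue that each clause of Latr\'emoli\`ere's definition (Lipschitz pair, regularity, tame sets, coincidence of the $mk$-topology with the w*-topology on tame sets) is invariant under multiplication of the metric by a positive constant. The paper instead invokes the equivalent characterization, Theorem~3.9 of \cite{latrem2}: it fixes the local state $\mu(a)=a_{00}$, notes that $\chi_K\mathfrak{L}_1(\bar{\mathbb{A}},\ell_0,\mathfrak{D})\chi_K$ is precompact (Theorem~4.6 of \cite{latrem2}), and then observes that $\mathfrak{L}_1(\bar{\mathbb{A}},\ell_k,\mathfrak{D})=(1+\Omega^2)^{-1/2}\mathfrak{L}_1(\bar{\mathbb{A}},\ell_0,\mathfrak{D})$, so precompactness of the truncated set is immediate. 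The paper's route is shorter because it reduces everything to a single norm-precompactness condition; your route is more self-contained but, as you yourself flag, requires unpacking the definitions in \cite{latrem2} to confirm their scale-invariance.
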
 
\begin{proof}
Pick the local state $\mu(a)=\langle e_0,ae_0\rangle=a_{00}$ for any $a\in\bar{\mathbb{A}}$. The indicator function on any compact set $K\subset\sigma(\mathfrak{D})$ is $\chi_K=\sum_{m\in K}f_{mm}$, $K$ a finite subset of $\mathbb{Z}$. Define $\mathfrak{L}_1(\bar{\mathbb{A}},\ell_k,\mathfrak{D})=\{a\in\bar{\mathbb{A}}\ ;\ l_k(a)\le1,\mu(a)=0 \}$,  for any $k=0,1,2$. By Theorem 4.6 \cite{latrem2},  $\chi_K\mathfrak{L}_1(\bar{\mathbb{A}},\ell_0,\mathfrak{D})\chi_K$ is precompact. But by Theorem \ref{th1}, $l_k=(1+\Omega^2)^{\frac{1}{2}}l_0$, $k=1,2$. Thus $\chi_K\mathfrak{L}_1(\bar{\mathbb{A}},\ell_k,\mathfrak{D})\chi_K$ is precompact. The proposition then follows from Theorem 3.9 \cite{latrem2}.
\end{proof}
Other interesting examples of quantum locally compact quantum metric spaces are provided by a family of noncommutative spaces related to the space $\mathbb{R}^3_\lambda$ pertaining to the mathematical physics litterature \cite{vit-wal}. This will be presented in a future publication.

\section{Presence of causality on Moyal plane with Minkowski metric}\label{sec3}

\subsection{Lorentzian spectral triples and causality in noncommutative geometry}\label{subsection31}

The notion of spectral triple as presented in the previous section is mainly used in an Euclidean context, i.e.~on manifolds with Riemannian signature, and most of the applications of Connes' noncommutative geometry to Moyal planes have been done using this signature exclusively. As an emerging branch of the theory, Lorentzian noncommutative geometry is an attempt to adapt the main components of noncommutative geometry to manifolds with Lorentzian signature. While its development is far for being complete, there is enough material to make applications in the domain of mathematical physics. In particular, Lorentzian noncommutative geometry allows us to define a notion of causality in noncommutative geometry  \cite{CQG2013}. This notion has already been applied to some specific models of almost-commutative manifolds \cite{SIGMA2014,Franco2015}. In this section, we will make the first exploration of this notion on the 2-dimensional Moyal plane by switching from the Euclidean metric to the Minkowski metric. Such a study is quite important since the presence of causality on Moyal plane is controversial in the domain of quantum field theory \cite{Balachandran, Bahns}. The main problem comes from the fact that noncommutative spaces are non-local and the usual notion of point cannot be used. We will show that causal relations are possible on Moyal plane if we consider specific pure states on the algebra, which are in fact Gaussian functions. The causal structure within those states is completely similar to the causal structure on the usual Minkowski space, except that the notion of locality is lost with the noncommutative algebra.

A spectral triple with Lorentzian signature is not so much different from the Riemannian notion, except that the Dirac operator is naturally self-adjoint in a Krein space (a space with indefinite inner product and some specific conditions \cite{Bog,Stro}) instead of a Hilbert space. A Hilbert space can still be used by considering a specific operator $\mathcal{J}$ called fundamental symmetry, which turns the Krein space into a Hilbert space and vice versa. In the specific case of Moyal plane, this operator is just the first gamma matrix $\gamma^0$. There exist different but compatible definitions of Lorentzian spectral triples (see e.g.~\cite{Rennie12, Rennie15}) and we will use a specific one which is a particular case of the others. The advantage of this definition is that no signature other than the Lorentzian one is allowed \cite{Franco2014} so a notion of causality is always well defined.

\begin{definition}
\label{deflost}
A Lorentzian spectral triple is given by $(\mathbb{A},\widetilde{\mathbb{A}},\pi,\mathcal{H},D,\mathcal{J})$
with:
\begin{itemize}
\item A~Hilbert space $\mathcal{H}$.
\item A~non unital pre-C*-algebra $\mathbb{A}$ with a~faithful *-representation $\pi$ on
$\mathcal{B}(\mathcal{H})$.
\item A~preferred unitization $\widetilde{\mathbb{A}}$ of $\mathbb{A}$, which is also a pre-C*-algebra,
with a compatible faithful *-representation $\pi$  on
$\mathcal{B}(\mathcal{H})$ and such that $\mathbb{A}$ is an ideal of
$\widetilde{\mathbb{A}}$.
\item An unbounded operator $D$, densely defined on $\mathcal{H}$, such that:
\begin{itemize}
\item $\forall a\in\widetilde{\mathbb{A}}$,   $[D,\pi(a)]$ extends to a~bounded operator on $\mathcal{H}$, \item $\forall
a\in\mathbb{A}$,   $\pi(a)(1 + \left< D\right >^2)^{-\frac 12}$ is compact, with $\left< D \right >^2 := \frac 12 (D D^* + D^*
D)$.
\end{itemize} 
\item A~bounded operator $\mathcal{J}$ on $\mathcal{H}$ with $\mathcal{J}^2=1$, $\mathcal{J}^*=\mathcal{J}$, $[\mathcal{J},\pi(a)]=0$,
$\forall a\in\widetilde{\mathbb{A}}$ and such that:
\begin{itemize}
\item $D^*=-\mathcal{J} D \mathcal{J}$ on $\text{Dom}(D) = \text{Dom}(D^*) \subset \mathcal{H}$;
\item there exists a densely defined self-adjoint operator $\mathcal{T}$ with $\text{Dom}(\mathcal{T})$ $\cap$ $\text{Dom}(D)$ dense in $\mathcal{H}$ and with $\left(1+ \mathcal{T}^2 \right)^{-\frac{1}{2}}\in
\widetilde{\mathbb{A}}$, and a positive element $N\in\widetilde{\mathbb{A}}$ such that $\mathcal{J}  = -N [D,\mathcal{T}]$.
\end{itemize}
\end{itemize}
\end{definition}

\begin{definition}
We say that a Lorentzian spectral triple is even if there exists a~$\mathbb Z_2$-grading $\gamma$ of $\mathcal{H}$ such that
$\gamma^*=\gamma$, $\gamma^2=1$, $[\gamma,\pi(a)] = 0$ $\forall a\in\widetilde{\mathbb{A}}$, $\gamma
\mathcal{J} =- \mathcal{J} \gamma$ and $\gamma D =- D \gamma $.
\end{definition}

We must notice that we will always work with an algebra $\mathbb{A}$ which is non unital, since unital algebras correspond to compact manifolds, and the notion of causality cannot be well defined on compact Lorentzian manifolds (without boundaries). The role of the unitization $\widetilde{\mathbb{A}}$ is in fact purely technical, but is a need for the following definition:

\begin{definition}[\cite{CQG2013}]\label{defncauscone}
Let $\mathcal{C}$ be the convex cone of all Hermitian elements $a \in \widetilde{\mathbb{A}}$ respecting
\begin{equation}\label{constcaus}
\forall \; \phi\in\mathcal{H},\qquad\left<\phi,\mathcal{J}[D,\pi(a)]\phi\right>\leq0,
\end{equation}
where $\left<\cdot,\cdot\right>$ is the inner product on $\mathcal{H}$.
If the following condition is fulfilled:
\begin{equation}\label{condcausal}
\overline{span_{{\mathbb C}}(\mathcal{C})}=\overline{\widetilde{\mathbb{A}}},
\end{equation}
then $\mathcal{C}$ is called a causal cone. It induces a partial order relation on
$\mathfrak{S}(\mathbb{A})$, which we call causal relation, by:
\begin{equation}
\forall \omega,\eta\in \mathfrak{S}(\mathbb{A}),\qquad\omega\preceq\eta\qquad\text{iff}\qquad\forall a\in\mathcal{C},\quad\omega(a)\leq\eta(a).
\end{equation}
\end{definition}

This definition brings a notion of causality valid for every Lorentzian spectral triple, even when the algebra is noncommutative. The causality must be understood as a partial order relation between the states of the algebra, which can be restricted to pure states only (but this is not mandatory). In the commutative regime, the causal cone $\mathcal{C}$ is exactly the set of smooth causal functions, which are the smooth functions non-decreasing along every future directed causal curve. When the manifold, on which a commutative Lorentzian spectral triple is based, is globally hyperbolic, the complete set $\mathcal{C}$ is sufficient to characterize the causal structure \cite{CQG2013,Bes09} (the condition of global hyperbolicity is sufficient but not necessary). In the noncommutative regime, the sufficient condition to have a well defined partial order among all states of the algebra is the condition \eqref{condcausal}. Here the technical role of the unitization $\widetilde{\mathbb{A}}$ is clear, since there is no non-trivial monotonic function in $\mathbb{A}=C^\infty_0(M)$ in the commutative case. However one must be careful that the states in $\mathfrak{S}(\mathbb{A})$ should extend in a unique way in $\mathfrak{S}(\widetilde{\mathbb{A}})$. This is the case for every commutative manifold, almost-commutative manifold and Moyal plane.

The name of causal relation in Definition \ref{defncauscone} is completely justified by the fact that this relation corresponds to the usual one when the algebra $\mathbb{A}$ is commutative:

\begin{theorem}[\cite{CQG2013}]
Let $(\mathbb{A},\widetilde{\mathbb{A}},\pi,\mathcal H,D,\mathcal J)$ be a Lorentzian spectral triple with a commutative algebra $\mathbb{A}=C^\infty_0( M)$ constructed from a complete globally hyperbolic manifold $ M$, then the causal structure defined by Definition \ref{defncauscone}, if restricted to the pure states on $\mathbb{A}$, corresponds exactly to the usual causal structure on $ M$, using the Gelfand correspondence between the pure states and the points of $ M$.
\end{theorem}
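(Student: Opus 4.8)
The plan is to make the abstract cone $\mathcal{C}$ of Definition~\ref{defncauscone} explicit in the commutative setting, and then to reduce the induced partial order on pure states to the classical causal order on $M$, isolating the one implication that carries all the geometric content.

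First I would identify $\mathcal{C}$ concretely. For the canonical commutative Lorentzian spectral triple over $M$ the Dirac operator satisfies $[D,\pi(f)]=-i\,c(df)$, Clifford multiplication by the one-form $df$, and the fundamental symmetry is $\mathcal{J}=\gamma^0$. Hence $\mathcal{J}[D,\pi(f)]$ acts at each point $x\in M$ as a Hermitian endomorphism $-i\,\gamma^0 c(df)_x$ of the spinor space, and condition \eqref{constcaus} becomes the pointwise requirement that this endomorphism be negative semi-definite for every $x$. A pointwise computation in the Clifford algebra, using $\{\gamma^\mu,\gamma^\nu\}=2g^{\mu\nu}$, $(\gamma^0)^2=1$ and the reality of $df$, shows that $-i\,\gamma^0 c(df)_x\leq0$ exactly when $df_x$ is a future-directed causal covector (up to the orientation convention). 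Thus $\mathcal{C}$ is precisely the set of real-valued $f\in\widetilde{\mathbb{A}}$ whose gradient is future-directed causal everywhere, equivalently the smooth functions non-decreasing along every future-directed causal curve. This also clarifies the need for $\widetilde{\mathbb{A}}$ rather than $\mathbb{A}$: such causal (time) functions are typically unbounded and do not vanish at infinity.

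Next I would check that $\mathcal{C}$ is genuinely a causal cone and set up the Gelfand dictionary. Global hyperbolicity of $M$ guarantees, by Bernal--S\'anchez, a smooth Cauchy temporal function, whose translates together with the constants generate a dense subspace, so that \eqref{condcausal} holds. The pure states of $C_0(M)$ are the evaluations $\omega_p(f)=f(p)$, each extending uniquely to $\widetilde{\mathbb{A}}$, so restricting $\preceq$ to pure states yields a relation on points, namely $\omega_p\preceq\omega_q \iff f(p)\leq f(q)$ for all $f\in\mathcal{C}$. The easy implication of the desired equivalence is then immediate: if $q\in J^+(p)$, any future-directed causal curve from $p$ to $q$ makes every $f\in\mathcal{C}$ non-decreasing, whence $f(p)\leq f(q)$ and $\omega_p\preceq\omega_q$.

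The converse is the crux and the main obstacle. Assuming $q\notin J^+(p)$, I would exhibit a single $f\in\mathcal{C}$ with $f(p)>f(q)$, which breaks $\omega_p\preceq\omega_q$ and completes the equivalence. Here one uses that in a globally hyperbolic spacetime $J^+(p)$ is closed, so $p$ and $q$ lie in disjoint closed sets respecting the causal order, and then invokes the existence of smooth temporal/time functions adapted to the causal structure (Bernal--S\'anchez), combined with a local perturbation that preserves the future-causal character of the gradient, to build a globally defined causal function separating $p$ and $q$ with the correct strict inequality. The delicate point is precisely turning the classical statement $q\notin J^+(p)$ into an explicit smooth element of $\mathcal{C}\subset\widetilde{\mathbb{A}}$ that strictly separates the two points while keeping $df$ causal everywhere after any modification; by contrast the Clifford-algebra identification of $\mathcal{C}$ in the first step is routine.
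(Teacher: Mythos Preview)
The paper does not actually prove this theorem: it is quoted from \cite{CQG2013} and stated without proof, so there is no ``paper's own proof'' to compare against. Your sketch is a faithful outline of the argument in that reference: identify $\mathcal{J}[D,\pi(f)]=-i\gamma^0 c(df)$ so that \eqref{constcaus} becomes the pointwise condition that $df$ be future causal, hence $\mathcal{C}$ is exactly the cone of smooth causal functions; then the easy implication is the monotonicity of causal functions along causal curves, and the converse is the separation result for globally hyperbolic spacetimes (smooth causal functions distinguish causally unrelated points), which in \cite{CQG2013} is obtained via \cite{Bes09} together with the Bernal--S\'anchez temporal function machinery.

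Two small caveats on your write-up. First, the claim that ``translates of a single Cauchy temporal function together with the constants generate a dense subspace'' is not enough to verify \eqref{condcausal}; in \cite{CQG2013} one rather uses that the full family of smooth causal functions (time functions plus constants and suitable bounded perturbations) already spans a dense set in $\widetilde{\mathbb{A}}$. Second, in the hard direction you correctly flag the delicate point, but the actual construction in the cited works does not proceed by locally perturbing a global temporal function; it relies instead on the order-theoretic/topological characterisation of the causal order via causal functions on globally hyperbolic (more generally, causally simple or stably causal) spacetimes. Your description of this step as ``the crux'' is accurate, but the mechanism you propose would need more care to keep $df$ causal globally after modification.
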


A non technical review of the notion of causality in noncommutative geometry and some applications can be found in \cite{CC2014}.

\subsection{The Moyal Lorentzian spectral triple}\label{subsection32}

On the Minkowski space ${\mathbb R}^{1,1}$, the Moyal Lorentzian spectral triple $(\mathbb{A},\widetilde{\mathbb{A}},\pi,\mathcal{H}_0,D,\mathcal{J})$ is constructed in the following way:
\begin{itemize}
\item $\mathcal{H}_0 := L^2({\mathbb R}^{1,1}) \otimes {\mathbb C}^{2}$ is the Hilbert space of square integrable sections of the spinor bundle over the two-dimensional Minkowski space-time with the usual positive definite inner product $\langle \psi,\phi\rangle=\int d^2x\ (\psi_1^*\phi_1+\psi_2^*\phi_2) $ $\forall\, \psi,\phi\in{\mathcal{H}_0}$ with $\psi=(\psi_1,\psi_2)$, $\phi=(\phi_1,\phi_2)$.

\item $\mathbb{A}$ is the space of Schwartz functions $\mathcal{S} = \mathcal{S}({\mathbb R}^{1,1})$ with the Moyal $\star$ product. The representation $\pi:{\mathbb{A}}\to{\mathcal{B}}({\mathcal{H}_0})$ is defined by the left multiplication:
\begin{equation}\pi(a)=L(a)\otimes\mathbb{I}_2,\qquad \pi(a)\psi=(a\star\psi_1,a\star\psi_2).\end{equation}

\item $\widetilde{\mathbb{A}}$ is some preferred unitization of $\mathbb{A}$ which must be a sub-algebra of the multiplier algebra ${\mathcal{M}}(\mathcal A)=\{a\in{\mathcal{S}}^\prime\ /\ a\star b\in{\mathcal{S}},\ b\star a\in{\mathcal{S}},\ \forall b\in{\mathcal{S}} \}$. A typical choice is $\widetilde{\mathbb{A}} = (\mathcal{B},\star) \subset \mathcal{M}(\mathcal A)$ the unital Fr\'echet pre-C*-algebra of smooth functions which are bounded together with all derivatives \cite{GAYRAL2004}. However, we will consider a bigger (unbounded) algebra in the following for a technical reason.
\item $D := -i \partial_\mu \otimes \gamma^\mu$ (with $\mu = 0,1$) is the flat Dirac operator on ${\mathbb R}^{1,1}$ where:
\begin{equation}\gamma^0 = i\sigma^1=\begin{pmatrix} 0&i \\ i& 0 \end{pmatrix},\qquad \gamma^1 = \sigma^2=\begin{pmatrix} 0&i \\ -i& 0 \end{pmatrix}\end{equation}
are the flat Dirac matrices which verify $\gamma^\mu\gamma^\nu+\gamma^\nu\gamma^\mu=2\eta^{\mu\nu}$, $\forall\, \mu,\nu=0,1$ (we use $(-,+)$ as convention for the signature of the metric).
\item $\mathcal{J}:= i\gamma^0$ is the fundamental symmetry which turns the Hilbert space $\mathcal{H}_0$ into a Krein space.
\end{itemize}

As proved in \cite{Franco2014}, this construction respects all the axioms of a Lorentzian spectral triple. Such a kind of construction has already been used in the context of quantum field theory \cite{Verch11}.\\

If we define $\partial_+:=\partial_0+\partial_1$ and  $\partial_-:=\partial_0-\partial_1$, we can write the Dirac operator as:
\begin{equation}D=\begin{pmatrix} 0&\partial_+ \\ \partial_-& 0 \end{pmatrix}\cdot\end{equation}

Then the operator $\mathcal{J}[D,\pi(a)]$ of the causal constraint \eqref{constcaus} is simply:
\begin{eqnarray}
 \mathcal{J}[D,\pi(a)] &=&  \mathcal{J} D \pi(a) - \mathcal{J} \pi(a) D \nonumber\\
 &=& \begin{pmatrix} -\partial_- L(a) + L(a) \partial_- & 0 \\ 0 & -\partial_+ L(a)  +L(a) \partial_+ \end{pmatrix} \nonumber\\
 &=& - \begin{pmatrix} L(\partial_- a) & 0 \\ 0 & L(\partial_+ a) \label{expressD}\end{pmatrix}
 \end{eqnarray}
 
 \begin{proposition} Using the matrix basis, for every $a = \sum_{mn} a_{mn} f_{mn}$ $\in \widetilde{\mathbb{A}}$, if we define $\partial_- a =  \sum_{mn} \alpha_{mn} f_{mn} $ and $\partial_+ a =  \sum_{mn} \beta_{mn} f_{mn} $, then:
 \begin{enumerate}\itemsep=10pt
\item[a)] The following relations hold, with $\lambda := \frac{1+i}{\sqrt2}$:
\begin{equation}\label{expressalpha}
\alpha_{mn} = a_{m+1,n} \lambda \sqrt{\frac{m+1}{\theta}} + a_{m,n+1} \bar\lambda \sqrt{\frac{n+1}{\theta}} - a_{m-1,n} \bar\lambda \sqrt{\frac{m}{\theta}} - a_{m,n-1} \lambda \sqrt{\frac{n}{\theta}},
\end{equation}
\begin{equation}\label{expressbeta}
\beta_{mn} = a_{m+1,n} \bar\lambda \sqrt{\frac{m+1}{\theta}} + a_{m,n+1} \lambda \sqrt{\frac{n+1}{\theta}} - a_{m-1,n} \lambda \sqrt{\frac{m}{\theta}} - a_{m,n-1} \bar\lambda \sqrt{\frac{n}{\theta}}.
\end{equation}
\item[b)] The following conditions are equivalent:
\begin{equation}
\forall \; \phi\in\mathcal{H}_0,\quad\left<\phi,\mathcal{J}[D,\pi(a)]\phi\right>\leq0\qquad(\Leftrightarrow a \in\mathcal{C})
\end{equation}
\begin{equation}\Longleftrightarrow\nonumber\end{equation}
\begin{equation}
(\alpha_{mn})_{mn\in{\mathbb N}} \text{ and } (\beta_{mn})_{mn\in{\mathbb N}} \text{ are  semi-positive definite (infinite) matrices.} \nonumber
\end{equation} 
\end{enumerate}
 
 \end{proposition}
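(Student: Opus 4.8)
The plan is to express the real null derivatives $\partial_\pm$ through complex derivatives adapted to the matrix basis. Introducing $z=\tfrac{1}{\sqrt2}(x^0+ix^1)$ together with $\partial_z=\tfrac{1}{\sqrt2}(\partial_0-i\partial_1)$ and $\partial_{\bar z}=\tfrac{1}{\sqrt2}(\partial_0+i\partial_1)$, a direct inversion gives $\partial_-=\bar\lambda\,\partial_z+\lambda\,\partial_{\bar z}$ and $\partial_+=\lambda\,\partial_z+\bar\lambda\,\partial_{\bar z}$, with $\lambda=\tfrac{1+i}{\sqrt2}$. I would then recall the action of $\partial_z$ on the matrix basis, the same relation already used in the proof of Theorem \ref{th0}, namely $\partial_z f_{mn}=\sqrt{\tfrac{n}{\theta}}f_{m,n-1}-\sqrt{\tfrac{m+1}{\theta}}f_{m+1,n}$, and derive the conjugate relation $\partial_{\bar z}f_{mn}=\sqrt{\tfrac{m}{\theta}}f_{m-1,n}-\sqrt{\tfrac{n+1}{\theta}}f_{m,n+1}$ by applying complex conjugation together with $f_{mn}^\dag=f_{nm}$. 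Substituting these into $a=\sum_{mn}a_{mn}f_{mn}$ and relabelling the summation indices so as to collect the coefficient of each $f_{mn}$ yields exactly \eqref{expressalpha} and \eqref{expressbeta}; this is a purely bookkeeping computation with no analytic difficulty.

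\textbf{Part b).} Using the explicit form \eqref{expressD} of $\mathcal{J}[D,\pi(a)]$, for $\phi=(\phi_1,\phi_2)\in\mathcal{H}_0$ one has
\[
\langle\phi,\mathcal{J}[D,\pi(a)]\phi\rangle=-\langle\phi_1,L(\partial_-a)\phi_1\rangle-\langle\phi_2,L(\partial_+a)\phi_2\rangle.
\]
Since $a$ is Hermitian and the $\partial_\mu$ have real coefficients, $\partial_\pm a$ are Hermitian, so $L(\partial_\pm a)^\dag=L(\partial_\pm a)$ and the two quadratic forms are real. As $\phi_1$ and $\phi_2$ vary independently, the constraint $\langle\phi,\mathcal{J}[D,\pi(a)]\phi\rangle\le0$ for all $\phi$ holds if and only if each form is separately non-negative (set $\phi_2=0$, resp. $\phi_1=0$, for necessity; sufficiency is immediate), i.e. if and only if the self-adjoint operators $L(\partial_-a)$ and $L(\partial_+a)$ are both positive on $L^2(\mathbb{R}^{1,1})$.

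It then remains to translate positivity of $L(b)$, for $b$ equal to either $\partial_-a=\sum\alpha_{mn}f_{mn}$ or $\partial_+a=\sum\beta_{mn}f_{mn}$, into semi-positive definiteness of the associated matrix. Expanding $\psi=\sum_{pq}\psi_{pq}f_{pq}$ and using $f_{mn}\star f_{pq}=\delta_{np}f_{mq}$ shows that $L(b)$ acts on matrix coordinates by left matrix multiplication, $(L(b)\psi)_{mq}=\sum_n b_{mn}\psi_{nq}$, while $\langle f_{mn},f_{kl}\rangle_{L^2}=2\pi\theta\,\delta_{mk}\delta_{nl}$ gives
\[
\langle\psi,L(b)\psi\rangle_{L^2}=2\pi\theta\sum_{q}\ \sum_{m,n}\overline{\psi_{mq}}\,b_{mn}\,\psi_{nq}.
\]
The inner sum is the quadratic form of $(b_{mn})$ on the column vector $(\psi_{nq})_n$, and the outer sum over $q$ decouples into independent columns; restricting to finitely supported $\psi$ and to a single column shows that non-negativity of $\langle\psi,L(b)\psi\rangle$ for all such $\psi$ is equivalent to $\sum_{m,n}\bar v_m b_{mn}v_n\ge0$ for every finitely supported $v$, which is precisely semi-positive definiteness of $(b_{mn})$. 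Combining this with the decoupling of the two spinor components established above gives the stated equivalence.

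The step I expect to be the main obstacle is the domain bookkeeping in this last translation: because $\widetilde{\mathbb{A}}$ is taken to be an \emph{unbounded} algebra, $L(\partial_\pm a)$ need not be bounded, so positivity must be phrased as non-negativity of a quadratic form on the dense subspace of finitely supported elements rather than as naive operator positivity, and one must verify that confining the test vectors to finite support within a single column loses no information. By comparison, the index relabelling in part a) and the splitting of the two diagonal spinor blocks are routine.
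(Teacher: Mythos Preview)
Your proposal is correct and follows essentially the same route as the paper. For part a) the paper recomputes $\partial_\pm f_{mn}$ from scratch via the explicit realization $f_{mn}=(\theta^{m+n}m!n!)^{-1/2}\bar z^{\star m}\star f_{00}\star z^{\star n}$, whereas you factor $\partial_\pm=\lambda\partial_z+\bar\lambda\partial_{\bar z}$ (resp.\ its conjugate) and invoke the formula $\partial_z f_{mn}=\sqrt{n/\theta}\,f_{m,n-1}-\sqrt{(m+1)/\theta}\,f_{m+1,n}$ already used in Theorem~\ref{th0}; this is the same computation packaged more economically. For part b) the two arguments are identical: both use the diagonal form \eqref{expressD} to decouple the spinor components and then expand $\langle\psi,L(\partial_\pm a)\psi\rangle$ in the matrix basis to obtain $2\pi\theta\sum_q\sum_{m,n}\overline{\psi_{mq}}\,b_{mn}\,\psi_{nq}$, so that positivity reduces, column by column, to semi-positive definiteness of $(\alpha_{mn})$ and $(\beta_{mn})$.
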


\begin{proof}
Since we are using the indices $\mu=0,1$ for the coordinates in Lorentzian signature, the matrix basis looks like:
\begin{equation}f_{mn}={{1}\over{(\theta^{m+n}m!n!)^{1/2}}}{\bar{z}}^{\star m}\star f_{00}\star z^{\star n} \end{equation}
with
\begin{equation}{\bar{z}}={{1}\over{{\sqrt{2}}}}(x_0-ix_1), z={{1}\over{{\sqrt{2}}}}(x_0+ix_1) \text{ and } f_{00} = 2e^{-\frac{x_0^2+x_1^2}{\theta}}.\end{equation}
From \cite{Gracia-Bondia:1987kw} we know that $\bar z \star f_{00} = 2 \bar z f_{00}$ and  $ f_{00} \star z = 2 z f_{00} $. Since $x_0 + x_1 = \lambda \bar z + \bar \lambda z$, we have 
\begin{equation}\partial_0 f_{00} + \partial_1 f_{00} = -\frac{2(x_0+x_1) f_{00}}{\theta} = -\frac{\lambda \bar z \star f_{00} + \bar \lambda f_{00} \star z}{\theta}  .\end{equation} Then we can compute the derivatives of $f_{mn}$:
\begin{eqnarray}\label{derivfmn}
\partial_+ f_{mn} &=& \partial_0 f_{mn} + \partial_1 f_{mn} \nonumber\\
&=& \frac{m}{\sqrt2}\;  {{1}\over{(\theta^{m+n}m!n!)^{1/2}}} {\bar{z}}^{\star {m-1}}\star f_{00}\star z^{\star n} \nonumber\\
&-&  i\frac{m}{\sqrt2}\; {{1}\over{(\theta^{m+n}m!n!)^{1/2}}} {\bar{z}}^{\star {m-1}}\star f_{00}\star z^{\star n} \nonumber\\
&+& \frac{n}{\sqrt2}\;  {{1}\over{(\theta^{m+n}m!n!)^{1/2}}} {\bar{z}}^{\star {m}}\star f_{00}\star z^{\star {n-1}} \nonumber\\
&+& i\frac{n}{\sqrt2}\;  {{1}\over{(\theta^{m+n}m!n!)^{1/2}}} {\bar{z}}^{\star {m}}\star f_{00}\star z^{\star {n-1}} \nonumber\\
&-& \frac{\lambda}{\theta}\;  {{1}\over{(\theta^{m+n}m!n!)^{1/2}}} {\bar{z}}^{\star {m+1}}\star f_{00}\star z^{\star n} \nonumber\\
&-& \frac{\bar\lambda}{\theta}\;  {{1}\over{(\theta^{m+n}m!n!)^{1/2}}} {\bar{z}}^{\star {m}}\star f_{00}\star z^{\star {n+1}} \nonumber\\
&=& \bar \lambda \sqrt{\frac{m}{\theta}} f_{m-1,n} +  \lambda \sqrt{\frac{n}{\theta}} f_{m,n-1} -  \lambda \sqrt{\frac{m+1}{\theta}} f_{m+1,n} - \bar \lambda \sqrt{\frac{n+1}{\theta}} f_{m,n+1}; \nonumber\\
&&\\
\partial_- f_{mn} &=&  \lambda \sqrt{\frac{m}{\theta}} f_{m-1,n} + \bar \lambda \sqrt{\frac{n}{\theta}} f_{m,n-1} - \bar \lambda \sqrt{\frac{m+1}{\theta}} f_{m+1,n} -  \lambda \sqrt{\frac{n+1}{\theta}} f_{m,n+1}. \nonumber\\
&&
\end{eqnarray}
The relations (a) follow from the identification of the coefficients of the $f_{mn}$ in the development of $\partial_+ a =  \sum_{mn} \beta_{mn} f_{mn} = \sum_{mn} a_{mn} \partial_+ f_{mn}$ and $\partial_- a =  \sum_{mn} \alpha_{mn} f_{mn} = \sum_{mn} a_{mn} \partial_- f_{mn}$.\\

Using the formulation \eqref{expressD}, the causality condition  $\forall \, \phi=(\phi_1,\phi_2)\in\mathcal{H}_0$,  \mbox{$\left<\phi,\mathcal{J}[D,\pi(a)]\phi\right>\leq0$} is equivalent to:
\begin{equation} \forall \phi_1\in L^2({\mathbb R}^{1,1}) , \int  d^2x\ \phi_1^*( (\partial_- a) \star \phi_1)  = \int d^2x\ \phi_1^* \star (\partial_- a) \star \phi_1  \geq 0 \end{equation}
and
\begin{equation} \forall \phi_2\in L^2({\mathbb R}^{1,1}) , \int  d^2x\  \phi_2^*( (\partial_+ a) \star \phi_2)  =  \int d^2x\  \phi_2^*\star (\partial_+ a) \star \phi_2 \geq 0.\end{equation}

We need to check that the semi-positive definiteness of an operator using the Moyal left multiplication is equivalent to the usual semi-positive definiteness of a matrix. 

Using the matrix basis $\phi_1 = \sum_{mn} \phi_{mn} f_{mn} $  and $ \partial_- a =  \sum_{mn} \alpha_{mn} f_{mn} $, we get:
\begin{eqnarray}
\int\! d^2x\; \phi_1^* \star (\partial_- a) \star \phi_1 &=&  \int\! d^2x \left(\sum_{mn} \bar \phi_{mn} \bar f_{mn} \right) \!\star\! \left(\sum_{kl} \alpha_{kl} f_{kl} \right) \!\star\!  \left(\sum_{qr} \phi_{qr} f_{qr} \right ) \nonumber\\
&=& \sum_{mnklqr}  \bar \phi_{mn} \alpha_{kl}\phi_{qr}  \int d^2x\ \bar f_{mn}  \star  f_{kl} \star f_{qr}  \nonumber\\
&=& \sum_{mnklqr}  \bar \phi_{mn} \alpha_{kl}\phi_{qr} \;(2\pi\theta)\; \delta_{mk}  \delta_{nr}  \delta_{lq} \nonumber\\ 
&=& (2\pi\theta) \sum_{mnl}  \bar \phi_{mn} \alpha_{ml}\phi_{ln} \geq 0. 
\end{eqnarray}

The last term contains a sum over $n$ of inner products $\left<\phi_{\cdot n}, (\alpha_{ml})_{ml\in{\mathbb N}} \,\phi_{\cdot n}\right>$, and since it must be valid for every $\phi_1= \sum_{mn} \phi_{mn} f_{mn} $, it must be valid for every infinite vector $\phi_{\cdot n}\in L^2({\mathbb N})$ with $n$ fixed. The same reasoning can be done for $\partial_+ a$.  Hence the condition is equivalent to $(\alpha_{mn})_{mn\in{\mathbb N}}$ and $(\beta_{mn})_{mn\in{\mathbb N}} $ be  semi-positive definite  matrices.
\end{proof}

\subsection{The causal structure between coherent states}\label{subsection33}

The pure states of ${{ {\mathbb{A}}}}$, as characterized in Proposition \ref{purestatesmoyal}, are the vector states in the matrix basis. Since this space is really huge, in order to find some causal relations within it we will restrict ourselves to a specific kind of pure state:

\begin{definition}
The coherent states of $\mathbb{A}$ are the vector states defined by:
\begin{equation}\varphi_m := \frac{1}{\sqrt{2\pi\theta}}e^{-\frac{{\left\vert \kappa\right\vert }^2}{2\theta}} \frac{\kappa^m}{\sqrt{m!\theta^m}},\end{equation}
 for any $\kappa\in{\mathbb C}$.
\end{definition}
The coherent states correspond to the possible translations under the complex scalar $\sqrt{2}\kappa$ of the ground state of the harmonic oscillator (i.e.~the vector state $\varphi_m = \frac{1}{\sqrt{2\pi\theta}} \delta_{m0}$ $\Leftrightarrow$ $\omega_\varphi = f_{00}$ which is a Gaussian function), using the correspondence $\kappa\in{\mathbb C} \cong {\mathbb R}^{1,1}$ with $a+ib \sim (a,b)$ \cite{mart-recent}. They are the states that minimize the uncertainty equally distributed in position and momentum. The classical limit of the coherent states, when $\theta \rightarrow 0$, corresponds to the usual pure states on ${\mathbb R}^{1,1}$, hence to the points of the usual Minkowski space.

Definition \ref{defncauscone} requires the setting of a specific unitization of the C*-algebra $\mathbb{A}$, and the considered states for the causal relation are those defined on this unitization. Since pure states on $\mathbb{A}$ are vector states, they are still well (and uniquely) defined on any unitization $\widetilde{\mathbb{A}}$ as long as their evaluation on the whole algebra is finite. The usual unitization $\widetilde{\mathbb{A}} = (\mathcal{B},\star)$ is not convenient for our purpose, since we will need the use of some linear functions in order to make the computation easier. However, since we are only interested by coherent states, we are free to chose a larger (even unbounded) unitization $\widetilde{\mathbb{A}}$ as long as all the coherent states are still well defined on it and $\widetilde{\mathbb{A}} \subset \mathcal{M}(\mathcal A)$ is still a *-subalgebra of the multiplier algebra. One must be careful that the representation of some elements of such an algebra is not necessarily a bounded operator, but this will just correspond to an unbounded infinite matrix, so every characterization using the matrix basis will still make sense. Also, the algebra must technically be chosen such that the condition \eqref{condcausal} is respected in order to guarantee that the partial order relation is well defined on the whole space of states (i.e.~the causal cone is sufficient to separate every state of the algebra). Once more, since we are only interested by coherent states we do not need to check this condition as long as the causal structure obtained between the coherent states is a well defined partial order relation.

Hence, in the following, we will define $\widetilde{\mathbb{A}}$ to be the largest *-subalgebra of the multiplier algebra $\mathcal{M}(\mathcal A)$ such that all coherent states are well defined on it. This algebra contains all smooth functions bounded together with all derivatives but also polynomials as shown by the following proposition:

\begin{proposition}\label{coherentstatesarefinite}
Let $\omega_\varphi$ be a coherent state corresponding to the complex scalar $\kappa$. Then for every $q\in\mathbb{N}$, $\omega_\varphi(z^q)$ is finite.
\end{proposition}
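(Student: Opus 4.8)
The plan is to turn the evaluation into a single absolutely convergent numerical series, by combining the vector-state formula \eqref{omegapsi} with the matrix-basis coefficients of $z^q$. The cleanest organisation is to pass to $\ell^2(\mathbb{N})$ through the isomorphism $\overline{\mathbb{A}}\simeq\mathcal{K}(\ell^2(\mathbb{N}))$: I would first observe that left $\star$-multiplication by $z$ is represented by $\sqrt{\theta}$ times the annihilation operator $\hat a$ of the harmonic oscillator, and that the unit vector $\Psi=\sqrt{2\pi\theta}\,(\varphi_m)_m$ defining $\omega_\varphi$ is precisely the normalised Fock coherent state, so that $\hat a\Psi=(\kappa/\sqrt{\theta})\Psi$. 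Granting this, $\omega_\varphi(z^q)=\langle\Psi,(\sqrt{\theta}\,\hat a)^q\Psi\rangle=\kappa^q$, which is finite. I would also note that $z$ is holomorphic, $\delbar z=0$, so that the $\star$-powers and pointwise powers agree, $z^{\star q}=z^q$, and $z^q$ is a genuine (albeit unbounded) element of $\mathcal{M}(\mathcal A)$ admitting a well-defined matrix-basis expansion.

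Concretely, I would derive the ladder relation $z\star f_{mn}=\sqrt{\theta m}\,f_{m-1,n}$ from the defining relation $f_{mn}=(\theta^{m+n}m!n!)^{-1/2}\,\bar z^{\star m}\star f_{00}\star z^{\star n}$ of Proposition \ref{matrixbase-prop}, together with $z\star\bar z-\bar z\star z=\theta$ and $z\star f_{00}=0$. Iterating $q$ times and reading off the coefficients gives
\[
(z^q)_{mn}=\theta^{q/2}\sqrt{\frac{n!}{(n-q)!}}\,\delta_{m,n-q}\qquad(n\geq q),
\]
all others vanishing. Feeding these into $\omega_\varphi(z^q)=2\pi\theta\sum_{m,n}\varphi_m^*\,(z^q)_{mn}\,\varphi_n$ collapses the double sum onto the diagonal $m=n-q$, leaving the single series $2\pi\theta\sum_{n\geq q}\varphi_{n-q}^*\,\theta^{q/2}\sqrt{n!/(n-q)!}\,\varphi_n$.

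Finally I would insert the explicit Gaussian weights $\varphi_m=(2\pi\theta)^{-1/2}e^{-|\kappa|^2/2\theta}\kappa^m(m!\theta^m)^{-1/2}$; after cancelling the factorials and powers of $\theta$, the summand indexed by $j:=n-q$ reduces to $e^{-|\kappa|^2/\theta}\kappa^q\,|\kappa|^{2j}/(j!\,\theta^j)$, so the series is absolutely convergent and sums, via $\sum_{j\geq0}(|\kappa|^2/\theta)^j/j!=e^{|\kappa|^2/\theta}$, to $\omega_\varphi(z^q)=\kappa^q$. The only genuine obstacle is this convergence step: since $z^q$ is unbounded one must check that it pairs finitely with the coherent vector and that the rearrangement of the double sum onto its diagonal is legitimate. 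This is exactly the content of the proposition, and it succeeds because the Gaussian decay of the coefficients $\varphi_m$ dominates the $\sqrt{n!/(n-q)!}\sim n^{q/2}$ polynomial growth — equivalently, the coherent vector is an analytic vector for the ladder operators — so that every polynomial moment is finite.
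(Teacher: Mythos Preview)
Your proof is correct and follows essentially the same route as the paper: determine the matrix-basis coefficients of $z^q$ (you via the ladder relation $z\star f_{mn}=\sqrt{\theta m}\,f_{m-1,n}$, the paper via the pairing $\langle z^q,f_{mn}\rangle$, yielding the same $a_{mn}=\theta^{q/2}\sqrt{(m+q)!/m!}\,\delta_{m+q,n}$), insert them into the vector-state formula \eqref{omegapsi}, and sum the resulting single exponential series. Incidentally, your final value $\omega_\varphi(z^q)=\kappa^q$ is the one consistent with the explicit computations later in the proof of Theorem~\ref{mainthmcausality}; the paper's $\bar\kappa^q$ at this point appears to be a sign slip in the complex conjugate.
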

\begin{proof}
We have $z^q = \sum_{mn} a_{mn} f_{mn} $ with
\begin{eqnarray}
a_{mn} &=& \frac{1}{2\pi\theta} \left<z^q,f_{mn}\right>\nonumber \\
&=& \frac{1}{2\pi\theta}  \sqrt{\theta^q} \sqrt{\frac{(m+q)!}{m!}} \int d^2x\ f_{m+q,n} \nonumber  \\
&=&  \sqrt{\theta^q} \sqrt{\frac{(m+q)!}{m!}} \delta_{m+q,n}.
\end{eqnarray}
Hence the evaluation of $\omega_\varphi$ on $z^q$ is worth:
\begin{eqnarray}
\omega_\varphi(z^q) &=& 2\pi\theta \sum_{mn} \bar \varphi_m \varphi_n  a_{mn} \nonumber\\
&=&  \sum_{mn}  e^{-\frac{{\left\vert  \kappa\right\vert }^2}{\theta}}  \frac{\bar \kappa^m \kappa^n}{\sqrt{m!n!\theta^{m+n}}} \sqrt{\theta^q} \sqrt{\frac{(m+q)!}{m!}} \delta_{m+q,n}.\nonumber \\
&=&  \sum_{m} e^{-\frac{{\left\vert \kappa \right\vert }^2}{\theta}}  \frac{ {\left\vert \kappa \right\vert }^{2m} \bar \kappa^q}{\sqrt{m!(m+q)!\theta^{2m+q}}} \sqrt{\theta^q} \sqrt{\frac{(m+q)!}{m!}} \nonumber\\
&=&  \bar \kappa^q.
\end{eqnarray}
\end{proof}

From this proposition, coherent states can be evaluated on every linear combinaison of $z^q$ and $\bar z^p$, $p,q\in\mathbb{N}$, so polynomials belong to the unitization $\widetilde{\mathbb{A}}$.

We come now to the main theorem of this section, which shows that causal relations are possible between coherent states on Moyal plane:\\

\begin{theorem}\label{mainthmcausality}
Let us suppose that two coherent states $\omega_\xi,\omega_\varphi$ correspond to the complex scalars $\kappa_1,\kappa_2\in{\mathbb C}$. Those coherent states are causally related, with $\omega_\xi \preceq\omega_\varphi$, if and only if $\Delta\kappa:=\kappa_2-\kappa_1$ is inside the convex cone of ${\mathbb C}$ defined by $\lambda=\frac{1+i}{\sqrt2}$ and $\bar\lambda = \frac{1-i}{\sqrt2}$  (i.e.~the argument of $\Delta\kappa$ is within the interval $[-\frac{\pi}{4},\frac{\pi}{4}]$).\\
\end{theorem}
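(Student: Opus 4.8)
The plan is to prove both implications by testing the causal cone against two distinguished ``null'' generators and, conversely, by integrating the coherent-state expectation along a straight segment in $\kappa$-space. Throughout I write $\Delta\kappa = u + iv$ with $u,v\in\mathbb{R}$, so that the cone condition $\arg(\Delta\kappa)\in[-\tfrac{\pi}{4},\tfrac{\pi}{4}]$ reads $u\ge|v|$. I shall use two ingredients repeatedly: first, that by the preceding proposition an element $a\in\widetilde{\mathbb{A}}$ lies in $\mathcal{C}$ precisely when the matrices $(\alpha_{mn})$ and $(\beta_{mn})$ attached to $\partial_-a$ and $\partial_+a$ are semi-positive definite; second, that for any coherent state one has $\omega_\varphi(z) = \bar\kappa$ by Proposition~\ref{coherentstatesarefinite}, hence $\omega_\varphi(\bar z) = \kappa$ via $\omega_\varphi(\bar z) = \overline{\omega_\varphi(z)}$.

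For necessity I would first exhibit explicit elements of the causal cone. Computing directly gives $\partial_-(x_0+x_1) = 0$, $\partial_+(x_0+x_1) = 2$, and symmetrically $\partial_+(x_0-x_1) = 0$, $\partial_-(x_0-x_1) = 2$; in the matrix basis these are the zero matrix and twice the identity, both trivially semi-positive definite, so $x_0+x_1$ and $x_0-x_1$ both belong to $\mathcal{C}$ (they are real, hence Hermitian, and polynomials, hence in the chosen unitization $\widetilde{\mathbb{A}}$). Using $x_0+x_1 = \lambda\bar z + \bar\lambda z$ and $x_0 - x_1 = \lambda z + \bar\lambda\bar z$ together with the evaluations above yields $(\omega_\varphi-\omega_\xi)(x_0+x_1) = \sqrt2\,(u-v)$ and $(\omega_\varphi-\omega_\xi)(x_0-x_1) = \sqrt2\,(u+v)$. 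Since $\omega_\xi\preceq\omega_\varphi$ forces both differences to be non-negative, we obtain $u\ge v$ and $u\ge -v$, that is $u\ge|v|$, which is exactly the cone condition.

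For sufficiency I would connect the states by the segment $\kappa(t) = \kappa_1 + t\,\Delta\kappa$, $t\in[0,1]$, denote by $\omega_{\varphi(t)}$ the coherent state at $\kappa(t)$, and write $(\omega_\varphi-\omega_\xi)(a) = \int_0^1\frac{d}{dt}\,\omega_{\varphi(t)}(a)\,dt$ for $a\in\mathcal{C}$. The heart of the argument is a translation identity: differentiating the coherent-state expectation along the segment reproduces the coherent-state expectation of the corresponding directional derivative. Since the point associated with $\kappa = a'+ib'$ is $(x_0,x_1) = (\sqrt2\,a',-\sqrt2\,b')$, the displacement is $(\Delta x_0,\Delta x_1) = (\sqrt2\,u,-\sqrt2\,v)$, and expressing $\partial_0,\partial_1$ through $\partial_\pm$ gives
\begin{equation}
\frac{d}{dt}\,\omega_{\varphi(t)}(a) = \frac{1}{\sqrt2}\Big[(u-v)\,\omega_{\varphi(t)}(\partial_+ a) + (u+v)\,\omega_{\varphi(t)}(\partial_- a)\Big].
\nonumber
\end{equation}
Each term is controlled: $\omega_{\varphi(t)}(\partial_+ a) = 2\pi\theta\,\langle\varphi(t),(\beta_{mn})\varphi(t)\rangle\ge0$ and $\omega_{\varphi(t)}(\partial_- a) = 2\pi\theta\,\langle\varphi(t),(\alpha_{mn})\varphi(t)\rangle\ge0$ (the coefficient vectors being meant) because $a\in\mathcal{C}$ makes $(\alpha_{mn})$ and $(\beta_{mn})$ semi-positive definite, while the coefficients $u\mp v$ are non-negative under the cone hypothesis $u\ge|v|$. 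Hence the integrand is non-negative for every $t$, so $(\omega_\varphi-\omega_\xi)(a)\ge0$ for all $a\in\mathcal{C}$, which is the relation $\omega_\xi\preceq\omega_\varphi$.

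The main obstacle I expect is the rigorous justification of the translation identity for the possibly unbounded elements of $\widetilde{\mathbb{A}}$. One may either invoke the Gaussian (Husimi) convolution representation $\omega_{\varphi(t)}(a) = (G\ast a)(x_0(t),x_1(t))$, for which $\partial_\mu(G\ast a) = G\ast(\partial_\mu a)$ gives the formula at once, or differentiate the explicit series $\omega_{\varphi(t)}(a) = e^{-|\kappa(t)|^2/\theta}\sum_{mn}\frac{\bar\kappa(t)^m\kappa(t)^n}{\sqrt{m!n!}\,\theta^{(m+n)/2}}a_{mn}$ term by term and interchange differentiation with the summation. The interchange is legitimate because of the super-exponential decay of the coherent-state coefficients against the at most polynomial growth of $a_{mn}$ for $a\in\widetilde{\mathbb{A}}$, but this convergence bookkeeping is the one genuinely technical point; everything else reduces to the short computations indicated above.
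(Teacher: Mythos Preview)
Your proof is correct and follows the same overall architecture as the paper: for necessity, test the causal relation against the two linear ``null'' functions, and for sufficiency, integrate along the straight segment $\kappa(t)=\kappa_1+t\,\Delta\kappa$ and show the integrand is a non-negative combination of the two positive quadratic forms coming from $(\alpha_{mn})$ and $(\beta_{mn})$.

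The packaging differs in two places. For necessity, the paper works in the matrix basis: it expands $a=\lambda z+\bar\lambda\bar z$ and $\tilde a=\bar\lambda z+\lambda\bar z$ in $f_{mn}$, computes the $\alpha$- and $\beta$-matrices from the recursion \eqref{expressalpha}--\eqref{expressbeta}, and evaluates the state differences by summing the resulting series. You short-circuit this by observing that these functions are exactly $x_0\mp x_1$, so $\partial_\pm$ act trivially and membership in $\mathcal{C}$ is immediate; the state values then follow from $\omega_\varphi(z)=\bar\kappa$. This is cleaner and entirely equivalent. For sufficiency, the paper differentiates the coherent-state coefficients $\chi_m(t)$ explicitly, obtaining $\chi_m'(t)=\tfrac{\Delta\kappa}{\sqrt\theta}\sqrt m\,\chi_{m-1}-\tfrac{\overline{\Delta\kappa}}{\sqrt\theta}\sqrt{m+1}\,\chi_{m+1}$, and then matches this against the positive combination $\mu\cdot(\beta\text{-inequality})+\nu\cdot(\alpha\text{-inequality})$ with $\Delta\kappa=\mu\lambda+\nu\bar\lambda$. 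You instead invoke the translation covariance of coherent states to get $\tfrac{d}{dt}\omega_{\varphi(t)}(a)=\omega_{\varphi(t)}(\dot x\!\cdot\!\nabla a)$ and rewrite $\dot x\!\cdot\!\nabla$ in the $\partial_\pm$ basis. The two computations are the same identity viewed from opposite ends; your version is more conceptual, the paper's is a self-contained matrix-basis verification that avoids having to justify the translation identity for unbounded $a$. Your closing remark correctly identifies that justification (interchange of sum and $t$-derivative, or the Husimi-convolution argument) as the only genuine analytic step; the paper sidesteps it by working with the explicit series throughout.
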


\begin{proof}
Let us first prove the sufficient condition, i.e.~we suppose that $\Delta\kappa=\kappa_2-\kappa_1 =  \mu\lambda + \nu\bar \lambda$ for some $\mu,\nu\geq0$. We want to show that $\forall a\in\mathcal{C}$, \mbox{$\omega_\varphi(a)-\omega_\xi(a) \geq 0$}. Using the definition of the coherent states in the matrix basis, this is equivalent to prove that:
\begin{equation}\label{goalcoherentstates}
\sum_{mn} (\bar \varphi_m \varphi_n - \bar \xi_m \xi_n) a_{mn} \geq 0
\end{equation}
for an arbitrary matrix $(a_{mn})_{mn\in{\mathbb N}} $ such that $(\alpha_{mn})_{mn\in{\mathbb N}}$ and $(\beta_{mn})_{mn\in{\mathbb N}}$, as defined by \eqref{expressalpha} and \eqref{expressbeta}, are  semi-positive definite matrices.

Let us define the following curve, with $t\in[0,1]$, within the set of pure states:
\begin{equation}  
\chi_m(t) := \frac{1}{\sqrt{2\pi\theta}}e^{-\frac{{\left\vert \kappa_1 \right\vert }^2}{2\theta}}  e^{-\frac{ \int_0^t ds\ \bar \kappa^\prime (s) \kappa(s)  }{\theta}} \frac{\kappa(t)^m}{\sqrt{m!\theta^m}},
\end{equation}
where $\kappa(t):= \kappa_1 + t(\kappa_2-\kappa_1) = \kappa_1 + t \Delta\kappa$ is the straight line between $\kappa_1$ and $\kappa_2$. This curve is normalized since:
\begin{equation}
 \sum_m {\left\vert  \chi_m(t) \right\vert }^2 =   \frac{1}{{2\pi\theta}} \sum_m e^{-\frac{{\left\vert \kappa_1 \right\vert }^2}{\theta}}  e^{-\frac{ \int_0^t ds\ (\bar \kappa^\prime (s) \kappa(s) +  \kappa^\prime (s) \bar \kappa(s)) }{\theta}} \frac{({\left\vert \kappa(t) \right\vert }^2)^m}{{m!\theta^m}} =  \frac{1}{{2\pi\theta}},
\end{equation}
with $ \int_0^t ds\ (\bar \kappa^\prime (s) \kappa(s) +  \kappa^\prime (s) \bar \kappa(s) ) =  \int_0^t  ds\ (\bar\kappa (s)  \kappa(s) )^\prime = { \left\vert \kappa(t) \right\vert }^2 - { \left\vert \kappa_1 \right\vert }^2$. Moreover:
\begin{equation} \bar \chi_m(0) \chi_n(0) =  \frac{1}{{2\pi\theta}}e^{-\frac{{\left\vert \kappa_1 \right\vert }^2}{\theta}}  \frac{\bar \kappa_1^m \kappa_1^n}{\sqrt{m!n!\theta^{m+n}}} = \bar \xi_m \xi_n,
\end{equation}
\begin{equation} \bar \chi_m(1) \chi_n(1) =  \frac{1}{{2\pi\theta}}e^{-\frac{{\left\vert  \kappa_1 \right\vert }^2}{\theta}}  e^{-\frac{{ \left\vert \kappa_2 \right\vert }^2 - { \left\vert \kappa_1 \right\vert }^2}{\theta}}  \frac{\bar \kappa_2^m \kappa_2^n}{\sqrt{m!n!\theta^{m+n}}} = \bar \varphi_m \varphi_n,
\end{equation}
so using the second fundamental theorem of calculus, our constraint \eqref{goalcoherentstates} becomes:
\begin{equation}\label{goal2coherentstates}
\sum_{mn} (\bar \chi_m(1) \chi_n(1)- \bar \chi_m(0) \chi_n(0)) a_{mn} = \int_0^1 dt\ \sum_{mn}  (\bar \chi_m(t) \chi_n(t))^\prime  a_{mn}  \geq 0.
\end{equation}

Since $(\alpha_{mn})_{mn\in{\mathbb N}}$ is semi-positive definite, by \eqref{expressalpha}  we obtain that, for every vector $\psi_m \in L^2({\mathbb N})$:
\begin{eqnarray}
&&\sum_{mn} \bar \psi_m \alpha_{mn} \psi_n \nonumber\\
 &=&  \frac{1}{\sqrt{\theta}}\sum_{mn} \bar \psi_m (a_{m+1,n} \lambda \sqrt{{m+1}{}} + a_{m,n+1} \bar\lambda \sqrt{{n+1}{}} - a_{m-1,n} \bar\lambda \sqrt{{m}{}} - a_{m,n-1} \lambda \sqrt{{n}{}}) \psi_n\nonumber\\
 &=&  \frac{1}{\sqrt{\theta}}\sum_{mn} \left[   ( \lambda \sqrt{m} \bar \psi_{m-1} - \bar\lambda \sqrt{m+1} \bar \psi_{m+1} ) \psi_n +  (\bar \lambda \sqrt{n}  \psi_{n-1} - \lambda \sqrt{n+1} \psi_{n+1} )  \bar \psi_m \right] a_{mn} \nonumber\\
 &\ \geq& 0.\quad \label{constfromalpha}
\end{eqnarray}

Doing the same with $(\beta_{mn})_{mn\in{\mathbb N}}$ and  \eqref{expressbeta}, we obtain:  

\begin{equation}\label{constfrombeta}
\frac{1}{\sqrt{\theta}}\sum_{mn} \left[   (\bar \lambda \sqrt{m} \bar \psi_{m-1} - \lambda \sqrt{m+1} \bar \psi_{m+1} )\psi_n +  ( \lambda \sqrt{n}  \psi_{n-1} - \bar\lambda \sqrt{n+1} \psi_{n+1} )  \bar \psi_m \right] a_{mn} \geq 0,
\end{equation}

and thus with the positive combination $ \mu \,\eqref{constfrombeta} + \nu \,\eqref{constfromalpha}$ we get:
\begin{eqnarray}
&\sum_{mn} &\left[   \left( \frac{\overline{\Delta\kappa}}{\sqrt{\theta}}   \sqrt{m} \bar \psi_{m-1} -  \frac{{\Delta\kappa}}{\sqrt{\theta}}  \sqrt{m+1} \bar \psi_{m+1} \right) \psi_n \right. \nonumber\\
&&+  \left. \left( \frac{{\Delta\kappa}}{\sqrt{\theta}} \sqrt{n}  \psi_{n-1} - \frac{\overline{\Delta\kappa}}{\sqrt{\theta}}   \sqrt{n+1} \psi_{n+1} \right)  \bar \psi_m \right] a_{mn} \geq 0. \label{constfromalphaandbeta}
\end{eqnarray}

Now for any given $t\in[0,1]$, we can choose $\psi_m = \chi_m(t)$. We have that:
\begin{eqnarray}
 \chi^\prime_m(t) &=& \frac{1}{\sqrt{2\pi\theta}} e^{-\frac{{ \left\vert \kappa_1 \right\vert }^2}{2\theta}}  e^{-\frac{ \int_0^t ds\ \bar \kappa^\prime (s) \kappa(s) }{\theta}} \frac{\kappa(t)^{m-1}}{\sqrt{m!\theta^m}} \;m\; \kappa^\prime(t) \nonumber\\
 &&- \frac{1}{\sqrt{2\pi\theta}} e^{-\frac{{ \left\vert \kappa_1 \right\vert }^2}{2\theta}}  e^{-\frac{ \int_0^t ds\ \bar \kappa^\prime (s) \kappa(s) }{\theta}} \frac{\kappa(t)^{m+1}}{\sqrt{m!\theta^m}} \frac{\bar\kappa^\prime(t)}{\theta}\nonumber\\
 &=& \frac{\Delta\kappa}{\sqrt{\theta}} \sqrt{m} \chi_{m-1}(t) -   \frac{\overline{\Delta\kappa}}{\sqrt{\theta}}  \sqrt{m+1} \chi_{m+1}(t) \nonumber\\
 &=& \frac{\Delta\kappa}{\sqrt{\theta}} \sqrt{m} \psi_{m-1} -   \frac{\overline{\Delta\kappa}}{\sqrt{\theta}}  \sqrt{m+1} \psi_{m+1}.
\end{eqnarray}

So the inequality \eqref{constfromalphaandbeta} gives us, for every $t\in[0,1]$:
\begin{equation}  
\sum_{mn} \left[  \bar \chi^\prime_m(t) \chi_n(t) +   \chi^\prime_n(t) \bar \chi_m(t) \right] a_{mn} = \sum_{mn} (\bar \chi_m(t) \chi_n(t))^\prime  a_{mn} \geq 0,
\end{equation}
which implies the constraint \eqref{goal2coherentstates}. Hence the sufficient condition is proved.\\

Now we turn into the necessary condition. We want to show that, when the argument of $\Delta\kappa$ is not in the interval $[-\frac{\pi}{4},\frac{\pi}{4}]$, then $\omega_\xi \not\preceq\omega_\varphi$, which means that there exists an element in $\mathcal{C}$ such that $\omega_\varphi(a)-\omega_\xi(a) < 0$.

Let us consider the function $a := \lambda z + \bar \lambda \bar z$, with $z=\frac{x_0+ix_1}{\sqrt 2}$. Using the same computation than in Proposition \ref{coherentstatesarefinite} and linearity, we have $a = \sum_{mn} a_{mn} f_{mn} $ with
\begin{eqnarray}
a_{mn}  &=&  \frac{\lambda}{2\pi\theta} \left<z,f_{mn}\right> +  \frac{\bar\lambda}{2\pi\theta} \left<\bar z,f_{mn}\right> \nonumber\\
&=&  \lambda\sqrt{\theta} \sqrt{m+1} \delta_{m+1,n}   +   \bar \lambda\sqrt{\theta} \sqrt{n+1} \delta_{m,n+1},
\end{eqnarray}
and
\begin{eqnarray}
&&\omega_\varphi(a)-\omega_\xi(a)\nonumber\\
&=&2\pi\theta \sum_{mn} (\bar \varphi_m \varphi_n - \bar \xi_m \xi_n) a_{mn}\nonumber\\
&=&  \sum_{mn}  \left(e^{-\frac{{\left\vert  \kappa_2 \right\vert }^2}{\theta}}  \frac{\bar \kappa_2^m \kappa_2^n}{\sqrt{m!n!\theta^{m+n}}} - e^{-\frac{{ \left\vert \kappa_1 \right\vert }^2}{\theta}}  \frac{\bar \kappa_1^m \kappa_1^n}{\sqrt{m!n!\theta^{m+n}}}  \right) a_{mn} \nonumber\\
&=&  \sum_{m}  \left(e^{-\frac{{\left\vert \kappa_2 \right\vert }^2}{\theta}}  \frac{ {\left\vert \kappa_2 \right\vert }^{2m} \kappa_2}{\sqrt{m!(m+1)!\theta^{2m+1}}} - e^{-\frac{{ \left\vert \kappa_1 \right\vert }^2}{\theta}}  \frac{ {\left\vert  \kappa_1 \right\vert }^{2m} \kappa_1}{\sqrt{m!(m+1)!\theta^{2m+1}}}  \right) \lambda\sqrt{\theta} \sqrt{m+1} \nonumber\\
&+&  \sum_{m}  \left(e^{-\frac{{\left\vert \kappa_2 \right\vert }^2}{\theta}}  \frac{ {\left\vert  \kappa_2 \right\vert }^{2m} \bar\kappa_2}{\sqrt{m!(m+1)!\theta^{2m+1}}} - e^{-\frac{{\left\vert \kappa_1 \right\vert }^2}{\theta}}  \frac{ { \left\vert \kappa_1 \right\vert }^{2m} \bar\kappa_1}{\sqrt{m!(m+1)!\theta^{2m+1}}}  \right) \bar\lambda\sqrt{\theta} \sqrt{m+1} \nonumber\\
&=&  \sum_{m} \left(e^{-\frac{{\left\vert  \kappa_2 \right\vert }^2}{\theta}}  \frac{ {\left\vert  \kappa_2 \right\vert }^{2m} }{m!\theta^m} (\lambda \kappa_2 + \bar\lambda\bar\kappa_2) - e^{-\frac{{\left\vert  \kappa_1 \right\vert }^2}{\theta}}  \frac{ { \left\vert \kappa_1 \right\vert }^{2m} }{m!\theta^m} (\lambda \kappa_1 + \bar\lambda\bar\kappa_1) \right)\nonumber\\
&=&  \lambda \Delta\kappa + \bar\lambda\overline{\Delta\kappa} = {\left\vert \Delta\kappa \right\vert } e^{\frac{\pi}{4}+\theta_{\Delta\kappa}} + {\left\vert \Delta\kappa \right\vert }e^{-\frac{\pi}{4}-\theta_{\Delta\kappa}}
= 2 {\left\vert \Delta\kappa \right\vert } \cos\left(\frac{\pi}{4}+\theta_{\Delta\kappa}\right)\nonumber\\
\phantom{X}
\end{eqnarray}
where $\Delta\kappa = {\left\vert \Delta\kappa \right\vert } e^{\theta_{\Delta\kappa}}$ with $\theta_{\Delta\kappa} \in ]-\pi,\pi]$. This is negative when $\theta_{\Delta\kappa} \notin [-\frac{3\pi}{4},\frac{\pi}{4}]$.

If we do the same with $\tilde a := \bar \lambda z +  \lambda \bar z$, we find:
\begin{eqnarray}
\omega_\varphi(\tilde a)-\omega_\xi(\tilde a) &= & \bar\lambda \Delta\kappa + \lambda\overline{\Delta\kappa} \nonumber\\
&=& {\left\vert \Delta\kappa \right\vert }e^{-\frac{\pi}{4}+\theta_{\Delta\kappa}} +{\left\vert \Delta\kappa \right\vert } e^{\frac{\pi}{4}-\theta_{\Delta\kappa}} \nonumber\\
&=& 2 {\left\vert \Delta\kappa \right\vert }\cos\left(-\frac{\pi}{4}+\theta_{\Delta\kappa}\right),
\end{eqnarray} 
which is negative when $\theta_{\Delta\kappa} \notin [-\frac{\pi}{4},\frac{3\pi}{4}]$. Hence, if $\theta_{\Delta\kappa} \notin [-\frac{\pi}{4},\frac{\pi}{4}]$, one of the functions $a$ or $\tilde a$ contradicts the causality condition.

In order to finish the proof, we need to verify that $a$ and $\tilde a$ are in $\mathcal{C}$, i.e.~that $(\alpha_{mn})_{mn\in{\mathbb N}}$, $(\beta_{mn})_{mn\in{\mathbb N}}$, $(\tilde \alpha_{mn})_{mn\in{\mathbb N}}$ and $(\tilde \beta_{mn})_{mn\in{\mathbb N}}$ are  semi-positive definite matrices.

We have:
\begin{eqnarray}
\alpha_{mn} =\ \tilde \beta_{mn} &=& \lambda^2 \sqrt{m+1}\sqrt{m+2} \,\delta_{m+2,n} + {\left\vert \lambda \right\vert }^2 (m+1) \,\delta_{m,n} \nonumber\\
&& +  {\left\vert \lambda \right\vert }^2 (m+1) \,\delta_{m,n} + \bar  \lambda^2 \sqrt{m-1}\sqrt{m}  \,\delta_{m,n+2} \nonumber\\
&& -  {\left\vert \lambda \right\vert }^2 (m) \,\delta_{m,n} - \bar  \lambda^2 \sqrt{m}\sqrt{m-1}  \,\delta_{m,n+2} \nonumber\\
&& -   \lambda^2 \sqrt{m+2}\sqrt{m+1} \,\delta_{m+2,n} - {\left\vert \lambda \right\vert }^2 (m)  \,\delta_{m,n} \nonumber\\
&=& 2 {\left\vert \lambda \right\vert }^2  \,\delta_{m,n} = 2 \,\delta_{m,n} ,
\end{eqnarray}
which is clearly  semi-positive definite, and 
\begin{eqnarray}
\beta_{mn} =\ \tilde \alpha_{mn} &=& {\left\vert \lambda \right\vert }^2 \sqrt{m+1}\sqrt{m+2} \,\delta_{m+2,n} + \bar \lambda^2 (m+1) \,\delta_{m,n} \nonumber\\
&& +  \lambda^2 (m+1) \,\delta_{m,n} + {\left\vert \lambda \right\vert }^2 \sqrt{m-1}\sqrt{m}  \,\delta_{m,n+2} \nonumber\\
&& -  \lambda^2 (m) \,\delta_{m,n} - {\left\vert \lambda \right\vert }^2 \sqrt{m}\sqrt{m-1}  \,\delta_{m,n+2} \nonumber\\
&& -   {\left\vert \lambda \right\vert }^2 \sqrt{m+2}\sqrt{m+1} \,\delta_{m+2,n} - \bar {\lambda}^2 (m)  \,\delta_{m,n} \nonumber\\
&=& ( \lambda^2 +  \bar  \lambda^2)  \,\delta_{m,n} = 0,
\end{eqnarray}
which is the null matrix.
\end{proof}

Theorem \ref{mainthmcausality} explicitly shows us that causal relations between some pure states on Moyal place are possible. Surprisingly, the causal structure between the coherent states mimics the causal structure on the usual Minkowski space, using future (and past) cones of light defined on $\mathbb C \cong {\mathbb R}^{1,1}$ and with the identification of the real line of $\mathbb C$ with the time line of ${\mathbb R}^{1,1}$. The causality between Gaussian functions centered on specific points corresponds exactly to the usual causality between those points, if we interpret the coherent states as translations of the ground state $f_{00}$. Since in quantum mechanics, coherent states are exactly the states that are expected to behave as classically as possible, the obtained result, which is the first explicit causal relation discovered on Moyal plane, clearly reaches this expectation. However, coherent states only represent a small part of the possible states on Moyal plane, so the complete causal structure could be far richer. The possible causal relations between generalized states will be the subject of a future investigation.

\newpage

\end{document}